\documentclass[letterpaper, 12pt]{article}
\usepackage{graphicx}
\usepackage{etoolbox}
\usepackage{amsmath}
\usepackage{amsfonts}
\usepackage{hyperref}
\usepackage{amsthm}
 \usepackage{amscd}
\usepackage{mathrsfs}
\usepackage{caption}
\usepackage{subcaption}
\usepackage{float}
\usepackage[top=1in,left=1in,right=1in,bottom=1in]{geometry}

\newtheorem{theorem}{Theorem}[section]

\newtheorem{example}[theorem]{Example}

\hypersetup{
	colorlinks = true,
	linkcolor  = red,
	citecolor  = green,
	filecolor  = cyan,
	urlcolor   = magenta,}
\newcommand \bR {\mathbb{R}}

\newcommand{\bCp}{\mathbb{C}^+}

\newcommand {\bCpb} {\overline{\mathbb{C}^+}}

\newcommand{\ds}{\displaystyle}

\numberwithin{equation}{section}

\title{Changing the discrete spectrum of the half-line matrix Schr\"odinger operator}
\author{Tuncay Aktosun\thanks{aktosun@uta.edu}\\
Department of Mathematics\\
University of Texas at Arlington\\
Arlington, TX 76019-0408, USA\\
\\
Ricardo Weder\thanks{weder@unam.mx}
\\
Departamento de F\'\i sica Matem\'atica\\
Instituto de Investigaciones en
Matem\'aticas Aplicadas y en Sistemas\\
Universidad Nacional Aut\'onoma de M\'exico\\
Apartado Postal 20-126, IIMAS-UNAM\\
 Ciudad de M\'exico, CP 01000, M\'exico}

\date{}

\begin{document}

\maketitle

\begin{abstract}
We consider the matrix-valued Schr\"odinger 
operator on the half line with the general selfadjoint boundary condition.
When the discrete spectrum is changed without changing the continuous spectrum, we present
a review of the transformations of the relevant quantities including the regular solution, the Jost solution,
the Jost matrix, the scattering matrix, and the boundary matrices used to describe the selfadjoint
boundary condition. The changes in the discrete spectrum are considered when an existing bound state
is removed, a new bound state is added, and the multiplicity of a bound state is decreased or increased 
without removing the bound state. We provide various explicit examples to illustrate the theoretical results
presented.
\end{abstract}

{\bf {AMS Subject Classification (2020):}} 34L15, 34L25, 34L40, 81Q10

{\bf Keywords:} matrix Schr\"odinger operator, selfadjoint boundary condition, bound states, bound-state multiplicities, spectral function, Gel'fand--Levitan method, Darboux transformation, removal of bound states, addition of bound states, dependency matrix

\newpage

\section{Introduction}
\label{section1}

In \cite{AW2025} we have presented the Gel'fand--Levitan method for the matrix-valued Schr\"odinger equation on the
half line with the general selfadjoint boundary condition. By using that method we have then provided
the transformations of the relevant quantities when the discrete spectrum of the corresponding Schr\"odinger operator
is changed without changing the continuous spectrum.
In particular, we have presented the transformation formulas for the relevant quantities when the discrete spectrum is changed in four ways, which are
the complete removal of a bound state, the decrease of the multiplicity
of an existing bound state without removing that bound state, the addition of a new bound state, and the
increase of the multiplicity of an existing bound state, respectively.
These four types of elementary operations can be applied successively in any order to change the discrete spectrum by removing or adding any number of bound states or by decreasing or increasing of the multiplicities of any existing bound states.

The goal of the present paper is to illustrate the primary results of \cite{AW2025} by providing various explicit examples.
Such examples are expected to help the reader understand the theory better by clarifying
the theoretical results presented in \cite{AW2025}. Our paper is organized as follows.
In Section~\ref{section2} we present the preliminaries related
to the matrix-valued Schr\"odinger equation on the
half line with the general selfadjoint boundary condition. 
In Section~\ref{section3} we provide the basic material on the
bound states of the relevant Schr\"odinger operator.
In Section~\ref{section4} we describe the perturbation initiated by a change
in the spectral measure, we introduce the Gel'fand--Levitan system of integral equations, and
we describe how the solution to  the Gel'fand--Levitan system
yields the corresponding perturbation in the potential, in the regular solution, and in the boundary matrices. 
In Section~\ref{section5} we briefly describe the method to remove a bound state from the discrete
spectrum with the help of the Gel'fand--Levitan method.
In Section~\ref{section6} we provide a summary of the method to decrease the multiplicity
of an existing bound states with the help of the Gel'fand--Levitan procedure.
In Section~\ref{section7} we summarize the process to add a new bound state by using
 the Gel'fand--Levitan method.
 In Section~\ref{section8} we briefly describe the method to increase the multiplicity
of an existing bound state by the Gel'fand--Levitan procedure.
 In Section~\ref{section9} we illustrate the evaluation of various relevant quantities described in Sections~\ref{section2}
 and \ref{section3}
 associated with the Schr\"odinger operator related to \eqref{2.1} and \eqref{2.7}.
 Finally, in Section~\ref{section10} we illustrate the mathematical procedures
 described in Sections~\ref{section5}--\ref{section8}. Except for the proofs of Theorems~\ref{theorem8.1},
 \ref{theorem9.9}, and \ref{theorem9.10},
 we do not include any proofs in our paper.
 For the proofs of the mathematical techniques used here we refer the reader to \cite{AW2025}.
The reader can consult \cite{AW2025} also for a brief review of the literature on transformations to remove and
to add bound states and for a discussion of important applications of matrix-valued Schr\"odinger
operators.

\section{The matrix Schr\"odinger operator on the half line}
\label{section2}

In this section we present the preliminaries related to the matrix-valued Schr\"odinger 
equation on the half line with the selfadjoint boundary condition.
We refer the reader to \cite{AW2021} for any  elaborations on the material presented here.

We consider the half-line Schr\"odinger equation
\begin{equation}
\label{2.1}
-\psi''+V(x)\,\psi=k^2\psi,\qquad x\in\bR^+,
\end{equation}
where $x$ is the independent variable, the prime denotes the $x$-derivative, $k^2$ is the spectral parameter,
$\bR^+$ denotes the interval $(0,+\infty),$ and the
potential $V$ is an $n\times n$ selfadjoint matrix-valued function
of $x,$ with $n$ being a fixed positive integer.
The wavefunction
$\psi$ is either an $n\times n$ matrix or
a column vector with
$n$ components.
The selfadjointness is expressed as
\begin{equation}
\label{2.2}
V(x)^\dagger=V(x),\qquad x\in\bR^+,
\end{equation}
with the dagger denoting the matrix adjoint
(matrix transpose and complex conjugation).
 We use the terms selfadjoint and hermitian interchangeably.
 We use both $V$ and $V(x)$ to refer to the potential
with the latter use for emphasis on the $x$-dependence.
We let $\bCp$ denote the upper-half complex plane, use
$\bR$ for the real axis, and let $\bCpb:=\bCp\cup\bR.$

We always assume that the potential $V$ is at least integrable, i.e. we have
\begin{equation}\label{2.3}
\int_0^\infty dx\,|V(x)|<+\infty,
\end{equation}
where $|V(x)|$ denotes the
operator norm of the $n\times n$ matrix $V(x).$
We mention explicitly when we need a stronger condition on the potential.
For example, at times we require that the potential $V$ also has the first moment, i.e. we require that
we have
\begin{equation}\label{2.4}
\int_0^\infty dx\,(1+x)\,|V(x)|<+\infty.
\end{equation}
When \eqref{2.4} is satisfied, we say that the potential $V$ belongs
to $L_1^1(\mathbb R^+).$
In general, we say that the potential $V$ 
belongs to $L^1_\epsilon(\bR^+)$ for some nonnegative constant
$\epsilon$ if we have
\begin{equation}\label{2.5}
\int_0^\infty dx\, (1+x)^\epsilon   \,|V(x)|<+\infty.
\end{equation}
Thus, the potential $V$ is integrable when $\epsilon=0$ in \eqref{2.5}, in which case
we also say that $V$ belongs to $L^1(\mathbb R^+).$
Because all matrix norms are equivalent in a
finite-dimensional vector space, any matrix norm can be used in \eqref{2.3}--\eqref{2.5} instead of 
the operator norm.

We obtain our selfadjoint Schr\"odinger operator on the half line
by supplementing \eqref{2.1} with a
selfadjoint boundary condition at $x=0.$
We find it the most convenient to describe that selfadjoint boundary condition 
in terms of the two constant $n\times n$ matrices $A$ and $B$
used in the definition of the regular solution $\varphi(k,x)$ to \eqref{2.1}
satisfying the initial conditions
\begin{equation}
\label{2.6}
\varphi(k,0)=A,\quad \varphi'(k,0)=B.
\end{equation}
We then state the selfadjoint boundary condition as
\begin{equation}
\label{2.7}
-B^\dagger \psi(0)+A^\dagger \psi'(0)=0,
\end{equation}
with the understanding that the matrices $A$ and $B$ satisfy
\begin{equation}
\label{2.8}
-B^\dagger A+A^\dagger B=0,\end{equation}
\begin{equation}
\label{2.9}
A^\dagger A+B^\dagger B>0.\end{equation}
We refer to $A$ and $B$ as the boundary matrices.
We use the positivity of a matrix to mean the positive definiteness, and we use the nonnegativity of
a matrix to mean the positive semidefiniteness.
We recall that an $n\times n$
matrix is positive if and only if
all its eigenvalues are positive and that it is 
nonnegative if and only if all its eigenvalues
are real and nonnegative.
The boundary condition \eqref{2.7} is uniquely determined by
the boundary matrices $A$ and $B.$ On the other hand, the 
boundary matrices $A$ and $B$ can be simultaneously
multiplied on the right by an $n\times n$ invertible matrix $T$ without affecting \eqref{2.7}--\eqref{2.9}.
In fact, this is the only freedom we have in choosing the boundary matrices $A$ and $B.$

By using $A=0$ and $B=I$ in the initial conditions \eqref{2.6}, we obtain the Dirichlet boundary condition
$\psi(0)=0.$ We remark that we use $0$ for the scalar zero, the zero vector with $n$ components, and
the $n\times n$ zero matrix, depending on the context, and we use $I$ to denote the $n\times n$ identity matrix.
We use the terms Dirichlet and purely Dirichlet interchangeably, where the latter
is used to make a contrast
with a boundary condition consisting of a mixture of Dirichlet and non-Dirichlet components.

We recall that we always assume that the potential $V$ is integrable and satisfies
\eqref{2.2}.
In that case, for each $k \in \overline{\mathbb C^+}\setminus\{0\}$ the Schr\"odinger equation \eqref{2.1} has 
a unique $n\times n$ matrix-valued particular solution, known as the Jost solution $f(k,x),$
satisfying the spacial
asymptotics
\begin{equation}
\label{2.10}
f(k,x)=e^{ikx}\left[ I+o(1)\right], \quad f'(k,x)= e^{ikx} \left[ik I+o(1)\right],
\qquad x\to +\infty.
\end{equation}
The regular solution $\varphi(k,x)$ to \eqref{2.1} satisfying the
initial conditions \eqref{2.6} is entire in 
$k\in\mathbb C$ for each fixed $x\in[0,+\infty).$
In general, the Jost solution $f(k,x)$ does not satisfy
the boundary condition \eqref{2.7}, but the regular solution
$\varphi(k,x)$ satisfies \eqref{2.7} at any $k\in\mathbb C.$
The Jost matrix $J(k)$ associated with
\eqref{2.1} and \eqref{2.7} is defined as
\begin{equation}
\label{2.11}
J(k):=f(-k^*,0)^\dagger B-f'(-k^*,0)^\dagger A,\qquad k\in\mathbb R\setminus\{0\},
\end{equation}
where the asterisk denotes complex conjugation. The $n\times n$ matrix-valued
 $J(k)$ has an extension from $ k\in\mathbb R\setminus\{0\}$
 to $k\in\bCp\setminus\{0\},$ and in fact we use the asterisk in
\eqref{2.11} to indicate how that
extension occurs.
The Jost matrix $J(k)$ is analytic in $k\in \bCp$ and continuous in $\bCpb\setminus\{0\}.$
The scattering matrix $S(k)$ associated with \eqref{2.1} and \eqref{2.7}
is defined as
\begin{equation}
\label{2.12}
S(k):=-J(-k)\,J(k)^{-1},\qquad k\in\bR\setminus\{0\},
\end{equation}
and it is continuous in $k\in\mathbb R\setminus\{0\}.$
The physical solution $\Psi(k,x)$ associated with
\eqref{2.1} and \eqref{2.7} is defined as
\begin{equation}
\label{2.13}
\Psi(k,x):=f(-k,x)+f(k,x)\,S(k),\qquad k\in\mathbb R\setminus\{0\}.
\end{equation}
For each fixed $x\in\mathbf R^+,$ the physical
solution $\Psi(k,x)$ has an  extension from $k\in\mathbb R\setminus\{0\}$ to
$k\in\mathbb C^+,$ and that extension is meromorphic in $k\in\bCp.$ 

If the potential $V$ is further assumed to belong to
$L^1_1(\mathbb R^+),$ then the 
Jost solution
$f(k,x),$ the Jost matrix $J(k),$ the scattering matrix $S(k),$ and the physical solution
$\Psi(k,x)$ are also defined at $k=0,$ and those four quantities are continuous
at $k=0.$
On the other hand, if the potential $V$ satisfies \eqref{2.2} and \eqref{2.3} but not \eqref{2.4}, as illustrated
in Example~\ref{example9.11} those four quantities may not have continuous extensions
to $k=0.$

The regular solution $\varphi(k,x)$
can be expressed in terms of the physical solution $\Psi(k,x)$ and the Jost matrix $J(k)$ as
\begin{equation}
\label{2.14}
\varphi(k,x)=-\ds\frac{1}{2ik}\,\Psi(k,x)\,J(k),\qquad k\in\bCpb\setminus\{0\},
\end{equation}
or in terms of the Jost solution $f(k,x)$ and the Jost matrix $J(k)$ as
\begin{equation}
\label{2.15}
\varphi(k,x)=\ds\frac{1}{2ik}\left[f(k,x)\,J(-k)-f(-k,x)\,J(k)\right],
\qquad k\in\mathbb R\setminus\{0\}.
\end{equation}
When the potential $V$ is locally integrable for $x\in[0,+\infty),$ it follows \cite{AW2021} that
the regular solution is defined for $k\in\mathbb C$ and in fact it is entire in $k$ for each $x\in[0,+\infty).$
Thus, when $V$ belongs to $L^1(\mathbb R^+)$ but not
to $L^1_1(\mathbb R^+),$ even though the zero-energy Jost solution $f(0,x)$ or
the zero-energy physical solution $\Psi(0,x)$ may not exist and the right-hand sides may not be defined at $k=0,$
the regular solution $\varphi(k,x)$ is well defined at $k=0.$

\section{The bound states of the Schr\"odinger operator}
\label{section3}

In this section we assume that 
the potential is integrable and satisfies \eqref{2.2}. We present the relevant preliminary material related to the
bound states of the half-line matrix Schr\"odinger operator. 

A bound-state solution
corresponds to a square-integrable column-vector solution
to \eqref{2.1} satisfying the boundary condition \eqref{2.7}. We let
$\lambda:=k^2.$ When the potential satisfies \eqref{2.2} and \eqref{2.4}, 
there are no bound states when $\lambda\ge 0,$ and there are at most a finite number 
of bound states when $\lambda<0$ occurring at the $k$-values on the positive imaginary axis 
of the complex $k$-plane. The bound-state $k$-values correspond to
the zeros of the determinant $\det[J(k)]$ on the positive imaginary axis.
We use
$N$ to denote the number of zeros of $\det[J(k)]$ occurring when $k=i\kappa_j$ for
$1\le j\le N,$ with
$\kappa_j$ being distinct positive
constants.
We remark that $N$ may be zero, finite, or infinite.
 If there are no bound states, then we have $N=0.$
Thus, $\det[J(i\kappa_j)]=0$ and we use
$m_j$ to denote the number of linearly independent vectors
in $\mathbb C^n$ spanning the kernel $\text{\rm{Ker}}[J(i\kappa_j)].$
We remark that $m_j$ is a positive integer
satisfying $1\le m_j\le n,$ and it corresponds to
the multiplicity of the bound state at $k=i\kappa_j.$
Thus, $N$ describes the number of bound states
without counting the multiplicities. The total number of bound states
including the multiplicities, denoted by $\mathcal N,$ is given by
\begin{equation*}
\mathcal N:=\ds\sum_{j=1}^N m_j.
\end{equation*}

The $n\times n$ matrices
$J(i\kappa_j)$ and $J(i\kappa_j)^\dagger$ each have the rank equal to $n-m_j.$ 
When $n=1$ we must have $m_j=1,$ in which case
$J(i\kappa_j)$ and $J(i\kappa_j)^\dagger$ are both equal to the
scalar zero. If $n\ge 2,$
then the kernels
$\text{\rm{Ker}}[J(i\kappa_j)]$ and
$\text{\rm{Ker}}[J(i\kappa_j)^\dagger]$
are in general different from each other, but they have the same
dimension equal to $m_j.$
We use $Q_j$ and $P_j$ to denote the orthogonal projections
onto $\text{\rm{Ker}}[J(i\kappa_j)]$ and
$\text{\rm{Ker}}[J(i\kappa_j)^\dagger],$ respectively.
The definition of the orthogonal projection indicates that we have
\begin{equation}
\label{3.2}
P_j^2=P_j^\dagger=P_j,\quad Q_j^2=Q_j^\dagger=Q_j.
\end{equation}

The bound-state solutions to
\eqref{2.1} can be described either via
the Marchenko theory or the Gel'fand--Levitan theory.
In the Marchenko theory the normalization
matrices for the bound states are obtained by
normalizing the Jost solution $f(k,x)$ at the bound states,
whereas in the Gel'fand--Levitan theory
the normalization
matrices for the bound states are obtained by
normalizing the regular solution $\varphi(k,x)$ at the bound states.
We use 
$M_j$ and $\Psi_j(x)$ to denote the $n\times n$
Marchenko normalization matrix and
the corresponding $n\times n$ normalized matrix solution
at the bound state with $k=i\kappa_j,$ respectively.
Similarly, we use
$C_j$ and $\Phi_j(x)$ to denote the
Gel'fand--Levitan normalization matrix and
the corresponding normalized matrix solution
at the bound state with $k=i\kappa_j,$ respectively.

At the bound state $k=i\kappa_j,$ the Schr\"odinger equation \eqref{2.1} is given by
\begin{equation} 
\label{3.3}
-\psi''+V(x)\,\psi=-\kappa_j^2\,\psi,\qquad x\in\mathbb R^+,\quad 1\le j\le N.
\end{equation}
In general, the solution 
$f(i\kappa_j,x)$ to \eqref{3.3} does not satisfy the boundary condition \eqref{2.7}. On the other hand, $f(i\kappa_j,x)$ is exponentially decaying
asymptotically, i.e. we have
\begin{equation}
\label{3.4}
f(i\kappa_j,x)=e^{-\kappa_j x}\left[I+o(1)\right],\qquad x\to+\infty,\quad 1\le j\le N.
\end{equation}
Hence, for large $x$-values,
the $n$ columns of
$f(i\kappa_j,x)$ are linearly independent and they are each exponentially
decaying solutions to \eqref{3.3}. However, those columns do not, in general, 
satisfy the boundary condition \eqref{2.7}.
We construct an $n\times n$ matrix solution to
\eqref{3.3} satisfying the boundary condition \eqref{2.7} and exponentially decaying as $x\to+\infty.$
Such a matrix solution, denoted by $\Psi_j(x),$ is obtained with the help of $f(i\kappa_j,x)$ and
the $n\times n$ Marchenko normalization matrix $M_j,$ and it is given by
\begin{equation}
\label{3.5}
\Psi_j(x):=f(i\kappa_j,x)\,M_j,\qquad 1\le j\le N,
\end{equation}
where we refer to the $n\times n$ matrix
$\Psi_j(x)$ as the Marchenko
normalized  bound-state solution
to the Schr\"odinger equation \eqref{3.3}
at $k=i\kappa_j.$ 
Although each column of the $n\times n$ matrix $\Psi_j(x)$ is a solution to
\eqref{3.3} and satisfies \eqref{2.7}, among its $n$ columns
we only have $m_j$ linearly independent column-vector solutions to \eqref{3.3} satisfying
the boundary condition \eqref{2.7}. 
Any nontrivial
square-integrable column-vector solution to \eqref{3.3} satisfying \eqref{2.7} can be written as $\Psi_j(x)\,\nu$ for a constant nonzero
column vector $\nu \in \mathbb C^n.$ 
The Marchenko normalized bound-state solutions $\Psi_j(x)$ for $1\le j\le N$
satisfy the normalization condition given by
\begin{equation}
\label{3.6}
\int_0^\infty dx\,\Psi_j(x)^\dagger\,\Psi_j(x)=P_j,\qquad 1\le j\le N,
\end{equation}
as well as the orthogonality condition given by
\begin{equation}
\label{3.7}
\int_0^\infty dx\,\Psi_j(x)^\dagger\,\Psi_l(x)=0,\qquad j\ne l,
\end{equation} 
where we have $1\le j\le N$ and $1\le l\le N.$
For further information on the description of bound states
in terms of the Marchenko normalization matrices, we refer the reader to
\cite{AW2021} and \cite{AW2025}.

The Marchenko normalization matrix $M_j$ is constructed
as follows.
Using the orthogonal projection matrix
$P_j$ and the Jost solution $f(i\kappa_j,x),$ we obtain the $n\times n$ hermitian matrix $\mathbf A_j$ given by
\begin{equation}
\label{3.8}
\mathbf A_j:=\int_0^\infty dx\,P_j\,f(i\kappa_j,x)^\dagger\,f(i\kappa_j,x)\,P_j,\qquad 1\le j\le N.
\end{equation}
Using $\mathbf A_j$ and $P_j,$ we define the $n\times n$ hermitian matrix $\mathbf B_j$ as
\begin{equation}
\label{3.9}
\mathbf B_j:=I-P_j+\mathbf A_j,\qquad 1\le j\le N.
\end{equation}
It is known that $\mathbf B_j$ is a positive matrix, and hence 
there exists a unique positive hermitian matrix $\mathbf B_j^{1/2}$ satisfying
\begin{equation}
\label{3.10}\mathbf B_j^{1/2}\,\mathbf B_j^{1/2}=\mathbf B_j,\qquad 1\le j\le N.
\end{equation}
Since $\mathbf B_j^{1/2}$ is positive, its inverse exists and is denoted by
$\mathbf B_j^{-1/2}.$ It is known that each of
$\mathbf B_j,$ $\mathbf B_j^{1/2},$ and $\mathbf B_j^{-1/2}$ commutes with
$P_j.$ The Marchenko normalization matrix $M_j$ is defined in terms of  $\mathbf B_j^{-1/2}$ and $P_j$
as
\begin{equation}
\label{3.11}M_j:=\mathbf B_j^{-1/2}\,P_j,\qquad 1\le j\le N.
\end{equation}
The Marchenko normalization matrix
 $M_j$ is an $n\times n$ nonnegative matrix of rank $m_j.$
Hence, exactly $m_j$ of its eigenvalues are positive and the remaining $n-m_j$ eigenvalues are zero.

The description of the bound states pertinent to the Gel'fand--Levitan method is as follows.
We construct the $n\times n$ matrix-valued Gel'fand--Levitan normalized bound-state solutions
$\Phi_j(x)$ with the help of the regular solution
$\varphi(k,x)$ and the Gel'fand--Levitan normalization matrices $C_j$ as
\begin{equation}
\label{3.12}
\Phi_j(x):=\varphi(i\kappa_j,x)\,C_j,\qquad 1\le j\le N,
\end{equation}
where $\varphi(i\kappa_j,x)$ is the value of the regular solution evaluated at $k=i\kappa_j.$
We refer to the $n\times n$ matrix
$\Phi_j(x)$ as the Gel'fand--Levitan normalized bound-state solution
 to the Schr\"odinger equation \eqref{3.3} at $k=i\kappa_j.$
 Although each column of the regular solution $\varphi(i\kappa_j,x)$ to \eqref{3.3} satisfies the boundary condition \eqref{2.7}, 
 unless we have $m_j=n$ those columns in general become unbounded
as $x\to+\infty,$ and we have
\begin{equation*}
\varphi(i\kappa_j,x)=O(e^{\kappa_j x}),\qquad x\to+\infty.
\end{equation*}
The presence of the $n\times n$ matrix $C_j$ of rank $m_j$ in \eqref{3.12} ensures that 
each column of $\Phi_j(x),$ besides being a solution to \eqref{3.3} and satisfying the boundary condition
\eqref{2.7}, decays exponentially as $O(e^{-\kappa_j x})$ as $x\to+\infty.$
There are exactly $m_j$ linearly independent columns of $\Phi_j(x)$ among its $n$ columns.
The Gel'fand--Levitan normalized bound-state solutions $\Phi_j(x)$ satisfy the normalization property given by
\begin{equation}
\label{3.14}
\int_0^\infty dx\,\Phi_j(x)^\dagger\,\Phi_j(x)=Q_j,\qquad 1\le j\le N,
\end{equation}
as well as the orthogonalization property
\begin{equation}
\label{3.15}
\int_0^\infty dx\,\Phi_j(x)^\dagger\,\Phi_l(x)=0,\qquad j\ne l,
\end{equation} 
for $1\le j\le n$ and $1\le l\le N.$

The Gel'fand--Levitan normalization matrix $C_j$ is constructed as follows. Using the orthogonal projection matrix
$Q_j$ and the regular solution $\varphi(i\kappa_j,x)$ to \eqref{3.3}, we form the $n\times n$ hermitian matrix $\mathbf G_j$ as
\begin{equation}
\label{3.16}\mathbf G_j:=\int_0^\infty dx\,Q_j\,\varphi(i\kappa_j,x)^\dagger\,\varphi(i\kappa_j,x)\,Q_j,\qquad 1\le j\le N.
\end{equation}
Then, using $\mathbf G_j$ and $Q_j$ we form the $n\times n$ hermitian matrix
$\mathbf H_j$ as
\begin{equation}
\label{3.17}
\mathbf H_j:=I-Q_j+\mathbf G_j,\qquad 1\le j\le N.
\end{equation}
It is known that $\mathbf H_j$ is a positive matrix. Hence, 
there exists a unique positive hermitian matrix $\mathbf H_j^{1/2}$ so that
\begin{equation}
\label{3.18}\mathbf H_j^{1/2}\,\mathbf H_j^{1/2}=\mathbf H_j,\qquad 1\le j\le N.
\end{equation}
Since the matrix $\mathbf H_j^{1/2}$ is positive, its inverse exists and is denoted by
$\mathbf H_j^{-1/2}.$ Each of
$\mathbf H_j,$ $\mathbf H_j^{1/2},$ and $\mathbf H_j^{-1/2}$ commutes with
$Q_j.$ 
The Gel'fand--Levitan normalization matrix $C_j$ is constructed using  $\mathbf H_j^{-1/2}$ and
$Q_j$ as
\begin{equation}
\label{3.19}
C_j:=\mathbf H_j^{-1/2}\,Q_j,\qquad 1\le j\le N.
\end{equation}
The $n\times n$ matrix $C_j$ is hermitian and nonnegative, and it has rank
equal to $m_j,$ which is the same as the rank of $Q_j.$

The Marchenko normalized bound-state solution $\Psi_j(x)$ is not identical to 
the Gel'fand--Levitan normalized bound-state solutions $\Phi_j(x),$ but each of them contains
$m_j$ linearly independent columns. Those two $n\times n$ matrix-valued bound-state solutions can be expressed 
in terms of each other with the help of the $n\times n$ dependency
matrix $D_j$ via
\begin{equation}
\label{3.20}
\Phi_j(x)=\Psi_j(x)\,D_j, \qquad 1\le j\le N.
\end{equation}
In order to define $D_j$ uniquely, we need to impose one further restriction on it.
For example, we can require that $D_j$ satisfies
\begin{equation}
\label{3.21}
D_j=P_j\,D_j, \qquad 1\le j\le N,
\end{equation}
where we recall that the $n\times n$ matrix $P_j$ is the orthogonal projection onto $\text{\rm{Ker}}[J(i\kappa_j)^\dagger]$ and
appearing in \eqref{3.2}. 
Then, the dependency matrix satisfying \eqref{3.20} and \eqref{3.21} is uniquely determined.
In other words, when we are given
the Schr\"odinger equation \eqref{2.1} with the potential
$V$ satisfying
\eqref{2.2} and \eqref{2.3} and we are also given the selfadjoint boundary condition \eqref{2.7}, then there exists
a unique dependency matrix $D_j$ satisfying \eqref{3.20} and \eqref{3.21} associated with
each bound state at $k=i\kappa_j$ for $1\le j\le N.$
The dependency matrix $D_j$ can be constructed by using the fact that
there exists a bijection $\alpha\mapsto\beta$ from the kernel of $J(i\kappa_j)$ onto the kernel of $J(i\kappa_j)^\dagger$ 
in such a way that
\begin{equation*}
\varphi(i\kappa_j,x)\,\alpha= f(i\kappa_j,x)\,\beta,\qquad 1\le j\le N.
\end{equation*}
For the explicit construction
of $D_j$ associated with the corresponding Schr\"odinger operator, 
we refer the reader to \cite{AW2025}.
From \eqref{3.4} we know that the Jost solution $f(i\kappa_j,x)$ is an invertible
$n\times n$ matrix when $x$ is large. We can express
$D_j$ in terms of the Marchenko normalization matrix $M_j,$ the Gel'fand--Levitan normalized bound-state wavefunction
$\Phi_j(x),$ and the matrix quantity $f(i\kappa_j,x)$ as \cite{AW2025}
\begin{equation}
\label{3.23}
D_j= M^+_j\, f(i\kappa_j,x)^{-1}\,   \Phi_j(x), \qquad 1\le j\le N,
\end{equation}
where we evaluate the right-hand side at any $x$-value at which the matrix $f(i\kappa_j,x)$ is invertible and 
we note that  $M^+_j$ denotes \cite{BG2003} the Moore--Penrose inverse of $M_j.$ 
As seen from \eqref{3.20}, we can write \eqref{3.23} in the equivalent form as
\begin{equation}
\label{3.24}
D_j= \Psi_j(x)^+\,\Phi_j(x), \qquad 1\le j\le N.
\end{equation}
If the matrix $f(i\kappa_j,x)$ is not invertible
at a particular $x$-value
but the matrix
$f'(i\kappa_j,x)$ is invertible there, then we can evaluate
$D_j$ by using
\begin{equation}
\label{3.25}
D_j= M^+_j f'(i\kappa_j,x)^{-1} \,  \Phi'_j(x), \qquad 1\le j\le N,
\end{equation}
where we evaluate the right-hand side at any $x$-value at which the matrix $f'(i\kappa_j,x)$ is invertible.
From \eqref{3.20}, we see that we can write \eqref{3.25} in the equivalent form as
\begin{equation*}
D_j= [\Psi'_j(x)]^+\,\Phi_j(x), \qquad 1\le j\le N.
\end{equation*}

We remark that $D_j$ satisfies various properties such as
\begin{equation}
\label{3.27}
D_j=D_j\,Q_j, \qquad 1\le j\le N,
\end{equation}
where we recall that the $n\times n$ matrix $Q_j$ is the orthogonal projection
onto $\text{\rm{Ker}}[J(i\kappa_j)].$ 
Each of the matrix products $D_j^\dagger D_j$ and $D_j D_j^\dagger$
is an orthogonal projection, and we have
\begin{equation}
\label{3.28}
D_j^\dagger D_j=Q_j,\quad D_j D_j^\dagger=P_j, \qquad 1\le j\le N
.\end{equation}
The matrices $D_j$ and $D_j^\dagger$
each have rank $m_j,$ which is the same as the common rank
of the orthogonal projection matrices $P_j$ and $Q_j.$
The adjoint matrix $D_j^\dagger$
coincides with 
the Moore--Penrose inverse of $D_j,$ i.e. we have
\begin{equation}
\label{3.29}
D_j^+=D_j^\dagger, \qquad 1\le j\le N.\end{equation}
Analogous to \eqref{3.20}, where $\Phi_j(x)$ is expressed in terms of $\Psi_j(x),$ we have 
\begin{equation}
\label{3.30}
\Psi_j(x)=\Phi_j(x) D_j^\dagger, \qquad 1\le j\le N,\end{equation}
which expresses $\Psi_j(x)$ in terms of $\Phi_j(x).$

\section{The Gel'fand--Levitan system of integral equations}
\label{section4}

In this section we assume that the potential $V$ satisfies
\eqref{2.2} and at least \eqref{2.4}. We present the basic relevant material related to the spectral measure for the half-line matrix
Schr\"odinger operator described by \eqref{2.1} and \eqref{2.7}. We also mention
Parseval's equality involving the regular solution and
the Gel'fand--Levitan normalized bound-state solutions as well as
another version of Parseval's equality involving the physical solution and
the Marchenko normalized bound-state solutions. We show how
a perturbation of the spectral measure results in the perturbation of all
relevant quantities related to the corresponding Schr\"odinger operator.
We also indicate how those perturbations can be expressed in terms of the solution to
a system of integral equations, which we refer to the matrix-valued Gel'fand--Levitan system.
For the details and further elaborations, we refer the reader to \cite{AW2025}.

We recall that when the potential $V$ in \eqref{2.1} satisfies \eqref{2.2} and \eqref{2.4}, the
 Schr\"odinger operator associated with \eqref{2.1} and \eqref{2.7} has at most a finite number of distinct bound states
occurring at $k=i\kappa_j$ for $1\le j\le N.$ It is understood that $N=0$ if there are no
bound states. The spectral measure $d\rho$ associated with the corresponding Schr\"odinger operator is given by
\begin{equation}
\label{4.1}
d\rho:=\begin{cases}
\ds\frac{\sqrt{\lambda}}{\pi}\,\left(J(\sqrt{\lambda})^\dagger\,J(\sqrt{\lambda})\right)^{-1}\,d\lambda,\qquad \lambda\ge 0,
\\
\noalign{\medskip}
\ds\sum_{j=1}^N C_j^2\,\delta(\lambda-\lambda_j)\,d\lambda,
\qquad \lambda<0,\end{cases}
\end{equation}
where $\delta(\cdot)$ denotes the Dirac delta distribution and we recall that 
$J(k)$ is the Jost matrix
given in \eqref{2.11}, the quantity $C_j$ 
is the Gel'fand--Levitan normalization matrix defined in \eqref{3.19},
$C_j^2$ is the matrix multiplication of $C_j$ with itself, and we let
$\lambda:=k^2$ and $\lambda_j:=-\kappa_j^2$ for $1\le j\le N.$
Using the spectral measure $d\rho,$ we express the completeness relation, known as 
 Parseval's equality, as
\begin{equation*}
\ds\int_{\lambda\in\mathbb R^+} \varphi(k,x)\,d\rho\,
\varphi(k,y)^\dagger+\ds\sum_{j=1}^N \Phi_j(x)\,\Phi_j(y)^\dagger=\delta(x-y)\,I,
\end{equation*}
where the integration is from $\lambda=0$ to $\lambda=+\infty,$ and we recall that
$\varphi(k,x)$ is the regular solution to \eqref{2.1} satisfying the initial
conditions \eqref{2.6} and that $\Phi_j(x)$ is the $n\times n$ Gel'fand--Levitan normalized bound-state  solution at $k=i\kappa_j$ and appearing in \eqref{3.12}.
We remark that Parseval's equality can also be expressed in terms of the
physical solution $\Psi(k,x)$ appearing in \eqref{2.13}
and the $n\times n$ Marchenko normalized bound-state  solutions $\Psi_j(x)$
appearing in \eqref{3.5}, and we have
\begin{equation*}
\ds\frac{1}{2\pi} \ds\int_0^\infty
dk\, \Psi(k,x)\,
\Psi(k,y)^\dagger+\ds\sum_{j=1}^N \Psi_j(x)\,\Psi_j(y)^\dagger=\delta(x-y)\,I.
\end{equation*}

Let us view $V(x),$ $\varphi(k,x),$ $(A,B),$ $f(k,x),$ $J(k),$ $S(k),$ and $d\rho$
as the potential, the regular solution, the pair of boundary
matrices, the Jost solution, the Jost matrix, the scattering matrix, and the spectral measure, respectively, corresponding to the unperturbed problem.
Let us assume that the unperturbed problem has $N$ bound states at
$k=i\kappa_j$ for $1\le j\le N,$ and that the quantities relevant to the bound state at $k=i\kappa_j$
are $\lambda_j:=-\kappa_j^2,$ the multiplicity $m_j,$ the Gel'fand--Levitan normalization matrix $C_j,$ the orthogonal projection
$Q_j$ onto $\text{\rm{Ker}}[J(i\kappa_j)],$ the orthogonal projection
$P_j$ onto $\text{\rm{Ker}}[J(i\kappa_j)^\dagger],$ and the Gel'fand--Levitan normalized bound-state  solution
 $\Phi_j(x).$ 
Thus, the regular solution $\varphi(k,x)$ 
satisfies the unperturbed Schr\"odinger equation \eqref{2.1}
and the unperturbed initial conditions \eqref{2.6}.

Let us view $\tilde V(x),$
$\tilde\varphi(k,x),$ $(\tilde A,\tilde B),$ $\tilde f(k,x),$ $\tilde J(k),$ $\tilde S(k),$ and
$d\tilde\rho$ as the potential, the regular solution, the pair of boundary
matrices, the Jost solution, the Jost matrix, the scattering matrix, and the spectral measure, respectively, corresponding to the perturbed problem.
Let us assume that the perturbed problem has $\tilde N$ bound states at
$k=i\tilde\kappa_j$ for $1\le j\le \tilde N,$ and that the quantities relevant to the bound state at $k=i\tilde\kappa_j$
are $\tilde\lambda_j:=-\tilde\kappa_j^2,$ the multiplicity $\tilde m_j,$ the Gel'fand--Levitan normalization matrix $\tilde C_j,$ the orthogonal projection
$\tilde Q_j$ onto $\text{\rm{Ker}}[\tilde J(i\tilde\kappa_j)],$ the orthogonal projection
$\tilde P_j$ onto $\text{\rm{Ker}}[\tilde J(i\tilde\kappa_j)^\dagger],$ and the Gel'fand--Levitan normalized bound-state  solution
 $\tilde\Phi_j(x).$ We remark that the perturbed quantities are defined in the same way the unperturbed quantities are defined, but by using 
 the relevant perturbed quantities as input to those definitions.
Thus, the perturbed Schr\"odinger equation is given by
\begin{equation}
\label{4.4}
-\tilde\psi''+\tilde V(x)\,\tilde\psi=k^2\tilde\psi,\qquad x\in\bR^+,
\end{equation}
and the perturbed boundary condition is described by
\begin{equation*}
-\tilde B^\dagger \tilde\psi(0)+\tilde A^\dagger \tilde\psi'(0)=0,
\end{equation*}
where the constant $n\times n$ perturbed boundary matrices $\tilde A$ and $\tilde B$ satisfy
\begin{equation}
\label{4.6}
-\tilde B^\dagger \tilde A+\tilde A^\dagger \tilde B=0,\end{equation}
\begin{equation}
\label{4.7}
\tilde A^\dagger \tilde A+\tilde B^\dagger \tilde B>0.\end{equation}
The perturbed Jost solution $\tilde f(k,x)$ is the solution to \eqref{4.4} satisfying the
spacial asymptotics
\begin{equation*}
\tilde f(k,x)=e^{ikx}\left[ I+o(1)\right], \quad \tilde f'(k,x)= e^{ikx} \left[ik I+o(1)\right],
\qquad x\to +\infty.
\end{equation*}
The perturbed Jost matrix $\tilde J(k)$ is defined as
\begin{equation*}
\tilde J(k):=\tilde f(-k^*,0)^\dagger\,\tilde B-\tilde f'(-k^*,0)^\dagger\,\tilde A,\qquad k\in\mathbb R,
\end{equation*}
and the perturbed scattering matrix $\tilde S(k)$ is given by
\begin{equation*}
\tilde S(k):=-\tilde J(-k)\,\tilde J(k)^{-1},\qquad k\in\bR.
\end{equation*}
The perturbed Gel'fand--Levitan normalized bound state solution at $k=i\tilde\kappa_j$ is defined as
\begin{equation*}
\tilde\Phi_j(x):=\tilde\varphi(i\tilde\kappa_j,x)\,\tilde C_j,\qquad 1\le j\le \tilde N,
\end{equation*}
and the perturbed spectral measure $d\tilde\rho$ is given by
\begin{equation}
\label{4.12}
d\tilde\rho:=\begin{cases}
\ds\frac{\sqrt{\lambda}}{\pi}\,\left(\tilde J(\sqrt{\lambda})^\dagger\,\tilde J(\sqrt{\lambda})\right)^{-1}\,d\lambda,\qquad \lambda\ge 0,
\\
\noalign{\medskip}
\ds\sum_{j=1}^{\tilde N} \tilde C_j^2\,\delta(\lambda-\tilde\lambda_j)\,d\lambda,
\qquad \lambda<0.\end{cases}
\end{equation}
The perturbed regular solution $\tilde\varphi(k,x)$ 
satisfies the perturbed Schr\"odinger equation \eqref{4.4} and also satisfies 
the perturbed initial conditions
\begin{equation}
\label{4.13}
\tilde\varphi(k,0)=\tilde A,\quad \tilde\varphi'(k,0)=\tilde B.\end{equation}

We assume that the unperturbed potential $V$ and the perturbed potential $\tilde V$
both satisfy \eqref{2.2} and 
we also have $V\in L^1_1(\mathbb R^+)$ and $\tilde V\in L^1(\mathbb R^+).$
We suppose that the perturbation is initiated by the change $d\tilde\rho-d\rho$ in the spectral measure.
We also assume that the perturbed regular solution $\tilde\varphi(k,x)$ is related 
 to the unperturbed regular solution $\varphi(k,x)$ through an $n\times n$ matrix-valued quantity $\mathcal A(x,y)$
via the relation given by 
\begin{equation}
\label{4.14}
\tilde\varphi(k,x)=\varphi(k,x)+\int_0^x dy\,\mathcal A(x,y)\,\varphi(k,y),
\end{equation}
and that the perturbed potential $\tilde V(x)$ is related to the unperturbed
potential $V(x)$ as
\begin{equation}
\label{4.15}
\tilde V(x)=V(x)+2\,\ds\frac{d\, \mathcal A(x,x)}{dx},
\end{equation}
 where we use $\mathcal A(x,x)$ to denote $\mathcal A(x,x^-).$ 
By choosing $\mathcal A(x,y)$ appearing in \eqref{4.14} appropriately, we would like
to relate the perturbed boundary matrices
$\tilde A$ and $\tilde B$ to the unperturbed 
boundary matrices
$A$ and $B$ through the quantity $\mathcal A(x,y).$
We also would like to determine 
how all other relevant perturbed quantities are related
to the corresponding unperturbed quantities.
It turns out that the appropriate $\mathcal A(x,y)$ is determined as
the solution to
 the Gel'fand--Levitan system of integral equations given by
\begin{equation}
\label{4.16}
\mathcal A(x,y)+G(x,y)+\int_0^x dz\,\mathcal A(x,z)\,G(z,y)=0,\qquad 0\le y< x,\end{equation}
where we have defined
\begin{equation}
\label{4.17}
G(x,y):=\int_{\lambda\in\mathbb R} \varphi(k,x)\left[d\tilde\rho-d\rho\right]\varphi(k,y)^\dagger,
\end{equation}
with $d\rho$ and $d\tilde\rho$ denoting the unperturbed and perturbed spectral measures, respectively.
and $\varphi(k,x)$ is the unperturbed regular solution satisfying \eqref{2.1} and
the initial conditions given in \eqref{2.6}.
Then, the constant $n\times n$ matrices $\tilde A$ and $\tilde B$ appearing in 
\eqref{4.13} are related to
the constant $n\times n$ matrices $A$ and $B$ appearing in \eqref{2.6} as
\begin{equation}
\label{4.18}
\tilde A=A,\quad 
\tilde B=B+\mathcal A(0,0)\,A.
\end{equation}

 \section{The removal of a bound state}
\label{section5}

In this section we assume that the potential $V$ satisfies \eqref{2.2} and at least \eqref{2.4}.
We describe the changes in the relevant quantities related to
the Schr\"odinger operator associated with \eqref{2.1} and \eqref{2.7}, when we perturb
the spectrum of that operator by removing one of the bound states completely without changing
the rest of the discrete spectrum and without changing the continuous spectrum.
Without loss of generality, we remove the bound state at $k=i\kappa_N$ with the
Gel'fand--Levitan normalization matrix $C_N$ without changing the rest of the spectrum.

Toward our goal, we find it convenient to use the Gel'fand--Levitan system \eqref{4.16} and its solution
$\mathcal A(x,y)$ by choosing the change in the spectral measure as
\begin{equation}
\label{5.1}
d\tilde\rho-d\rho=-C_N^2\,\delta(\lambda-\lambda_N),\qquad \lambda\in\mathbb R,\end{equation}
where we recall that $\lambda_N:=-\kappa_N^2.$
We remark that \eqref{5.1} is suggested by \eqref{4.1} and \eqref{4.12}
by assuming that the Gel'fand--Levitan normalization matrices $C_j$
for the remaining bound states at $k=i\kappa_j$ with $1\le j\le N-1$ do not change.
In fact, it turns out that if we use \eqref{5.1} in the construction 
of the input matrix $G(x,y)$ defined in \eqref{4.17}, solve the resulting Gel'fand--Levitan system \eqref{4.16}, and use
the corresponding solution $\mathcal A(x,y)$ to construct the perturbed
quantities, then for $1\le j\le N-1$ each perturbed Gel'fand--Levitan normalization matrix
$\tilde C_j$ coincides with the unperturbed Gel'fand--Levitan normalization matrix
$C_j.$

Using \eqref{5.1} as input to \eqref{4.17}, we obtain
\begin{equation}
\label{5.2}
G(x,y)=- \varphi(i\kappa_N,x) \,C_N^2\, \varphi(i\kappa_N,y)^\dagger,
\end{equation}
where we recall that $\varphi(k,x)$ is the unperturbed regular solution to \eqref{2.1} satisfying the initial conditions \eqref{2.6}.
Using \eqref{3.12} in \eqref{5.2}, we express $G(x,y)$ in terms of the unperturbed Gel'fand--Levitan normalized
bound-state solution $\Phi_N(x)$ as
\begin{equation}
\label{5.3}
G(x,y)=- \Phi_N(x)\,\Phi_N(y)^\dagger.\end{equation}
Since $G(x,y)$ given in \eqref{5.3} constitutes a separable kernel
for \eqref{4.16}, the solution 
$\mathcal A(x,y)$ to \eqref{4.16} is obtained by using the methods of linear algebra,
and it is given by 
\begin{equation}\label{5.4}
\mathcal A(x,y)= \Phi_N(x) \,W_N(x)^+\,\Phi_N(y)^\dagger, \qquad 0 \le y < x, 
\end{equation}
with $W_N(x)$ defined as
\begin{equation}
\label{5.5}
W_N(x):=\int_x^\infty dz\, \Phi_N(z)^\dagger\,\Phi_N(z),
\end{equation}
where we recall that $W_N(x)^+$ denotes the Moore--Penrose inverse of $W_N(x).$
Using \eqref{5.4} in \eqref{4.14} we obtain the perturbed regular solution $\tilde\varphi(k,x)$ as
 \begin{equation}\label{5.6}
\tilde\varphi(k,x)=\varphi(k,x)+ \Phi_{N}(x)\, W_N(x)^+ \int_0^x dy\,\Phi_{N}(y)^\dagger\,\varphi(k,y).
\end{equation}
It is known \cite{AW2025} that \eqref{5.6} can be equivalently expressed as
\begin{equation}
\label{5.7}
\tilde\varphi(k,x)=\varphi(k,x)+\ds\frac{1}{k^2+\kappa_N^2}
\,\Phi_N(x) \,W_N(x)^+\left[\Phi'_N(x)^\dagger\,\varphi(k,x)-\Phi_N(x)^\dagger
\,\varphi'(k,x)\right],
\end{equation}
with the understanding that the term $k^2+\kappa_N^2$ in the denominator causes
only removable singularities.
Using \eqref{5.4} in \eqref{4.15} we obtain the perturbed potential $\tilde Vx)$ as
\begin{equation}
\label{5.8}
\tilde V(x)=V(x)+2\,\ds\frac{d}{dx}\left[
\Phi_N(x) \,W_N(x)^+\,\Phi_N(x)^\dagger
\right].\end{equation}
Next, using \eqref{5.4} in \eqref{4.18}, we express the perturbed boundary matrices
$\tilde A$ and $\tilde B$ as
\begin{equation}\label{5.9}
\tilde A=A, \quad \tilde B=B+A\,C_N^2\, A^\dagger A.
\end{equation}

In the removal of the bound state at $k=i\kappa_N,$ we are interested in the asymptotic behavior of
the potential increment $\tilde V-V.$ For the details we refer the reader to
Theorem~6.3 of \cite{AW2025}.
We recall that the unperturbed potential $V$ satisfies \eqref{2.2} and \eqref{2.4},
the boundary matrices $A$ and $B$ appearing in \eqref{2.7} satisfy \eqref{2.8} and \eqref{2.9},
and the perturbed potential $\tilde V$ is given by \eqref{5.8}.
We have the following:

\begin{enumerate}

\item[\text{\rm(a)}] The perturbed potential $\tilde V$ also satisfies \eqref{2.2}.

\item[\text{\rm(b)}] The potential increment $\tilde V-V$ has the asymptotic behavior
\begin{equation}\label{5.10}
\tilde V(x)-V(x)= O\left(\int_x^\infty dy \,|V(y)| \right), \qquad x \to +\infty.
\end{equation}

\item[\text{\rm(c)}] 
If the unperturbed potential $V$ is further restricted to $L^1_{1+\epsilon}(\mathbb R^+)$ for some fixed $\epsilon \ge 0,$ then 
the perturbed potential $\tilde V$ belongs to $L^1_\epsilon(\mathbb R^+).$ We recall that
the potential class  $L^1_\epsilon(\mathbb R^+)$ consists of potentials $V$ satisfying \eqref{2.5}.

\item[\text{\rm(d)}] If the unperturbed potential $V$ is further restricted to satisfy
\begin{equation}\label{5.11}
|V(x)| \le c\,e^{-\alpha x}, \qquad x \ge x_0,
\end{equation}
for some positive constants $\alpha$ and $x_0$ and with
$c$ denoting a generic constant, then the potential increment $\tilde V(x)-V(x)$ also satisfies
\begin{equation}\label{5.12}
|\tilde V(x)-V(x)| \le c\,e^{-\alpha x}, \qquad x \ge x_0.
\end{equation}
We recall that a generic constant $c$ does not necessarily have the same value in different appearances.
We remark that the estimates in \eqref{5.10}--\eqref{5.12} are not necessarily sharp.

\item[\text{\rm(e)}]
If the support of the unperturbed potential $V$ is contained in the interval $[0,x_0]$
for some positive $x_0,$ then the support of 
the perturbed potential $\tilde V$ is also contained in $[0,x_0].$ 

\end{enumerate}

We refer the reader to \cite{AW2025} for the construction of the remaining relevant quantities
under the further sufficiency assumption that
the unperturbed potential $V$ belongs to $L^1_2(\mathbb R^+).$  
We provide the results from \cite{AW2025} as follows. The perturbed Jost 
matrix $\tilde J(k)$ is related to the unperturbed Jost matrix $J(k)$ as
 \begin{equation}
\label{5.13}
\tilde J(k)=\left[I+\ds\frac{2i\kappa_N}{k-i\kappa_N}\,P_N\right] J(k),\qquad
k\in\overline{\mathbb C^+},\end{equation}
and the determinant of the Jost matrix is changed as
\begin{equation}
\label{5.14}
\det[\tilde J(k)]=\left(\ds\frac{k+i\kappa_N}{k-i\kappa_N}\right)^{m_N} \det[J(k)],
\qquad k\in\overline{\mathbb C^+},
\end{equation}
which indicates that the total number of bound states including the multiplicities 
is decreased by $m_N.$
The perturbed scattering matrix $\tilde S(k)$ is related to the unperturbed
scattering matrix $S(k)$ as
 \begin{equation}
\label{5.15}
\tilde S(k)=\left[I-\ds\frac{2i\kappa_N}{k+i\kappa_N}\,P_N\right] S(k)\left[I-\ds\frac{2i\kappa_N}{k+i\kappa_N}\,P_N\right],
\qquad k\in\mathbb R,\end{equation}
and the determinant of the scattering matrix is changed via
\begin{equation}
\label{5.16}
\det[\tilde S(k)]=\left(\ds\frac{k-i\kappa_N}{k+i\kappa_N}\right)^{2 m_N} \det[S(k)],
\qquad k\in\mathbb R.
\end{equation}
The perturbed Jost solution $\tilde f(k,x)$ is related to the unperturbed Jost solution
$f(k,x)$ as
\begin{equation}
\label{5.17}
\tilde f(k,x)=\left[ f(k,x)-\Phi_N(x) \,W_N(x)^+\ds\int_x^\infty dy\,\Phi_N(y)^\dagger\, f(k,y)\right]
\left[ I+ \ds\frac{2i\kappa_N}{k-i\kappa_N} \,P_N \right],
\end{equation}
which can also be expressed in the alternate form as
\begin{equation}
\label{5.18}
\tilde f(k,x)=\left[ f(k,x)  +\ds\frac{1}{k^2+\kappa_N^2}
\,\Phi_N(x) \,W_N(x)^+\,q_{1}(k,x)\right]
\left[ I+ \ds\frac{2i\kappa_N}{k-i\kappa_N} P_N \right],
\end{equation}
where we have defined
\begin{equation}
\label{5.19}
q_{1}(k,x):=\Phi'_N(x)^\dagger\,f(k,x)-\Phi_N(x)^\dagger
\,f'(k,x).
\end{equation}
Finally, the orthogonal projection matrices $Q_j$ and the Gel'fand--Levitan normalization
matrices $C_j$ for the remaining bound states do not change, i.e. we have
 \begin{equation*}
\tilde Q_j=Q_j,\quad \tilde C_j=C_j,\qquad 1\le j\le N-1.
\end{equation*}

\section{Decreasing the multiplicity of a bound state}
\label{section6}

In this section we assume that the potential $V$ satisfies \eqref{2.2} and at least \eqref{2.4}.
We describe the changes in the relevant quantities related to the Schr\"odinger operator
associated with \eqref{2.1} and \eqref{2.7}, when we perturb the spectrum of our operator by reducing the multiplicity
of a bound state without removing that bound state completely. We do not change the remaining discrete spectrum
and we do not change the continuous spectrum.
We assume that the unperturbed operator has $N$ bound states at $k=i\kappa_j$
with the multiplicity $m_j$ and with the Gel'fand--Levitan normalization matrix $C_j$ for $1\le j\le N.$
Without loss of generality, we assume that we reduce the multiplicity $m_N$ of the bound state
at $k=i\kappa_N$ without completely removing that bound state.
Thus, it is implicitly assumed that $m_N$ satisfies $2\le m_N\le n.$

We assume that the multiplicity of the bound state at $k=i\kappa_N$ is reduced by $m_{N{\text{\rm{r}}}},$ and hence we define the positive
integer representing the reduction in the multiplicity of the bound state at $k=i\kappa_N$ as
\begin{equation}
\label{6.1}
m_{N{\text{\rm{r}}}}:=m_N-\tilde m_N.
\end{equation}
The subscript $N{\text{\rm{r}}}$ indicates that we refer to the $N$th bound state at $k=i\kappa_N$ and that we reduce
its multiplicity.
From \eqref{6.1} we see that $m_{N{\text{\rm{r}}}}$ satisfies the inequality
$1\le m_{N{\text{\rm{r}}}}\le n-1.$
The reduction of the multiplicity of the bound state at $k=i\kappa_N$
by $m_{N{\text{\rm{r}}}}$ is achieved 
by introducing the matrix $Q_{N{\text{\rm{r}}}}$ in such a way that
$Q_{N{\text{\rm{r}}}}$ is an orthogonal projection onto
a proper subspace of 
$Q_N\, \mathbb C^n,$ it has its rank equal to $m_{N{\text{\rm{r}}}},$
and it satisfies $Q_{N{\text{\rm{r}}}}\le Q_N.$
This last matrix inequality is equivalent to 
$Q_N-Q_{N{\text{\rm{r}}}}\ge 0,$ which indicates that
the matrix $Q_N-Q_{N{\text{\rm{r}}}}$ is a nonnegative matrix. In other words,
the eigenvalues of the matrix
 $Q_N-Q_{N{\text{\rm{r}}}}$ are all real and nonnegative.
 For the construction of the matrix $Q_{N{\text{\rm{r}}}}$
 we refer the reader to \cite{AW2025}.

 When the multiplicity of the bound state at $k=i\kappa_N$ is reduced from $m_N$ to $\tilde m_N,$ the transformations
 of the relevant quantities are obtained as follows. For the details, we refer the reader to
 Section~7 of \cite{AW2025}. Since the matrix $Q_{N{\text{\rm{r}}}}$ is an orthogonal projection onto
 a proper subspace of
 $Q_N\,\mathbb C^n,$ it satisfies the matrix equalities
 \begin{equation}
\label{6.2}
Q_{N{\text{\rm{r}}}}\, Q_N=
 Q_N\, Q_{N{\text{\rm{r}}}}= Q_{N{\text{\rm{r}}}}.
 \end{equation}
 Since the columns of the matrix $\varphi(i\kappa_N,x)\, Q_N$ are square integrable in $x\in\mathbb R^+,$
 from \eqref{6.2} it follows that 
 the columns of the matrix $\varphi(i\kappa_N,x)\, Q_{N{\text{\rm{r}}}}$ are also square integrable.
 Analogous to \eqref{3.16} and \eqref{3.17}, we introduce
 the $n\times n$ matrices $\mathbf G_{N{\text{\rm{r}}}}$ and $ \mathbf H_{N{\text{\rm{r}}}}$ as
 \begin{equation}\label{6.3}
 \mathbf G_{N{\text{\rm{r}}}}:= \int_0^\infty dx\,Q_{N{\text{\rm{r}}}}\, \varphi(i\kappa_N,x)^\dagger \,\varphi(i\kappa_N,x) \,Q_{N{\text{\rm{r}}}},
 \end{equation}
 \begin{equation}\label{6.4}
 \mathbf H_{N{\text{\rm{r}}}}:= I-Q_{N{\text{\rm{r}}}}+\mathbf  G_{N{\text{\rm{r}}}}.
 \end{equation}
 One can prove that $\mathbf H_{N{\text{\rm{r}}}}$ is positive, and hence 
 it is invertible. Thus, we uniquely determine
$\mathbf H_{N{\text{\rm{r}}}}^{1/2}$ and its inverse
$\mathbf H_{N{\text{\rm{r}}}}^{-1/2}$ as the positive matrices satisfying the respective
matrix equalities
 \begin{equation*}
\mathbf H_{N{\text{\rm{r}}}}^{1/2}\,\mathbf H_{N{\text{\rm{r}}}}^{1/2}=\mathbf H_{N{\text{\rm{r}}}},\quad
\mathbf H_{N{\text{\rm{r}}}}^{-1/2}\,\mathbf H_{N{\text{\rm{r}}}}^{-1/2}=\mathbf H_{N{\text{\rm{r}}}}^{-1}.
 \end{equation*}
Analogous to \eqref{3.19}, we introduce the $n\times n$ matrix
$C_{N{\text{\rm{r}}}}$ as
\begin{equation}
\label{6.6}
C_{N{\text{\rm{r}}}}:= \mathbf H_{N{\text{\rm{r}}}}^{-1/2} \,Q_{N{\text{\rm{r}}}}.
\end{equation}
Inspired by \eqref{4.17}, \eqref{5.1}, and \eqref{5.2}, with the
help of $C_{N{\text{\rm{r}}}}$ and the unperturbed regular
solution $\varphi(k,x),$ we construct the quantity $G(x,y)$ as
\begin{equation}
\label{6.7}
G(x,y):=- \varphi(i\kappa_N,x) \,C_{N{\text{\rm{r}}}}^2\, \varphi(i\kappa_N,y)^\dagger.
\end{equation}
In analogy with \eqref{3.12}, we introduce the $n\times n$ matrix solution 
$\Phi_{N{\text{\rm{r}}}}(x)$ to the Schr\"odinger equation \eqref{3.3} with $j=N$ as
\begin{equation}\label{6.8} 
\Phi_{N{\text{\rm{r}}}}(x):= \varphi(i\kappa_N, x) \,C_{N{\text{\rm{r}}}}.
\end{equation}
Using \eqref{6.8} in \eqref{6.7}, we write $G(x,y)$ in the equivalent form as
\begin{equation}
\label{6.9}
G(x,y)= -\Phi_{N{\text{\rm{r}}}}(x)\, \Phi_{N{\text{\rm{r}}}}(y)^\dagger.
\end{equation}
The quantity $G(x,y)$ in \eqref{6.9} forms a separable kernel for the Gel'fand--Levitan system \eqref{4.16}.
Thus, we obtain the solution $\mathcal A(x,y)$ to \eqref{4.16} by using the methods of linear algebra and we get
\begin{equation}\label{6.10}
\mathcal A(x,y)= \Phi_{N{\text{\rm{r}}}}(x) \,W_{N{\text{\rm{r}}}}(x)^+ \,\Phi_{N{\text{\rm{r}}}}(y)^\dagger, \qquad  0 \le y <x,
\end{equation}
which is the analog of \eqref{5.4}. We remark that the $n\times n$ matrix-valued quantity $W_{N{\text{\rm{r}}}}(x)$
in \eqref{6.10} is defined as 
\begin{equation}
\label{6.11}
W_{N{\text{\rm{r}}}}(x):=\int_x^\infty dy\,\Phi_{N{\text{\rm{r}}}}(y)^\dagger\, \Phi_{N{\text{\rm{r}}}}(y),
\end{equation}
which is the analog of \eqref{5.5}.
We recall that $W_{N{\text{\rm{r}}}}(x)^+$ denotes the Moore--Penrose inverse of  $W_{N{\text{\rm{r}}}}(x).$
Using \eqref{6.10} in \eqref{4.15}, 
the perturbed potential $\tilde V(x)$ is obtained as
\begin{equation}
\label{6.12}
\tilde V(x)=V(x)+2\,\ds\frac{d}{dx}\left[
\Phi_{N{\text{\rm{r}}}}(x) \,W_{N{\text{\rm{r}}}}(x)^+\,\Phi_{N{\text{\rm{r}}}}(x)^\dagger
\right],
\end{equation} 
which is the analog of \eqref{5.8}.
Using \eqref{6.10} in \eqref{4.14}, we obtain
the perturbed regular solution $\tilde\varphi(k,x)$ 
as
 \begin{equation*}
\tilde\varphi(k,x)=\varphi(k,x)+ \Phi_{N{\text{\rm{r}}}}(x) \,W_{N{\text{\rm{r}}}}(x)^+ \int_0^x dy\,\Phi_{N{\text{\rm{r}}}}(y)^\dagger\,\varphi(k,y),
\end{equation*}
which is equivalently expressed as
\begin{equation}
\label{6.14}
\begin{split}
\tilde\varphi(k,x)=&\varphi(k,x)\\
&+\ds\frac{1}{k^2+\kappa_N^2}
\,\Phi_{N{\text{\rm{r}}}}(x) \,W_{N{\text{\rm{r}}}}(x)^+\left[\Phi'_{N{\text{\rm{r}}}}(x)^\dagger\,\varphi(k,x)-\Phi_{N{\text{\rm{r}}}}(x)^\dagger
\,\varphi'(k,x)\right].
\end{split}
\end{equation}
We note that \eqref{6.14} is the analog of \eqref{5.7}.
Using \eqref{6.10} in \eqref{4.18}, 
the perturbed boundary matrices $\tilde A$ and $\tilde B$ are
expressed in terms of the unperturbed
boundary matrices $A$ and $B$ and
the matrix $C_{N{\text{\rm{r}}}}$ as
\begin{equation}\label{6.15}
\tilde A=A, \quad \tilde B=B+A\,C_{N{\text{\rm{r}}}}^2 A^\dagger A,
\end{equation}
which is the analog of \eqref{5.9}.

In decreasing the multiplicity of the bound state at $k=i\kappa_N,$ it is relevant to estimate the asymptotic behavior of
the potential increment $\tilde V-V.$ For the details we refer the reader to
Section~7 of \cite{AW2025}.
We recall that the unperturbed potential $V$ satisfies \eqref{2.2} and \eqref{2.4},
the boundary matrices $A$ and $B$ appearing in \eqref{2.7} satisfy \eqref{2.8} and \eqref{2.9},
and the perturbed potential $\tilde V$ is given by \eqref{6.12}.
We have the following:

\begin{enumerate}

\item[\text{\rm(a)}] The perturbed potential $\tilde V$ also 
satisfies \eqref{2.2}.

\item[\text{\rm(b)}] 
The asymptotics for the potential increment $\tilde V-V$ given by
\begin{equation}
\label{6.16}
\tilde V(x)-V(x)= O\left(\int_x^\infty dy \,|V(y)| \right), \qquad x \to +\infty,
\end{equation}
which is the analog of \eqref{5.10}.
The asymptotic estimate in \eqref{6.16} is not necessarily sharp.

\item[\text{\rm(c)}] 
If the unperturbed potential
$V$ belongs  to $L^1_{1+\epsilon}(\mathbb R^+)$ for some fixed $\epsilon\ge 0,$ then the perturbed
potential $\tilde V$ belongs to
 $L^1_{\epsilon}(\mathbb R^+).$

\item[\text{\rm(d)}] We have the analogs of \eqref{5.11} and \eqref{5.12}.
In other words, if the unperturbed potential $V$ is further restricted to satisfy
\begin{equation}\label{6.17}
|V(x)| \le c\,e^{-\alpha x}, \qquad x \ge x_0,
\end{equation}
for some positive constants $\alpha$ and $x_0,$ then the potential increment $\tilde V-V$ satisfies
\begin{equation}\label{6.18}
|\tilde V(x)-V(x)| \le c\,e^{-\alpha x}, \qquad x \ge x_0.
\end{equation}
Here, $c$ denotes a generic constant not necessarily taking the same value in different appearances.
We remark that \eqref{6.17} and \eqref{6.18} are the analogs of \eqref{5.11} and \eqref{5.12}, respectively.
We note that the estimates in \eqref{6.16}--\eqref{6.18} are not necessarily sharp.

\item[\text{\rm(e)}]
If the support of $V$ is contained in the interval $[0,x_0],$ then the support of $\tilde V$ is also contained in $[0,x_0].$ 

\end{enumerate}

By further using the sufficiency assumption  that $V$ belongs to $L^1_{2}(\mathbb R^+),$  we determine  \cite{AW2025} 
the transformations of the remaining relevant quantities.
The unperturbed Jost matrix $J(k)$ is transformed into the perturbed Jost matrix $\tilde J(k)$ as
 \begin{equation}
\label{6.19}
\tilde J(k)=\left[I+\ds\frac{2i\kappa_N}{k-i\kappa_N}\,P_{N{\text{\rm{r}}}}\right] J(k),\qquad
k\in\overline{\mathbb C^+},\end{equation}
which is the analog of \eqref{5.13}.
The construction of the orthogonal projection matrix $P_{N{\text{\rm{r}}}}$ is accomplished as follows.
Using the already known orthogonal projection matrix $Q_{N{\text{\rm{r}}}},$ we determine an orthonormal basis
 $\{w^{(l)}_N\}_{l=1}^{m_{N{\text{\rm{r}}}}}$ for the subspace $Q_{N{\text{\rm{r}}}}\, \mathbb C^n.$
For each unit vector $w^{(l)}_N,$ we uniquely determine the vector $\beta^{(l)}_N$ satisfying
\begin{equation}\label{6.20}
\varphi(i\kappa_N,x)\,w^{(l)}_N=f(i\kappa_N,x)\,\beta^{(l)}_N,\qquad 1\le l\le m_{N{\text{\rm{r}}}}.  
\end{equation}
We then obtain $P_{N{\text{\rm{r}}}}$ as the orthogonal projection onto the subspace generated by the set
 $\{\beta^{(l)}_N\}_{l=1}^{m_{N{\text{\rm{r}}}}}.$
We remark that the set
 $\{\beta^{(l)}_N\}_{l=1}^{m_{N{\text{\rm{r}}}}}$
 is in general not an orthogonal set and the vectors
in that set are in general not unit vectors. It is ensured \cite{AW2025}
that the $m_{N{\text{\rm{r}}}}$ vectors in the set are linearly independent.
For the details,
we refer the reader to \cite{AW2025}.

The determinant $\det[J(k)]$ of the unperturbed Jost matrix is transformed as
 \begin{equation*}
\det[\tilde J(k)]=\left(\ds\frac{k+i\kappa_N}{k-i\kappa_N}\right)^{m_{N{\text{\rm{r}}}}} \det[J(k)],\qquad
k\in\overline{\mathbb C^+},\end{equation*}
which is the analog of \eqref{5.14}.
We recall
that $m_{N{\text{\rm{r}}}}$ corresponds to the rank of the orthogonal projection  $Q_{N{\text{\rm{r}}}}.$
The scattering matrix $S(k)$ undergoes the transformation
 \begin{equation*}
\tilde S(k)=\left[I-\ds\frac{2i\kappa_N}{k+i\kappa_N}\,P_{N{\text{\rm{r}}}}\right] S(k)\left[I-\ds\frac{2i\kappa_N}{k+i\kappa_N}\,P_{N{\text{\rm{r}}}}\right],
\qquad k\in\mathbb R,\end{equation*}
which is the analog of \eqref{5.15}.
The determinant $\det[S(k)]$ of the scattering matrix $S(k)$ is transformed as 
\begin{equation*}
\det[\tilde S(k)]=\left(\ds\frac{k-i\kappa_N}{k+i\kappa_N}\right)^{2\, m_{N{\text{\rm{r}}}}} \det[S(k)],
\qquad k\in\mathbb R,\end{equation*}
which is the analog of \eqref{5.16}.
The unperturbed Jost solution $f(k,x)$ is transformed into the perturbed Jost solution $\tilde f(k,x)$ as
\begin{equation}\label{6.24}
\tilde f(k,x)=\left[f(k,x)+\ds\frac{1}{k^2+\kappa_N^2}
\,\Phi_{N{\text{\rm{r}}}}(x) \,W_{N{\text{\rm{r}}}}(x)^+ \,q_{2}(x) \right]  \left[ I+ \ds\frac{2i\kappa_N}{k-i\kappa_N} \,P_{N{\text{\rm{r}}}} \right],
\end{equation}
where we have defined
\begin{equation*}
q_{2}(x):=\Phi'_{N{\text{\rm{r}}}}(x)^\dagger\,f(k,x)-\Phi_{N{\text{\rm{r}}}}(x)^\dagger
\,f'(k,x),
\end{equation*}
which are the analogs of \eqref{5.18} and
\eqref{5.19}, respectively.
We can express \eqref{6.24} equivalently as
\begin{equation*}
\tilde f(k,x)=\left[ f(k,x)-
\Phi_{N{\text{\rm{r}}}}(x) \,W_{N{\text{\rm{r}}}}(x)^+\int_x^\infty dy\,
\Phi_{N{\text{\rm{r}}}}(y)^\dagger \,f(k,y)  \right] \left[ I+ \frac{2i\kappa_N}{k-i\kappa_N} P_{N{\text{\rm{r}}}} \right],
\end{equation*}
which is the analog of \eqref{5.17}.
Under the perturbation specified in \eqref{6.2}, the orthogonal projection matrices $Q_j$ and the Gel'fand--Levitan normalization
matrices $C_j$ for the remaining bound states do not change, i.e. we have
 \begin{equation*}
\tilde Q_j=Q_j,\quad \tilde C_j=C_j,\qquad 1\le j\le N-1.\end{equation*}

 \section{The addition of a bound state}
\label{section7}

In this section we assume that the potential $V$ satisfies \eqref{2.2} and at least \eqref{2.4}.
We provide a description of the changes in the relevant quantities related to
the Schr\"odinger operator associated with \eqref{2.1} and \eqref{2.7}, when we perturb the 
spectrum of our operator by adding a new bound state without changing the
rest of the discrete spectrum and without changing the continuous spectrum. We assume that the unperturbed Schr\"odinger operator has $N$ bound states
occurring at $k=i\kappa_j$ with the multiplicity $m_j$ and with the Gel'fand--Levitan normalization matrix $C_j$ for $1\le j\le N.$ 
We recall that $N$ may be equal to zero, in which case there are no bound states.
We add a 
new bound state at $k=i\tilde\kappa_{N+1}$ with the multiplicity $\tilde m_{N+1}$ and the
Gel'fand--Levitan normalization matrix $\tilde C_{N+1},$ where the positive constant $\tilde\kappa_{N+1}$ is distinct from
$\kappa_j$ for $1\le j\le N.$
We remark that we cannot choose $\tilde C_{N+1}$ in an arbitrary manner. For example, it is necessary that $\tilde C_{N+1}$ is an
$n\times n$ selfadjoint nonnegative matrix. The choice of $\tilde C_{N+1}$
must be compatible with the construction process
given in \eqref{3.16}--\eqref{3.19} starting with the perturbed regular solution 
$\tilde\varphi(k,x)$ and the orthogonal projection matrix
$\tilde Q_{N+1}$ onto the kernel of the perturbed Jost matrix
$\tilde J(i\tilde\kappa_{N+1}).$ We illustrate this compatibility issue
in Example~\ref{example10.4}, where we verify that
the choice of $\tilde C_{N+1}$ used in that example indeed agrees
with the construction described in \eqref{3.16}--\eqref{3.19}.

We construct $\tilde C_{N+1}$ as follows. We first choose an $n\times n$ orthogonal projection matrix $\tilde Q_{N+1}$ having the rank
$\tilde m_{N+1}.$ Next, we construct an $n\times n$ nonnegative hermitian matrix $\tilde{\mathbf G}_{N+1}$ satisfying the equalities
\begin{equation}\label{7.1}
\tilde{\mathbf G}_{N+1} \,\tilde Q_{N+1}=\tilde Q_{N+1}\, \tilde{\mathbf G}_{N+1} = \tilde{\mathbf G}_{N+1},
\end{equation}
in such a way that the  restriction of $\tilde{\mathbf G}_{N+1}$ to $\tilde Q_{N+1}\, \mathbb C^n$  is invertible. 
We then define
the matrix $\tilde{\mathbf H}_{N+1}$ as
\begin{equation}\label{7.2}
\tilde{\mathbf H}_{N+1}:= I- \tilde Q_{N+1}+\tilde{\mathbf G}_{N+1}.
\end{equation}
From \eqref{7.2} we observe that $\tilde{\mathbf H}_{N+1}$ is hermitian and nonnegative.
It turns out \cite{AW2025} that $\tilde{\mathbf H}_{N+1}$ is positive and hence invertible. 
As in \eqref{3.18}, the positivity of
$\tilde{\mathbf H}_{N+1}$
 allows the unique construction of the positive matrix 
$\tilde{\mathbf H}_{N+1}^{1/2}$ and in turn the
construction of 
$\tilde{\mathbf H}_{N+1}^{-1/2}.$
We then use the analog of \eqref{3.19} and obtain
$\tilde C_{N+1}$ as
\begin{equation}\label{7.3} 
\tilde C_{N+1}:=
\tilde{\mathbf H}_{N+1}^{-1/2}\,\tilde Q_{N+1}.
\end{equation}
We determine \cite{AW2025} that each of $\tilde{\mathbf H}_{N+1},$
$\tilde{\mathbf H}_{N+1}^{1/2},$ and 
$\tilde{\mathbf H}_{N+1}^{-1/2}$ commutes with 
$\tilde Q_{N+1}.$
Consequently, from \eqref{7.3} we get
\begin{equation*}
 \tilde C_{N+1}\,\tilde Q_{N+1}=\tilde Q_{N+1}\,\tilde C_{N+1}=
\tilde C_{N+1}.
\end{equation*}

It is convenient to use the Gel'fand--Levitan method to determine the resulting perturbation of the relevant quantities.
We use the Gel'fand--Levitan system \eqref{4.16} and its solution $\mathcal A(x,y)$ to construct
those relevant quantities. Toward our goal, we choose the change in the spectral measure as
\begin{equation}
\label{7.5}
d\tilde\rho-d\rho=\tilde C_{N+1}^2 \,\delta(\lambda-\tilde\lambda_{N+1}),\qquad \lambda\in\mathbb R,\end{equation}
where we have let $\tilde\lambda_{N+1}:=-\tilde\kappa_{N+1}^2.$
Note that \eqref{7.5} is compatible with \eqref{4.1} and \eqref{4.12} 
by assuming that the Gel'fand--Levitan normalization matrices $C_j$ for the existing bound states at $k=i\kappa_j$ with $1\le j\le N$ do not change. It turns out that
if we use \eqref{7.5} in the construction of the input matrix $G(x,y)$ appearing in \eqref{4.17}, solve the resulting 
Gel'fand--Levitan system \eqref{4.16}, and use
the corresponding solution $\mathcal A(x,y)$ to construct the perturbed
quantities, then for $1\le j\le N$ each perturbed Gel'fand--Levitan normalization matrix
$\tilde C_j$ coincides with the unperturbed Gel'fand--Levitan normalization matrix
$C_j,$  and the continuous part of the unperturbed spectral measure $d\rho$ remains unchanged.

Using \eqref{7.5} as input to \eqref{4.17} we get
\begin{equation}\label{7.6}
G(x,y)= \varphi(i\tilde\kappa_{N+1},x) \,\tilde C_{N+1}^2\, \varphi(i \tilde\kappa_{N+1},y)^\dagger,
\end{equation}
where we recall that $\varphi(k,x)$ is the unperturbed regular solution to \eqref{2.1} satisfying the
initial conditions \eqref{2.6}.
With the help of $\varphi(k,x)$ and $\tilde C_{N+1},$
we define the $n\times n$ matrix $\xi_{N+1}$ as
\begin{equation}
\label{7.7}
\xi_{N+1}(x):=\varphi(i\tilde\kappa_{N+1},x)\,\tilde C_{N+1}.\end{equation}
Using \eqref{7.7} in \eqref{7.6}, we express $G(x,y)$ in the equivalent form as
\begin{equation}\label{7.8}
G(x,y)= \xi_{N+1}(x) \, \xi_{N+1}(y)^\dagger.
\end{equation}
We note that $G(x,y)$ forms a separable kernel for the Gel'fand--Levitan system \eqref{4.16}, and hence
the solution $\mathcal A(x,y)$ to \eqref{4.16} can be obtained by using the methods from linear algebra. We get
$\mathcal A(x,y)$ explicitly as
\begin{equation}
\label{7.9}
\mathcal A(x,y)=-\xi_{N+1}(x)\,\Omega_{N+1}(x)^+\,\xi_{N+1}(y)^\dagger,\qquad 0\le y < x,\end{equation}
with the $n\times n$ matrix $\Omega_{N+1}(x)$ defined as
\begin{equation}
\label{7.10}
\Omega_{N+1}(x):=
\tilde Q_{N+1}+\int_0^x dy\,\xi_{N+1}(y)^\dagger\,
\xi_{N+1}(y),
\end{equation}
where we recall that $\Omega_{N+1}(x)^+$ denotes the Moore--Penrose inverse of $\Omega_{N+1}(x).$
Using \eqref{7.9} in \eqref{4.15} we obtain the perturbed potential $\tilde V(x)$ as
\begin{equation}
\label{7.11}
\tilde V(x)=V(x)-2\,\ds\frac{d}{dx}\left[ \xi_{N+1}(x)\,\Omega_{N+1}(x)^+\,\xi_{N+1}(x)^\dagger\right].
\end{equation}
Using \eqref{7.9} in \eqref{4.14} we get the perturbed regular solution $\tilde\varphi(k,x)$ as
 \begin{equation}\label{7.12}
\tilde\varphi(k,x)=\varphi(k,x)- \xi_{N+1}(x)\,\Omega_{N+1}(x)^+ \int_0^x dy\,\xi_{N+1}(y)^\dagger\, \varphi(k,y),
\end{equation}
which can also be written in the equivalent form as
\begin{equation}
\label{7.13}
\tilde\varphi(k,x)=\varphi(k,x)-\ds\frac{1}{k^2+\tilde\kappa_{N+1}^2}
\,\xi_{N+1}(x) \,\Omega_{N+1}(x)^+\left[\xi'_{N+1}(x)^\dagger\,\varphi(k,x)-\xi_{N+1}(x)^\dagger
\,\varphi'(k,x)\right].
\end{equation}
Next, using \eqref{7.9} in \eqref{4.18}, we express the perturbed boundary matrices
$\tilde A$ and $\tilde B$
in terms of the unperturbed boundary matrices $A$ and $B$ as
\begin{equation}\label{7.14}
\tilde A=A, \quad \tilde B=B-A\,\tilde C_{N+1}^2 A^\dagger A.
\end{equation}

In adding a new bound state at $k=i\tilde\kappa_{N+1}$ to the spectrum, we are interested in estimating the asymptotic behavior of
the potential increment $\tilde V-V.$ For the details we refer the reader to
Theorem~8.5 of \cite{AW2025}.
We recall that the unperturbed potential $V$ satisfies \eqref{2.2} and \eqref{2.4},
the boundary matrices $A$ and $B$ appearing in \eqref{2.7} satisfy \eqref{2.8} and \eqref{2.9},
and the perturbed potential $\tilde V$ is given by \eqref{7.11}.
We have the following:

\begin{enumerate}

\item[\text{\rm(a)}] The perturbed potential $\tilde V$ also
satisfies \eqref{2.2}. 

\item[\text{\rm(b)}] 
For every nonnegative constant $a$ and
every parameter $\varepsilon$ satisfying $0<\varepsilon<1,$ the potential increment $\tilde V-V$ has the
asymptotic behavior
\begin{equation}
\label{7.15}
\tilde V(x)- V(x)=O\left(q_{3}(a,\varepsilon,x)\right),\qquad x\to+\infty,
\end{equation}
with $q_{3}(a,\varepsilon,x)$ defined as
\begin{equation}
\label{7.16}
\begin{split}
q_{3}(a,\varepsilon,x):=  &x \, e^{-2\tilde\kappa_{N+1}x}+ \int_x^\infty dy\,|V(y)|
+  e^{-2 \varepsilon \tilde\kappa_{N+1} x}  \int_a^{(1-\varepsilon)x} dy\, |V(y)|
\\
&+\int_{(1-\varepsilon) x}^x dy \, e^{-2 \tilde\kappa_{N+1} (x-y)}\, |V(y)|.
\end{split}
\end{equation}

\item[\text{\rm(c)}] 
If the unperturbed potential $V$ is further restricted to  $L^1_{1+\epsilon}(\mathbb R^+)$ for some fixed $\epsilon \ge 0,$ then 
the perturbed potential $\tilde V$ belongs to $L^1_\epsilon(\mathbb R^+).$

\item[\text{\rm(d)}] If the unperturbed potential $V$ is further restricted to satisfy
\begin{equation}
\label{7.17}
|V(x)| \le c\,e^{-\alpha x}, \qquad x \ge x_0,
\end{equation}
for some positive constants $\alpha$ and $x_0$ and with $c$
denoting a generic constant, then for every $0<\varepsilon<1$  the potential increment
$\tilde V-V$ has the asymptotic behavior as $x\to+\infty$ given by
\begin{equation}\label{7.18}
\tilde V(x) -V(x) =\begin{cases}
O\left(e^{-\alpha x}+ e^{-2 \varepsilon\tilde\kappa_{N+1}x}\right),\qquad  \alpha\le 2 \tilde\kappa_{N+1}, \\
\noalign{\medskip}
O\left( e^{-2\varepsilon\tilde\kappa_{N+1} x}\right), \qquad \alpha > 2  \tilde\kappa_{N+1}.
\end{cases}
\end{equation}

\item[\text{\rm(e)}]
If the unperturbed potential $V$ has compact support, then the perturbed potential $\tilde V$ has the asymptotic behavior
\begin{equation}
\label{7.19}
\tilde V(x)= O\left(x\, e^{-2\tilde\kappa_{N+1} x}\right), \qquad x \to +\infty.
\end{equation}

\end{enumerate}

We remark that the estimates given in \eqref{7.15}--\eqref{7.19} are not necessarily sharp.

We refer the reader to \cite{AW2025} for the construction of the remaining relevant quantities
under the further sufficiency assumption that
the unperturbed potential $V$ belongs to $L^1_2(\mathbb R^+).$  
We provide the results from \cite{AW2025} as follows.
The unperturbed Jost matrix $J(k)$ is transformed into the perturbed Jost matrix $\tilde J(k)$ as
 \begin{equation}
\label{7.20}
\tilde J(k)=\left[I-\ds\frac{2i\tilde\kappa_{N+1}}{k+i\tilde\kappa_{N+1}}\,\tilde P_{N+1}\right] J(k),\qquad
k\in\overline{\mathbb C^+}.
\end{equation}
The orthogonal projection onto the kernel of $\tilde J(i\tilde\kappa_{N+1})$ coincides \cite{AW2025} with the orthogonal projection matrix
$\tilde Q_{N+1}$ 
appearing in \eqref{7.1}--\eqref{7.3} and used in the definition of  the Gel'fand--Levitan normalization matrix $\tilde C_{N+1}.$
We note that the $n\times n$ matrix $\tilde P_{N+1}$ appearing in \eqref{7.20} is the orthogonal projection onto the
 kernel of $\tilde J(i\tilde\kappa_{N+1})^\dagger$ and it has rank $\tilde m_{N+1},$ by recalling that
that $\tilde m_{N+1}$ is the rank of  $\tilde C_{N+1}.$
We have
\begin{equation}\label{7.21}
\tilde P_{N+1}:= L_{N+1}\, \tilde C_{N+1}\, [ \tilde C_{N+1}\, L_{N+1}^\dagger\, L_{N+1}\, \tilde C_{N+1}]^+ \tilde C_{N+1}\, L_{N+1}^\dagger.
\end{equation}
The constant $n\times n$ matrix $L_{N+1}$ appearing in \eqref{7.21} is invertible, and it is obtained 
by using the corresponding coefficient matrix in the expansion
\begin{equation}\label{7.22}
\varphi(i\tilde\kappa_{N+1},x)= f(i\tilde\kappa_{N+1},x)\,K_{N+1}+  g(i\tilde\kappa_{N+1},x)\,L_{N+1},
\end{equation}
with $f(k,x)$ being the Jost solution to \eqref{2.1} appearing in
\eqref{2.10} and $g(k,x)$ being the $n\times n$ matrix-valued solution to \eqref{2.1} constructed 
for $k\in\bCp\setminus\{0\}$ and satisfying
the spacial asymptotics 
\begin{equation}
\label{7.23}
g(k,x)=e^{-ikx}\left[I+o(1)\right],\quad g'(k,x)=-e^{-ikx}\left[ik\,I+o(1)\right],
\qquad x\to+\infty.
\end{equation}
For more information on $g(k,x),$ we refer the reader to \cite{AW2021} and \cite{AW2025}.

Under the perturbation specified in \eqref{7.5}, the determinant of the Jost matrix $J(k)$ is transformed as
 \begin{equation}
\label{7.24}
\det[\tilde J(k)]=\left(\ds\frac{k-i\tilde\kappa_{N+1}}{k+i\tilde\kappa_{N+1}}\right)^{\tilde m_{N+1}} \det[J(k)],\qquad
k\in\overline{\mathbb C^+},\end{equation}
from which we observe that the total
number of bound states including the multiplicities is increased by $\tilde m_{N+1}.$
The scattering matrix $S(k)$ is transformed into $\tilde S(k)$ as 
 \begin{equation}
\label{7.25}
\tilde S(k)=\left[I+\ds\frac{2i\tilde\kappa_{N+1}}{k-i\tilde\kappa_{N+1}}\,\tilde P_{N+1}\right] S(k)\left[I+\ds\frac{2i\tilde\kappa_N}{k-i\tilde\kappa_{N+1}}\,\tilde P_{N+1}\right],
\qquad k\in\mathbb R,
\end{equation}
where we recall that $\tilde P_{N+1}$ is the matrix defined in \eqref{7.21}.
The determinant of the scattering matrix is transformed as 
 \begin{equation}
\label{7.26}
\det[\tilde S(k)]=\left(\ds\frac{k +i\tilde\kappa_{N+1}}{k-i\tilde\kappa_{N+1}}\right)^{2 \tilde m_{N+1}} \det[S(k)],
\qquad k\in\mathbb R.
\end{equation}
The Jost solution $f(k,x)$ is transformed into $\tilde f(k,x)$ as
\begin{equation}
\label{7.27}
\tilde f(k,x)=\left[f(k,x)-\ds\frac{1}{k^2+\tilde\kappa_{N+1}^2}
\,\xi_{N+1}(x) \,\Omega_{N+1}(x)^+
q_{4}(k,x)
\right]
\left[I- \frac{2i \tilde\kappa_{N+1}}{k+i \tilde\kappa_{N+1} } \tilde P_{N+1}\right],
\end{equation}
where we have defined $q_{4}(k,x)$ as
\begin{equation*}
q_{4}(k,x):=\left[\xi'_{N+1}(x)^\dagger\,f(k,x)-\xi_{N+1}(x)^\dagger
\,f'(k,x)\right].\end{equation*}
Furthermore, under the perturbation specified in \eqref{7.5}, the existing orthogonal
projections $Q_j$ and the existing Gel'fand--Levitan normalization matrices $C_j$
remain unchanged, i.e. we have
 \begin{equation*}
\tilde Q_j=Q_j,\quad \tilde C_j=C_j,\qquad 1\le j\le N.
\end{equation*}

\section{Increasing the multiplicity of a bound state}
\label{section8}

In this section we assume that the potential $V$ satisfies \eqref{2.2} and at least \eqref{2.4}.
We describe the changes in the relevant quantities related to the Schr\"odinger operator
associated with \eqref{2.1} and \eqref{2.7}, when we perturb the spectrum of our operator by increasing the multiplicity
of a bound state. We do not change the remaining part of the discrete spectrum
and we do not change the continuous spectrum.
We assume that the unperturbed operator has $N$ bound states at $k=i\kappa_j$
with the multiplicity $m_j$ and with the Gel'fand--Levitan normalization matrix $C_j$ for $1\le j\le N.$
Without loss of generality, 
we increase the multiplicity of the bound state at $k=i\kappa_N$ from $m_N$ to $\tilde m_N$ by
the positive integer $m_{N\text{\rm{i}}},$ which is defined as
\begin{equation}
\label{8.1}
m_{N\text{\rm{i}}}:=\tilde m_N-m_N.
\end{equation}
By using the subscript $N{\text{\rm{i}}}$ in \eqref{8.1}, we indicate that we refer to the $N$th bound state at $k=i\kappa_N$ and that we 
increase its multiplicity.
Since the multiplicity of a bound state cannot exceed $n,$ 
it is implicitly assumed that we have $1\le m_N\le n-1.$

Changing the multiplicity of a bound state does not apply in the scalar case, i.e. when $n=1,$ because 
in the scalar case all bound states have the multiplicity equal to one. Hence, we assume that
we are in the matrix case with $n\ge 2$ and that there exists at least one bound state, i.e. we have $N\ge 1.$
As in the previous sections, we use a tilde to identify the perturbed
quantities obtained after the multiplicity of the bound state is increased. We assume that our
unperturbed potential $V$
satisfies \eqref{2.2} and \eqref{2.4}.
In our notation, our unperturbed regular solution is
$\varphi(k,x),$ unperturbed Jost solution is $f(k,x),$ 
unperturbed Jost matrix is $J(k),$ unperturbed scattering matrix is $S(k),$ unperturbed 
matrix $Q_j$ corresponds to the
orthogonal projection onto the kernel of
$J(i\kappa_j)$ for $1\le j\le N,$
and 
unperturbed
boundary matrices are given by $A$ and $B.$
Our perturbed potential is $\tilde V(x),$ perturbed regular solution is
$\tilde\varphi(k,x),$ perturbed Jost solution is $\tilde f(k,x),$ 
perturbed Jost matrix is $\tilde J(k),$ perturbed scattering matrix is $\tilde S(k),$ 
and perturbed
boundary matrices are given by $\tilde A$ and $\tilde B.$

We recall that $Q_N$ denotes the orthogonal projection onto
the kernel of $J(i\kappa_N),$ the positive integer $m_N$ corresponds to the dimension of
$\text{\rm{Ker}}[J(i\kappa_N)],$ and we use $C_N$ to denote the Gel'fand--Levitan normalization matrix
associated with the bound state at $k=i\kappa_N$ in the unperturbed case.
The increase of the multiplicity of the bound state at $k=i\kappa_N$
by $m_{N{\text{\rm{i}}}}$ is accomplished as follows.
From \eqref{8.1} we see that $1\le m_{N{\text{\rm{i}}}}\le n-m_N.$
When the multiplicity of the bound state at $k=i\kappa_N$ is increased from $m_N$ to $\tilde m_N,$ the transformations of the relevant quantities are obtained as follows. For the details we refer the reader to Section~9 of
\cite{AW2025}.
We introduce the $n\times n$ matrix $\tilde Q_{N{\text{\rm{i}}}}$ with the rank
 $m_{N{\text{\rm{i}}}}$ in order to denote
the orthogonal projection onto a subspace of 
the orthogonal complement in $\mathbb C^n$ of
$Q_N\, \mathbb C^n.$ 
Hence, we have
\begin{equation}
\label{8.2}
\tilde Q_{N\text{\rm{i}}}\, \mathbb C^n \subset \left(Q_N \,\mathbb C^n\right)^\perp.
\end{equation}
We use an $n\times n$ nonnegative hermitian matrix $\tilde{\mathbf G}_{N\text{\rm{i}}}$ satisfying the equalities
\begin{equation}
\label{8.3} 
\tilde{\mathbf G}_{N\text{\rm{i}}}\,\tilde Q_{N\text{\rm{i}}}=\tilde Q_{N\text{\rm{i}}}\,\tilde{\mathbf G}_{N\text{\rm{i}}} = \tilde{\mathbf G}_{N\text{\rm{i}}}.
\end{equation}
We further assume that the restriction of  $\tilde{\mathbf G}_{N\text{\rm{i}}}$ to the subspace $\tilde Q_{N\text{\rm{i}}}\, \mathbb C^n$ is invertible.
With the help of the matrices $\tilde Q_{N\text{\rm{i}}}$ and
$\tilde{\mathbf G}_{N\text{\rm{i}}},$ we define the $n\times n$ matrix $\tilde{\mathbf H}_{N\text{\rm{i}}}$ as
\begin{equation}
\label{8.4}
\tilde{\mathbf H}_{N\text{\rm{i}}}:= I- \tilde Q_{N\text{\rm{i}}}+\tilde{\mathbf G}_{N\text{\rm{i}}},
\end{equation}
which is analogous to \eqref{3.17}. 
Since $\tilde Q_{N\text{\rm{i}}}$ and
$\tilde{\mathbf G}_{N\text{\rm{i}}}$ are hermitian, from \eqref{8.4} it follows that $\tilde{\mathbf H}_{N\text{\rm{i}}}$ is also hermitian.
With the help
of \eqref{8.3} and \eqref{8.4} we observe that
\begin{equation}
\label{8.5}
\tilde{\mathbf H}_{N\text{\rm{i}}}\, \tilde Q_{N\text{\rm{i}}}= \tilde Q_{N\text{\rm{i}}}\,\tilde{\mathbf H}_{N\text{\rm{i}}}.
\end{equation}
It is known \cite{AW2025} that $\tilde{\mathbf H}_{N\text{\rm{i}}}$ is a positive matrix, and hence it
its inverse
$\tilde{\mathbf H}_{N\text{\rm{i}}}^{-1}$ is uniquely determined.
Thus, the positive matrix $\tilde{\mathbf H}_{N\text{\rm{i}}}^{1/2}$ and its
inverse $\tilde{\mathbf H}_{N\text{\rm{i}}}^{-1/2}$ are also uniquely determined, 
and they satisfy
\begin{equation}
\label{8.6}
\tilde{\mathbf H}_{N\text{\rm{i}}}^{1/2}\, \tilde{\mathbf H}_{N\text{\rm{i}}}^{1/2}=
\tilde{\mathbf H}_{N\text{\rm{i}}}, \quad
\tilde{\mathbf H}_{N\text{\rm{i}}}^{-1/2}\, \tilde{\mathbf H}_{N\text{\rm{i}}}^{-1/2}=
\tilde{\mathbf H}_{N\text{\rm{i}}}^{-1}.
\end{equation}
Analogous to \eqref{3.19}, we define the $n\times n$ matrix $\tilde C_{N\text{\rm{i}}}$ as
\begin{equation}\label{8.7}
\tilde C_{N\text{\rm{i}}}:=\tilde{\mathbf H}_{N\text{\rm{i}}}^{-1/2}\, \tilde Q_{N\text{\rm{i}}}.
\end{equation}
It follows that $\tilde C_{N\text{\rm{i}}}$ is hermitian and nonnegative, its rank is equal to $m_{N\text{\rm{i}}},$ and
it satisfies the equalities
\begin{equation}
\label{8.8}
\tilde C_{N\text{\rm{i}}}\,\tilde Q_{N\text{\rm{i}}}= \tilde Q_{N\text{\rm{i}}} \,\tilde C_{N\text{\rm{i}}}=  \tilde C_{N\text{\rm{i}}}.
\end{equation}

With the help of the unperturbed regular solution $\varphi(k,x)$ and the matrix 
$\tilde C_{N\text{\rm{i}}},$ we introduce the $n\times n$ matrix solution 
$\xi_{N\text{\rm{i}}}(x)$ to the Schr\"odinger equation \eqref{3.3} with $j=N$ as
\begin{equation}
\label{8.9}
\xi_{N\text{\rm{i}}}(x):=\varphi(i\kappa_N,x)\,\tilde C_{N\text{\rm{i}}},
\end{equation}
which is the analog of \eqref{7.7}.
Using \eqref{8.9} we construct the kernel $G(x,y)$ to the Gel'fand--Levitan system
\eqref{4.16} as
\begin{equation}
\label{8.10}
G(x,y)=\xi_{N\text{\rm{i}}}(x)\,\xi_{N\text{\rm{i}}}(y)^\dagger,
\end{equation}
which is the analog of \eqref{7.8}.
The quantity $G(x,y)$ in \eqref{8.10} forms a separable kernel for the
Gel'fand--Levitan system \eqref{4.16}. By using the methods of linear algebra, we obtain the solution
$\mathcal A(x,y)$ to \eqref{4.16} as
\begin{equation}
\label{8.11}
\mathcal A(x,y)=-\xi_{N\text{\rm{i}}}(x)\,\Omega_{N\text{\rm{i}}}(x)^+\,\xi_{N\text{\rm{i}}}(y)^\dagger, \qquad 0 \le y < x,
 \end{equation}
where the $n\times n$ matrix $\Omega_{N\text{\rm{i}}}(x)$ is defined as
\begin{equation}
\label{8.12}
\Omega_{N\text{\rm{i}}}(x):=
\tilde Q_{N\text{\rm{i}}}+\int_0^x dy\,\xi_{N\text{\rm{i}}}(y)^\dagger\,
\xi_{N\text{\rm{i}}}(y),
\end{equation}
and we recall that $\Omega_{N\text{\rm{i}}}(x)^+$ denotes the Moore--Penrose inverse of $\Omega_{N\text{\rm{i}}}(x).$
We note that \eqref{8.11} is the analog of \eqref{7.9}.
Using \eqref{8.11} in \eqref{4.15}, 
we obtain the perturbed potential $\tilde V(x)$ as
\begin{equation}\label{8.13}
\tilde V(x):=V(x)-2\,\ds\frac{d}{dx}\left[ \xi_{N\text{\rm{i}}}(x)\,\Omega_{N\text{\rm{i}}}(x)^+\,\xi_{N\text{\rm{i}}}(x)^\dagger\right],\end{equation}
which is the analog of \eqref{7.11}.
Using \eqref{8.11} in \eqref{4.14}, we obtain the perturbed regular solution 
$\tilde\varphi(k,x)$ in terms of the unperturbed regular solution
$\varphi(k,x)$ as
  \begin{equation*}
\tilde\varphi(k,x)=\varphi(k,x)- \xi_{N\text{\rm{i}}}(x)\,\Omega_{N\text{\rm{i}}}(x)^+\, \int_0^x dy\,\xi_{N\text{\rm{i}}}(y)^\dagger\, \varphi(k,y),
\end{equation*}
 which is the analog of \eqref{7.12}.
 In analogy with \eqref{7.13}, we can express the perturbed regular solution also as
 \begin{equation}
\label{8.15}
\tilde\varphi(k,x)=\varphi(k,x)-\ds\frac{1}{k^2+\kappa_N^2}
\,\xi_{N\text{\rm{i}}}(x) \,\Omega_{N\text{\rm{i}}}(x)^+\left[\xi'_{N\text{\rm{i}}}(x)^\dagger\,\varphi(k,x)-\xi_{N\text{\rm{i}}}(x)^\dagger
\,\varphi'(k,x)\right],
\end{equation}
 where the singularities caused by the denominator on the right-hand side of \eqref{8.15} are both removable, and hence  the value of the right-hand side of \eqref{8.15} is understood to hold
 in the limiting sense at $k=\pm i\kappa_N.$
The perturbed quantity $\tilde\varphi(k,x)$ satisfies the initial conditions \eqref{4.13},
where the matrices $\tilde A$ and $\tilde B$
are expressed in terms of the unperturbed
boundary matrices $A$ and $B$ and the nonnegative matrix $\tilde C_{N\text{\rm{i}}}$ appearing in \eqref{8.7} as
\begin{equation}\label{8.16}
\tilde A=A, \quad \tilde B=B-A\,\tilde C_{N\text{\rm{i}}}^2 \,A^\dagger A,
\end{equation}
which is the analog of \eqref{7.14}.
The matrices $\tilde A$ and $\tilde B$ appearing in \eqref{8.16} indeed satisfy \eqref{4.6} and \eqref{4.7},
which are
the counterparts of \eqref{2.8} and \eqref{2.9}, respectively, satisfied
by the unperturbed boundary matrices
 $A$ and $B.$

 Analogous to \eqref{7.22}, we express the matrix solution $\varphi(i\kappa_N, x) \,\tilde Q_{N\text{\rm{i}}}$ to the Schr\"odinger  equation 
\eqref{2.1} with $k= i\kappa_N$ as
\begin{equation}\label{8.17}
\varphi(i\kappa_N,x) \,\tilde Q_{N\text{\rm{i}}}= f(i\kappa_N,x)\,K_{N\text{\rm{i}}} \,
\tilde Q_{N\text{\rm{i}}} +  g(i\kappa_N,x)\, L_{N\text{\rm{i}}}\,\tilde Q_{N\text{\rm{i}}},
\end{equation}
for some $n \times n$ matrices $K_{N\text{\rm{i}}}$ and $L_{N\text{\rm{i}}},$ 
where we recall that $g(k,x)$ is the matrix-valued solution to
\eqref{2.1} appearing in \eqref{7.23}. As shown in \cite{AW2025}, the restriction of  
$L_{N\text{\rm{i}}}$
to the subspace $\tilde Q_{N\text{\rm{i}}} \,\mathbb C^n$ is invertible.

In increasing the multiplicity of the bound state at $k=i\kappa_N,$ it is relevant to estimate the asymptotic behavior of
the potential increment $\tilde V-V.$ For the details we refer the reader to
Section~9 of \cite{AW2025}.
We recall that the unperturbed potential $V$ satisfies \eqref{2.2} and \eqref{2.4},
the boundary matrices $A$ and $B$ appearing in \eqref{2.7} satisfy \eqref{2.8} and \eqref{2.9},
and the perturbed potential $\tilde V$ is given by \eqref{8.13}.
We have the following:

\begin{enumerate}

\item[\text{\rm(a)}] The perturbed potential $\tilde V$ also
satisfies \eqref{2.2}.

\item[\text{\rm(b)}] 
For any $a$ and $\varepsilon$ with
$a\ge 0$ and $0 < \varepsilon <1,$ the corresponding potential increment $\tilde V-V$ has the asymptotic behavior
\begin{equation}
\label{8.18}
\tilde V(x)- V(x)=O\left(q_{5}(x,a,\varepsilon)\right), \qquad x\to+\infty,
\end{equation}
where we have defined
\begin{equation}
\label{8.19}
\begin{split}
q_{5}(x,a,\varepsilon):=&
 x \, e^{-2\kappa_N x} +  \int_x^\infty dy \,|V(y)|
+  
e^{-2 \varepsilon \kappa_N  x} \int_a^{(1-\varepsilon)x} dy\, |V(y)| \\
&+  \int_{(1-\varepsilon) x}^x dy \, e^{-2 \kappa_N (x-y)}\,|V(y)|.
\end{split}
\end{equation}

\item[\text{\rm(c)}]  If the unperturbed potential $V$ belongs  $L^1_{1+\epsilon}(\mathbb R^+)$ for some fixed $\epsilon \ge 0,$ then 
the perturbed potential $\tilde V$ belongs to $L^1_\epsilon(\mathbb R^+).$

\item[\text{\rm(d)}] Assume that the unperturbed potential $V$ is further restricted to satisfy
\begin{equation}
\label{8.20}
|V(x)| \le c\,e^{-\alpha x}, \qquad x \ge x_0,
\end{equation}
for some positive constants $\alpha$ and $x_0,$ where $c$
denotes a generic constant.
Then, for every $0< \varepsilon <1,$ the potential increment
$\tilde V-V$ has the asymptotic behavior as $x\to+\infty$ given by
\begin{equation}
\label{8.21}
\tilde V(x) -V(x) =
\begin{cases}
O\left( e^{-\alpha x}+ e^{-2  \varepsilon \kappa_N x}\right), \qquad \alpha\le 2  \kappa_N, \\
\noalign{\medskip}
O\left( e^{-2 \varepsilon \kappa_N x}\right),  \qquad \alpha > 2 \kappa_N.
\end{cases}
\end{equation}

\item[\text{\rm(e)}]
If the unperturbed potential $V$ has compact support, then the perturbed potential $\tilde V$ has the asymptotic behavior
\begin{equation}
\label{8.22}
 \tilde V(x)= O\left( x\, e^{-2\kappa_N x}\right), \qquad x \to +\infty.
 \end{equation}

\end{enumerate}

We remark that the estimates given in \eqref{8.18}--\eqref{8.22} are not necessarily sharp.

We refer the reader to \cite{AW2025} for the construction of the remaining relevant quantities
under the further sufficiency assumption that
the unperturbed potential $V$ belongs to $L^1_2(\mathbb R^+),$  
We provide the results from \cite{AW2025} as follows. 
In analogy with \eqref{7.20}, the unperturbed Jost matrix $J(k)$ is transformed into the perturbed Jost matrix as
 \begin{equation}
\label{8.23}
\tilde J(k)=\left[I-\ds\frac{2i\kappa_N}{k+i\kappa_N}\,\tilde P_{N\text{\rm{i}}}\right] J(k),\qquad
k\in\overline{\mathbb C^+},
\end{equation}
where we have defined
\begin{equation}
\label{8.24}
\tilde P_{N\text{\rm{i}}}:= L_{N\text{\rm{i}}} \, \tilde C_{N\text{\rm{i}}} \left(\tilde C_{N\text{\rm{i}}}\,L_{N\text{\rm{i}}}^\dagger \,
L_{N\text{\rm{i}}}\,\tilde C_{N\text{\rm{i}}}\right)^+ \tilde C_{N\text{\rm{i}}} \,L_{N\text{\rm{i}}}^\dagger,
\end{equation}
with $L_{N\text{\rm{i}}}$ being the $n\times n$ matrix appearing 
 in \eqref{8.17}.
 The orthogonal projection  $\tilde Q_{N\text{\rm{i}}}$ 
appearing in \eqref{8.2} projects into a subspace of dimension $m_{N\text{\rm{i}}}$ of the kernel of $\tilde J(i\kappa_N).$
  Moreover, $\tilde P_{N\text{\rm{i}}}$ is an orthogonal projection with the rank $m_{N\text{\rm{i}}},$ where we recall
that $m_{N\text{\rm{i}}}$ is also equal to the rank of the orthogonal projection $\tilde Q_{N\text{\rm{i}}}$ appearing in \eqref{8.2}. The quantity
$\tilde P_{N\text{\rm{i}}}$ projects onto a subspace of the kernel of $\tilde J(i\kappa_N)^\dagger,$ where that
subspace has dimension $m_{N\text{\rm{i}}}.$
With the help of \eqref{8.8}, we can write \eqref{8.24} as
\begin{equation}
\label{8.25}
\tilde P_{N\text{\rm{i}}}=\left( L_{N\text{\rm{i}}}\, \tilde Q_{N\text{\rm{i}}}\, \tilde C_{N\text{\rm{i}}}\right) \left[
\left( L_{N\text{\rm{i}}}\, \tilde Q_{N\text{\rm{i}}}\, \tilde C_{N\text{\rm{i}}}\right)^\dagger
\left(
L_{N\text{\rm{i}}}\, \tilde Q_{N\text{\rm{i}}}  \,\tilde C_{N\text{\rm{i}}}\right)\right]^+ \left( L_{N\text{\rm{i}}}\, \tilde Q_{N\text{\rm{i}}}\, \tilde C_{N\text{\rm{i}}}\right)^\dagger.
\end{equation}
Analogous to \eqref{7.24}, the perturbation changes the determinant of the Jost matrix as
 \begin{equation*}
\det[\tilde J(k)]=\left(\ds\frac{k-i\kappa_N}{k+i\kappa_N}\right)^{m_{N\text{\rm{i}}}} \det[J(k)],\qquad
k\in\overline{\mathbb C^+},\end{equation*}
which indicates that the total number of bound states including the multiplicities
is increased by $m_{N\text{\rm{i}}}.$
 Under the perturbation, the scattering matrix $S(k)$ undergoes the transformation
 \begin{equation*}
\tilde S(k)=\left[I+\ds\frac{2i \kappa_N}{k-i\kappa_N}\,\tilde P_{N\text{\rm{i}}}\right] S(k)\left[I+\ds\frac{2i\kappa_N}{k-i\kappa_N}\,\tilde P_{N\text{\rm{i}}}\right],
\qquad k\in\mathbb R,
\end{equation*}
which is analogous to \eqref{7.25},
and the determinant of the scattering matrix changes as
  \begin{equation*}
\det[\tilde S(k)]=\left(\ds\frac{k+i\kappa_N}{k-i\kappa_N}\right)^{2\, m_{N\text{\rm{i}}}} \det[S(k)],
\qquad k\in\mathbb R,
\end{equation*}
which is the analog of \eqref{7.26}.
The perturbation changes the Jost solution $f(k,x)$ into $\tilde f(k,x)$ given by
\begin{equation}
\label{8.29}
\tilde f(k,x)=\left[f(k,x)-\ds\frac{1}{k^2+\kappa_N^2}
\,\xi_{N\text{\rm{i}}}(x) \,\Omega_{N\text{\rm{i}}}(x)^+ q_{6}(x) \right] \left[I- \ds\frac{2i \kappa_N}{k+i \kappa_N} \tilde P_{N\text{\rm{i}}}\right],
\end{equation}
where we have defined
\begin{equation*}
q_{6}(x):=\xi'_{N\text{\rm{i}}}(x)^\dagger\,f(k,x)-\xi_{N\text{\rm{i}}}(x)^\dagger\,f'(k,x).
\end{equation*}
 and the singularity at $k=i\kappa_N$ appearing
 on the right-hand side of \eqref{8.29} is a removable singularity.
 We note that \eqref{8.29} is the analog of \eqref{7.27}.
Under the perturbation,
the projection matrices $Q_j$ and the Gel'fand--Levitan normalization matrices for $1\le j\le N-1$ remain unchanged, i.e. we have
 \begin{equation*}
\tilde Q_j=Q_j,\quad \tilde C_j=C_j,
\qquad 1\le j\le N-1.
\end{equation*}

Let us consider the unperturbed problem described in Section~\ref{section7} 
with $N$ bound states
by assuming that the unperturbed potential $V$ satisfies \eqref{2.2} and belongs to
$L^1_3(\mathbb R^+).$ Let us add a bound state at $k=i\tilde\kappa_{N+1}$
with multiplicity $\tilde m_{N+1}$ and with the Gel'fand--Levitan bound-state normalization matrix
$\tilde C_{N+1},$
where the spectral measure changes from $d\rho$ to $d\tilde\rho$ as
in \eqref{7.5}.
We would like to analyze the addition of the bound state in two different ways.
The first way is to add the aforementioned bound state in one step, and the second way is to add the bound state
in two steps.
The next theorem shows that, in either way, we end up with the same perturbed
quantities, i.e. each of the two ways yields the same perturbed regular solution,
boundary matrices, potential, Jost solution, Jost matrix, scattering matrix, and physical solution.
We remark that the assumption that the unperturbed potential $V$ belongs to
$L^1_3(\mathbb R^+)$ is a sufficiency assumption and is used to
ensure that the perturbed potential obtained by adding the bound state
in two steps belongs to $L^1_1(\mathbb R^+).$

\begin{theorem}
\label{theorem8.1} Consider the unperturbed problem described in Section~\ref{section7}
with $N$ bound states. Assume that 
the unperturbed potential $V$ satisfies \eqref{2.2} and belongs to
$L^1_3(\mathbb R^+).$ Let us add a bound state at $k=i\tilde\kappa_{N+1}$
with multiplicity $\tilde m_{N+1}$ and with the Gel'fand--Levitan bound-state normalization matrix
$\tilde C_{N+1}$ in such a way that
the spectral measure changes from $d\rho$ to $d\tilde\rho$ as
in \eqref{7.5}.
Let $d\rho$ be the unperturbed spectral measure, $\tilde\varphi(k,x)$ denote the corresponding perturbed
regular solution, $\tilde A$ and $\tilde B$ be the perturbed
boundary matrices, $\tilde V(x)$ be the perturbed potential, 
$\tilde f(k,x)$ be the perturbed Jost solution, $\tilde J(k)$ be the perturbed Jost matrix,
$\tilde S(k)$ be the perturbed scattering matrix, and $\tilde\Psi(k,x)$ be the perturbed physical solution.
Alternatively, consider adding the same bound state with the same multiplicity and the same
Gel'fand--Levitan normalization matrix in two steps, where the unperturbed potential
changes from $V$ to $\hat V$ in the first step and from $\hat V$ to $\hat{\hat V}$ in the second step.
Let us use $d\hat\rho,$
 $\hat\varphi(k,x),$ $\hat A,$ $\hat B,$ $\hat f(k,x),$ $\hat J(k,x),$ $\hat S(k),$ $\hat\Psi(k,x)$ to
 denote the respective perturbed quantities
after the first step and 
use $d\hat{\hat\rho},$
$\hat{\hat\varphi}(k,x),$ $\hat{\hat A},$ $\hat{\hat B},$
$\hat{\hat f}(k,x),$ $\hat{\hat J}(k,x),$ $\hat{\hat S}(k),$ $\hat{\hat\Psi}(k,x)$ to
denote the respective perturbed quantities
after the second step.
We know that $d\tilde\rho=d\hat{\hat\rho}.$
Then, the perturbed quantities $\tilde V(x),$ $\tilde\varphi(k,x),$ $\tilde A,$ $\tilde B,$
$\tilde f(k,x),$ $\tilde J(k),$
$\tilde S(k),$
 $\tilde \Psi(k)$ coincides with their respective counterparts $\hat{\hat V}(x),$
 $\hat{\hat\varphi}(k,x),$ $\hat{\hat A},$ $\hat{\hat B},$
$\hat{\hat f}(k,x),$ $\hat{\hat J}(k,x),$ $\hat{\hat S}(k),$ $\hat{\hat\Psi}(k,x).$

\end{theorem}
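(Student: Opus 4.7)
The plan is to reduce Theorem~\ref{theorem8.1} to a uniqueness statement for the matrix inverse spectral problem on the half line. The idea is that the perturbed Schr\"odinger operator is determined, up to standard freedoms, by the spectral measure $d\tilde\rho$ together with the left boundary matrix $\tilde A$. Since both the one-step procedure and the two-step composition produce the same spectral measure (by hypothesis) and the same left boundary matrix, the resulting regular solutions---and hence all associated perturbed quantities---must coincide.

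First, I would verify that the left boundary matrices agree. From \eqref{7.14} the one-step procedure yields $\tilde A = A$, while each application of the bound-state addition in the two-step composition also preserves the left boundary matrix, so $\hat A = A$ and then $\hat{\hat A} = \hat A = A$. Thus $\tilde A = \hat{\hat A} = A$. Second, I would check that the hypothesis $V \in L^1_3(\mathbb R^+)$ ensures the transformations are applicable at each stage: by part~(c) of the bound-state addition results in Section~\ref{section7}, $V \in L^1_{1+2}$ gives $\hat V \in L^1_{1+1}$, and a further application gives $\hat{\hat V} \in L^1_1$, matching the regularity of the one-step perturbed potential $\tilde V$. Third, I would invoke the uniqueness of the Gel'fand--Levitan reconstruction: given the spectral measure and the left boundary matrix, Parseval's equality determines the regular solution on $[0,+\infty),$ after which $B$ is recovered from $\varphi'(k,0)$, the potential from the Schr\"odinger equation, and in turn the Jost solution, Jost matrix, scattering matrix, and physical solution are recovered. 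Applying this to $\tilde\varphi$ and $\hat{\hat\varphi}$---both regular solutions to half-line matrix Schr\"odinger equations with $L^1_1$ potentials, sharing $\tilde A = \hat{\hat A} = A$ and $d\tilde\rho = d\hat{\hat\rho}$---forces $\tilde\varphi = \hat{\hat\varphi}$, whence the equality of all remaining perturbed quantities follows.

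The main obstacle is pinning down the appropriate uniqueness theorem for the matrix-valued inverse spectral problem with the general selfadjoint boundary condition encoded by $(A,B)$. Unlike the scalar Dirichlet setting, where uniqueness from a single spectral measure is classical, the matrix case must account for the non-Dirichlet structure of the boundary condition, the freedom of right-multiplication of $(A,B)$ by an invertible matrix $T$, and the interplay between the continuous and discrete parts of $d\rho$. One has to argue that fixing the left boundary matrix $A$ together with $d\rho$ removes any remaining ambiguity in the reconstruction. I expect this to follow either directly from a uniqueness statement in \cite{AW2025} or from the explicit structure of Parseval's equality combined with the analyticity in $k$ of the regular solution and its uniquely determined initial data.
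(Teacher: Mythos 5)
Your overall strategy---same spectral measure plus same boundary normalization forces the same operator---is the right underlying principle, and your preliminary observations (that $\tilde A=\hat{\hat A}=A$ from \eqref{7.14}, and that $V\in L^1_3$ guarantees $\hat V\in L^1_2$ and $\hat{\hat V}\in L^1_1$ via part~(c) of Section~\ref{section7}) are correct and match the paper's stated reason for the $L^1_3$ hypothesis. However, the proposal has a genuine gap at its center: the entire proof rests on a uniqueness theorem for the matrix inverse spectral problem that you explicitly acknowledge you have not pinned down. Parseval's equality by itself is a completeness relation and does not ``determine the regular solution''; what is needed is the statement that the spectral-measure increment uniquely determines the Volterra kernel $\mathcal A(x,y)$ relating the two regular solutions, and conversely that a vanishing increment forces a vanishing kernel. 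Without identifying and justifying that statement, the claim $\tilde\varphi=\hat{\hat\varphi}$---which is the whole content of the theorem---remains unproved.

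The paper closes exactly this gap constructively rather than by appeal to a black-box uniqueness result. It composes the two one-step transformations \eqref{8.32} and \eqref{8.33} into a single Volterra relation $\hat{\hat\varphi}=\varphi+\int_0^x\mathcal B(x,y)\,\varphi(k,y)\,dy$, inverts this to write $\varphi$ in terms of $\hat{\hat\varphi}$, substitutes into \eqref{4.14} to obtain $\tilde\varphi=\hat{\hat\varphi}+\int_0^x\mathcal D(x,y)\,\hat{\hat\varphi}(k,y)\,dy$, and then invokes Theorem~5.3 of \cite{AW2025} (the bijective correspondence between the transformation kernel and the spectral-measure increment) together with $d\tilde\rho=d\hat{\hat\rho}$ to conclude $\mathcal D\equiv 0$. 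Note also that the paper does not stop at $\tilde\varphi=\hat{\hat\varphi}$: it still has to extract $\tilde V=\hat{\hat V}$ from $[\tilde V-\hat{\hat V}]\,\tilde\varphi=0$, which requires the large-$k$ asymptotics \eqref{8.41}--\eqref{8.42} and the nondegeneracy condition \eqref{2.9} to rule out the degenerate alternative \eqref{8.44}; your sketch passes over this step (the regular solution has zeros in $x$ for fixed $k$, so pointwise division is not immediate). If you want to salvage your route, you should replace the appeal to an unspecified inverse spectral uniqueness theorem by the concrete kernel-composition argument, and supply the asymptotic argument that upgrades equality of regular solutions to equality of potentials and boundary matrices.
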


\begin{proof}
From Theorem~5.3 of \cite{AW2025} we know that
the perturbed regular solution
$\tilde\varphi(k,x)$ and the unperturbed regular solution
$\varphi(k,x)$ are related to each other as
in \eqref{4.14}, where $\mathcal A(x,y)$ satisfies \eqref{4.16} with
$G(x,y)$ expressed in terms of $d\tilde\rho-d\rho$ as in \eqref{4.17}.
Using the same theorem we get
 \begin{equation}
 \label{8.32}
        \hat{\varphi}(k,x)= \varphi(k,x)+ \int_0^x dy\,\hat{\mathcal A}(x, y)\, \varphi(k,y),
\end{equation}
 \begin{equation}
 \label{8.33}
 \hat{ \hat{\varphi}}(k,x)=
        \hat{ \varphi}(k,x)
        + \int_0^x dy\,
       \hat{\hat{\mathcal A}} (x, y)\,
        \hat{\varphi}(k,y),
\end{equation}
where the quantity
$\hat{\mathcal A}(x,y)$ is related to
$d\hat\rho-d\rho$ and the quantity
$\hat{\hat{\mathcal A}}(x,y)$ is related to
$d\hat{\hat\rho}-d\hat\rho$ in a similar manner as in
\eqref{4.14}, \eqref{4.16}, and \eqref{4.17}.
From \eqref{8.32} and \eqref{8.33}
we obtain
 \begin{equation}
 \label{8.34}
 \hat{ \hat{\varphi}}(k,x)=
        \varphi(k,x)
        + \int_0^x dy\,
      \mathcal B(x, y)\,
        \varphi(k,y),
\end{equation}
where the quantity $\mathcal B(x,y)$ is given by
 \begin{equation*}
\mathcal B(x,y):=  \hat{\mathcal A}(x,y)+\hat{\hat{\mathcal A}}(x,y)+\int_y^x dz\,\hat{\hat{\mathcal A}}(x,z)\,
 \hat{\mathcal A}(z,y).
\end{equation*}
As indicated in Theorem~5.3 of \cite{AW2025}, we can invert the Volterra integral equation \eqref{8.34} and
express $\varphi(k,x)$ in terms of $\hat{ \hat{\varphi}}(k,x)$ as
 \begin{equation}
 \label{8.36}
\varphi(k,x)=\hat{\hat{\varphi}}(k,x)+ \int_0^x dy\,{\mathcal C}(x, y) \, \hat{\hat{\varphi}}(k,y),
\end{equation}
for some appropriate quantity $\mathcal C(x,y)$ determined by
$d\rho-d\hat{\hat\rho}.$
Using \eqref{8.36} on the right-hand side of \eqref{4.14}, we obtain
 \begin{equation}
 \label{8.37}
\tilde{\varphi}(k,x)=\hat{\hat\varphi}(k,x)+ \int_0^x dy\,\mathcal D(x, y) \,\hat{\hat\varphi}(k,y),
\end{equation}
where we have
 \begin{equation*}
\mathcal D(x,y):= {\mathcal C}(x,y)
 +\hat{\mathcal A}(x,y)+\int_y^x {\hat{\mathcal A}}(x,z)\,
  {\mathcal C}(z,y) \,dz.
\end{equation*}
Since $d\hat{\hat\rho}= d\tilde\rho,$ using \eqref{8.37} and Theorem~5.3 of \cite{AW2025} we
obtain $\mathcal D(x,y)=0.$
Consequently, from \eqref{8.37} we get
 \begin{equation}
  \label{8.39}
\tilde\varphi(k,x)=\hat{\hat{\varphi}}(k,x),\qquad k\in\mathbb C,\quad x\in\mathbb R^+.
\end{equation}
From the analog of \eqref{4.15} we know that $\tilde\varphi(k,x)$ is a solution to
the perturbed Schr\"odinger equation with the potential
$\tilde V$ and that $\hat{\hat{\varphi}}(k,x)$ is a solution to the
the perturbed Schr\"odinger equation with the potential $\hat{\hat V}.$ By using \eqref{8.39} in those two equations
and by subtracting one equation from the other, we obtain
 \begin{equation}
  \label{8.40}
  [{\tilde V}(x)- \hat{\hat V}(x)]\,\tilde\varphi(k,x)=0,\qquad k\in\mathbb C,\quad x\in\mathbb R^+.
\end{equation}
The large $k$-asymptotics of the regular solution is known. Using (3.2.201) of \cite{AW2021} we have

 \begin{equation}
   \label{8.41}
e^{ikx}\,\tilde\varphi(k,x) = \frac{1}{2}(1+e^{2ikx}) \tilde A +\frac{1}{2} (1-e^{2ikx}) \tilde  B + O\left(\ds\frac{1}{k}\right),\qquad k\to\pm\infty,
\end{equation}
 \begin{equation}
   \label{8.42}
e^{ikx}\,\hat{\hat\varphi}(k,x) = \frac{1}{2}(1+e^{2ikx}) \hat{\hat A} +\frac{1}{2} (1-e^{2ikx}) \hat{\hat  B} + O\left(\ds\frac{1}{k}\right),\qquad k\to\pm\infty.
\end{equation}
From \eqref{8.39}, \eqref{8.41}, and \eqref{8.42} we see that
the pair of boundary matrices $(\tilde A,\tilde B)$ agrees with the pair of
boundary matrices $(\hat{\hat A},\hat{\hat B}),$ i.e. we have
 \begin{equation*}
\tilde A=\hat{\hat A}, \quad \tilde B=\hat{\hat B}.
\end{equation*}
By multiplying both sides of \eqref{8.40} and letting $k\to\pm$ and using \eqref{8.41} in the resulting equation,
we conclude that either we have 
$\tilde V(x)-\hat{\hat V}(x)\equiv 0$ 
or we have
 \begin{equation}
   \label{8.44}
\tilde A+\tilde B=0,\quad \tilde A-\tilde B=0.
\end{equation}
The latter, i.e. the two equalities in \eqref{8.44}, cannot happen because that would imply $(\tilde A,\tilde B)=(0,0).$
However, that would in turn imply that the perturbed regular solution $\tilde\varphi(k,x),$ as the solution to the Schr\"odinger equation
with the zero initial conditions at $x=0,$ would be identically zero. That cannot happen because of the analog of
\eqref{2.9} satisfied by $\tilde A$ and $\tilde B.$ Thus, we must have $\tilde V(x)\equiv\hat{\hat V}(x).$ 
From the uniqueness of the Jost solution to the Schr\"odinger equation, we then conclude that
$\tilde f(k,x)\equiv \hat{\hat f}(k,x).$
The agreement of the Jost solutions and the boundary matrices, as seen from \eqref{2.11} implies that
the Jost matrices $\tilde J(k)$ and $\hat{\hat J}(k)$ agree. 
Then, using \eqref{2.10} we conclude that
the scattering matrices $\tilde S(k)$ and $\hat{\hat S}(k)$ agree.
Finally, using \eqref{2.14} we conclude that the physical solutions
 $\tilde\Psi(k,x)$ and $\hat{\hat\Psi}(k,x)$ agree.

\end{proof}

\section{Illustrative examples of the relevant quantities}
\label{section9}

In this section we illustrate the construction and usage of various quantities relevant to the
half-line matrix-valued Schr\"odinger operator with a selfadjoint boundary condition.
In order to present the relevant quantities in their simplest form, we at times display only their numerical values
in case their exact expressions are too lengthy.

In the first example we demonstrate a way to construct some explicit examples of $2\times 2$ matrix-valued potentials on the half line and the associated
Jost solutions by starting with some scalar-valued potentials and the corresponding Jost solutions.

\begin{example}
\label{example9.1} 
\normalfont
Assume that $V_1$ and $V_2$ are two real-valued potentials used in \eqref{2.1} in the scalar case, i.e. when $n=1,$
and let us use $f_1(k,x)$ and $f_2(k,x),$ respectively, to denote the associated Jost solutions satisfying \eqref{2.10}. 
It can directly be verified that the $2\times 2$ matrix-valued potential $V$ defined
as
\begin{equation}
\label{9.1}
V (x)= \ds\frac{1}{2}\begin{bmatrix}V_1(x)+V_2(x)&V_1(x)-V_2(x)\\
\noalign{\medskip}
V_1(x)-V_2(x)&V_1(x)+V_2(x)\end{bmatrix},
 \end{equation}
 has the associated Jost solution $f(k,x)$ given by
 \begin{equation}
\label{9.2}
f(k,x)= \ds\frac{1}{2}\begin{bmatrix}f_1(k,x)+f_2(k,x)&f_1(k,x)-f_2(k,x)\\
\noalign{\medskip}
f_1(k,x)-f_2(k,x)&f_1(k,x)+f_2(k,x)\end{bmatrix}.
 \end{equation}
 Using $f(k,x)$ and $f(-k,x)$ as a fundamental set of solutions to \eqref{2.1} for $k\in\mathbb R\setminus\{0\},$ 
 we can then construct any other solution  to \eqref{2.1}.
 In the special case  $V_2(x)\equiv 0.$ 
  we have
 $f_2(k,x)\equiv e^{ikx}.$ In that case,
from \eqref{9.1} and \eqref{9.2} we obtain the corresponding potential
 and its associated Jost solution as
 \begin{equation}
\label{9.3}
V (x)= \ds\frac{1}{2}\begin{bmatrix}V_1(x)&V_1(x)\\
\noalign{\medskip}
V_1(x)&V_1(x)\end{bmatrix},\quad 
f(k,x)= \ds\frac{1}{2}\begin{bmatrix}f_1(k,x)+e^{ikx}&f_1(x)-e^{ikx}\\
\noalign{\medskip}
f_1(k,x)-e^{ikx}&f_1(k,x)+e^{ikx}\end{bmatrix}.
 \end{equation}
 In a similar way, we can construct  $2\times 2$ matrix-valued selfadjoint potentials
 that are not real valued. In that case, instead of using \eqref{9.1} we can let
 \begin{equation*}
V (x)= \ds\frac{1}{2}\begin{bmatrix}V_1(x)+V_2(x)&i\,V_1(x)-i\,V_2(x)\\
\noalign{\medskip}
-i\,V_1(x)+i\,V_2(x)&V_1(x)+V_2(x)\end{bmatrix},
 \end{equation*}
and use the corresponding Jost solution $f(k,x)$ given by
 \begin{equation*}
f(k,x)= \ds\frac{1}{2}\begin{bmatrix}f_1(k,x)+f_2(k,x)&i\,f_1(k,x)-i\,f_2(k,x)\\
\noalign{\medskip}
-i\,f_1(k,x)+i\,f_2(k,x)&f_1(k,x)+f_2(k,x)\end{bmatrix}.
 \end{equation*}
 This procedure also yields $n\times n$ matrix valued potentials and their
 associated Jost solutions 
 by using $n$ scalar-valued potentials $V_j$ and their associated Jost solutions
 $f_j(k,x),$ where $1\le j\le n.$ We refer the reader to Section~6.2 of
 \cite{AW2021} for the details of such constructions as well as
 some other ways of producing matrix-valued potentials and their
 associated Jost solutions.
 
 \end{example}

 In the next example, we present some particular scalar-valued potentials
 and their associated Jost solutions, which can be used as the starting elements
 to construct the corresponding matrix-valued quantities by the method of Example~\ref{example9.1}.

 \begin{example}
\label{example9.2} 
\normalfont
We present the scalar-valued potential $V_1$ and its associated Jost solution $f_1(k,x)$ given by
\begin{equation}
\label{9.6}
V_1 (x)= -\ds\frac{8\alpha\epsilon\beta^2\,e^{2\beta x}}{(\alpha+\epsilon\,e^{2\beta x})^2},\quad 
f_1(k,x)=e^{ikx}\left[1-\ds\frac{2i\alpha\beta}{(k+i\beta)(\alpha+\epsilon\,e^{2\beta x})}\right],\qquad x\in\mathbb R^+,
 \end{equation}
where $\alpha,$ $\epsilon,$ and $\beta$ are positive parameters. Even though we use three
parameters in \eqref{9.6}, there are actually two independent parameters given by
$\alpha/\epsilon$ and $\beta.$
Next, we introduce
the scalar-valued potential $V_2$ and its associated Jost solution $f_2(k,x)$ given by
 \begin{equation*}
V_2(x)= \ds\frac{2}{(x+a)^2},\quad 
f_2(k,x)=e^{ikx}\left[1+\ds\frac{i}{k(x+a)}\right],\qquad x\in\mathbb R^+,
 \end{equation*}
where $a$ is a positive parameter. With the help of \eqref{2.4} and \eqref{2.5} we observe that
$V_2$ is integrable but does not belong to $L_1^1(\mathbb R^+).$

\end{example}

 In the next example, starting with a $2\times 2$ matrix-valued potential $V$ and a $2\times 2$ boundary matrices $A$ and $B,$ 
 we illustrate the derivation of the corresponding relevant particular solutions, Jost matrix, and scattering matrix. In some of the computations we have used 
 the symbolic math software Mathematica to
do symbolic computations. For the convenience of displaying some of
the results in their simplest forms, we use an overbar on a digit
to indicate a roundoff in that digit.
 
 \begin{example}
\label{example9.3} 
\normalfont
As the boundary matrices $A$ and $B$ appearing in \eqref{2.6} and \eqref{2.7}, let us use
\begin{equation}
\label{9.8}
A=\begin{bmatrix} 1&2\\
0&3\end{bmatrix},\quad B=\begin{bmatrix} 4&20\\
4&-2\end{bmatrix},\end{equation}
which satisfy \eqref{2.7} and \eqref{2.8}. 
The positivity of
the matrix given in \eqref{2.8} can be verified by the fact that
we have
\begin{equation*}
A^\dagger A+B^\dagger B=\begin{bmatrix} 33&74\\
74&417\end{bmatrix},\end{equation*}
and the eigenvalues of the matrix on the right-hand side of \eqref{9.8} are
evaluated as $430.76\overline{7}$ and $19.233\overline{1},$
which are both positive. 
We choose $V_2(x)\equiv 0$ and construct $V_1(x)$ by using the parameter values $\alpha=2,$ $\epsilon=1,$ and $\beta=1/3$ in 
the first equality of \eqref{9.6}.
Using the resulting potential $V_1(x)$ in the first equality of \eqref{9.3}, we obtain the $2\times 2$ matrix-valued potential $V$ given by
\begin{equation}
\label{9.10}
V(x)=-\ds\frac{8\,e^{2x/3}}{9(2+e^{2x/3})^2}\begin{bmatrix}  1 & 1 \\
\noalign{\medskip}
 1& 1\end{bmatrix},\qquad x\in\mathbb R^+,\end{equation}
 which satisfies \eqref{2.2} and \eqref{2.4}.
 Using the corresponding parameter
 values 
  $\alpha=2,$ $\epsilon=1,$ and $\beta=1/3$ in 
  the second equality of \eqref{9.6}, we construct
  $f_1(k,x).$
 Then, with the help of the second equality of \eqref{9.3}, we obtain the 
 associated Jost solution satisfying \eqref{2.1} and \eqref{2.10} as
 \begin{equation}
\label{9.11}
f(k,x)=e^{ikx}\begin{bmatrix} 1-\ds\frac{2 i}{(3 k+i)(2+e^{2x/3})}&-\ds\frac{2 i}{(3 k+i)(2+e^{2x/3})}
  \\
\noalign{\medskip}
 -\ds\frac{2 i}{(3 k+i)(2+e^{2x/3})}
 &1-\ds\frac{2 i}{(3 k+i)(2+e^{2x/3})}
 \end{bmatrix},\qquad x\in\mathbb R^+.\end{equation}
 From \eqref{9.11} we get
\begin{equation}
\label{9.12}
f(k,0)=\begin{bmatrix}
 \ds\frac{9k+i}{3(3k+i)}&-\ds\frac{2i}{3(3k+i)}\\
\noalign{\medskip}
 -\ds\frac{2i}{3(3k+i)}&\ds\frac{9k+i}{3(3k+i)}\end{bmatrix},\end{equation}
\begin{equation}
\label{9.13}
f'(k,0)=\begin{bmatrix}
 \ds\frac{81ik^2-9k+4i}{27(3k+i)}&\ds\frac{18k+4i}{27(3k+i)}\\
\noalign{\medskip}
 \ds\frac{18k+4i}{27(3k+i)}&\ds\frac{81ik^2-9k+4i}{27(3k+i)}\end{bmatrix}.\end{equation}
Using \eqref{9.8}, \eqref{9.12}, and \eqref{9.13} in \eqref{2.11},
we construct the corresponding Jost matrix $J(k)$ as
\begin{equation}
\label{9.14}
J(k)=\begin{bmatrix} \ds\frac{-81ik^2+333k-40i}{27(3k+ i)}&\ds\frac{-162 ik^2+1584 k+196i}{27(3k+ i)}\\
\noalign{\medskip}
 \ds\frac{306k-40i}{27(3k+ i)}&\ds\frac{-243ik^2-171k-398i}{27(3k+ i)}\end{bmatrix},\qquad k\in\mathbb R.\end{equation}
From \eqref{9.14} we obtain
\begin{equation*}
J(0)=\ds\frac{1}{27}\begin{bmatrix}
 -40&196\\
 \noalign{\medskip}
 -40& -398\end{bmatrix},\end{equation*}
 with the eigenvalues of $J(0)$ given by
 $-13.872\overline{8}$ and $-2.3493\overline{8}.$
 Note also that the determinant of $J(0)$ is given by
\begin{equation*}
\det[J(0)]=\ds\frac{880}{27},\end{equation*}
 which indicates that $J(k)$ is invertible at $k=0.$ Since $J(0)$ is invertible, this corresponds to
 a generic case \cite{AW2021}.
 The adjoint of the Jost matrix in \eqref{9.14} is evaluated as
\begin{equation*}
J(k)^\dagger=\begin{bmatrix} \ds\frac{81ik^2+333k+40i}{27(3k-i)}&\ds\frac{306k+40i}{27(3k-i)}\\
\noalign{\medskip}
\ds\frac{162 ik^2+1584 k-196i}{27(3k-i)}&\ds\frac{243ik^2-171k+398i}{27(3k-i)}\end{bmatrix},\qquad k\in\mathbb R.\end{equation*}
Next, we evaluate the scattering matrix $S(k).$ From \eqref{9.14} we
have
\begin{equation}
\label{9.18}
J(-k)=
\begin{bmatrix}\ds\frac{81ik^2+333k+40i}{27(3k-i)}&\ds\frac{162 ik^2+1584 k-196i}{27(3k-i)}\\
\noalign{\medskip}
 \ds\frac{306k+40i}{27(3k-i)}&\ds\frac{243ik^2-171k+398i}{27(3k-i)}\end{bmatrix},\qquad k\in\mathbb R.\end{equation}
Hence, using \eqref{9.14} and \eqref{9.18} in \eqref{2.12}
we obtain the scattering matrix
as
\begin{equation}
\label{9.19}
S(k)=\begin{bmatrix}
s_{11}(k)& s_{12}(k)\\
s_{12}(k)& s_{11}(k)\end{bmatrix},\end{equation}
where we have defined
\begin{equation}
\label{9.20}
s_{11}(k):=  \ds\frac{729\, k^4 - 5346\,i\, k^3- 18747 \,k^2
- 594 \,i\, k   +880  }{(3k-i)(243 k^3+ 135 \,i\, k^2 +7170 \,k  -880\,i )},
\end{equation}
\begin{equation}
\label{9.21}
s_{12}(k):=  \ds\frac{-12ik(459\,k^2+794)}{(3k-i)(243 k^3+ 135
\,i\, k^2 +7170 \,k  -880\,i )}.
\end{equation}
Using \eqref{9.19}--\eqref{9.21}, it can be verified directly that the scattering matrix $S(k)$ satisfies
\begin{equation*}
S(-k)=S(k)^\dagger=S(k)^{-1},\qquad k\in\mathbb R,
\end{equation*}
confirming (2.5.4) of \cite{AW2021}.
From \eqref{9.19}--\eqref{9.21}, we obtain the large $k$-asymptotics of the scattering matrix as
\begin{equation*}
S(k)=\begin{bmatrix}1&0\\
0&1\end{bmatrix}
+\ds\frac{1}{9ik}\begin{bmatrix}68&68\\
68&-64\end{bmatrix}
+O\left(\ds\frac{1}{k^2}\right),\qquad k\to\pm\infty,\end{equation*}
and we also get the small $k$-asymptotics of $S(k)$ as
\begin{equation*}
S(k)=\begin{bmatrix}-1&0\\
0&-1\end{bmatrix}+O(k),\qquad k\to 0 \text{\rm{ in }} \mathbb R.\end{equation*}
The physical solution $\Psi(k,x)$ is obtained by using \eqref{9.11} and \eqref{9.19}--\eqref{9.21}
in \eqref{2.13}, but it is too lengthy to display here.
Using $x=0,$ from the explicitly evaluated physical solution $\Psi(k,x)$ and its $x$-derivative
$\Psi'(k,x),$ we have
\begin{equation}
\label{9.25}
\Psi(k,0)=\ds\frac{1}{q_{7}(k)}
\begin{bmatrix}
6k(81\,k^2-261 \,ik+106)& -12 i k(153\,k+46\,i)\\
\noalign{\medskip}
-12 i k(153\,k-20\,i)& 6k(81\,k^2+333 \,ik+40)\end{bmatrix},\end{equation}
\begin{equation}
\label{9.26}
\Psi'(k,0)=\ds\frac{1}{q_{7}(k)}
\begin{bmatrix}
72k(27\,k^2-189 \,ik+22)& 72k(27\,k^2+9\,i\,k+44)\\
\noalign{\medskip}
8\, k(243\,k^2-18\,ik+418)& -4k(405\,k^2+3501 \,ik-352)\end{bmatrix},\end{equation}
where we have defined
\begin{equation}
\label{9.27}
q_{7}(k):=243 k^3+ 135 \,i\, k^2 +7170 \,k  -880\,i.\end{equation}
Using \eqref{9.8} and \eqref{9.25}--\eqref{9.27} on the left-hand side of \eqref{2.7} 
we confirm that the physical solution $\Psi(k,x)$ satisfies the 
boundary condition given in \eqref{2.7}.
Next, we evaluate the regular solution $\varphi(k,x)$ by using \eqref{2.14} with input from the physical solution 
$\Psi(k,x)$ and the Jost matrix $J(k).$ We also alternatively
 evaluate the regular solution by using
\eqref{9.2}, \eqref{9.14}, and \eqref{9.18} in \eqref{2.15}. We confirm that we get
the same expression for $\varphi(k,x)$ from either evaluation. We get

\begin{equation}
\label{9.28}
\varphi(k,x)=\ds\frac{1}{q_{8}(k,x)}
\begin{bmatrix}q_{9}(k,x)-q_{9}(-k,x)&q_{10}(k,x)-q_{10}(-k,x)\\
\noalign{\medskip}
q_{11}(k,x)-q_{11}(-k,x)&q_{12}(k,x)-q_{12}(-k,x)\end{bmatrix},\end{equation}
where we have defined
\begin{equation}
\label{9.29}
q_{8}(k,x):=54\,k(3k+i)(3k-i)(2+e^{2x/3}),\end{equation}
\begin{equation}
\label{9.30}
\begin{split}
q_{9}(k,x)
:=&e^{ikx}\left(-80i-372k-1998ik^2+486k^3\right)\\
&+e^{ikx+2x/3}\left(40i+453k-918ik^2+243k^3\right),
\end{split}
\end{equation}
\begin{equation}
\label{9.31}
\begin{split}
q_{10}(k,x)
:=&e^{ikx}\left(-796i-834k-9990ik^2+972k^3\right)\\
&+e^{ikx+2x/3}\left(-196i+996k-4590ik^2+486k^3\right),
\end{split}
\end{equation}
\begin{equation}
\label{9.32}
q_{11}(k,x)
:=e^{ikx}\left(-80i-426k-1998ik^2\right)\\
+e^{ikx+2x/3}\left(40i+426k-918ik^2\right),
\end{equation}
\begin{equation}
\label{9.33}
\begin{split}
q_{12}(k,x)
:=&e^{ikx}\left(392i-780k+702ik^2+1458k^3\right)\\
&+e^{ikx+2x/3}\left(398i+1023k+756ik^2+729k^3\right).
\end{split}
\end{equation}
By evaluating $\varphi(k,0)$ and $\varphi'(k,0)$ and using
\eqref{9.8}, we observe that the regular solution satisfies the initial conditions \eqref{2.6}
with the boundary matrices $A$ and $B$ appearing in \eqref{9.8}.
As $x\to+\infty$ in \eqref{9.28}, with the help of \eqref{9.29}--\eqref{9.33} we observe that,
unless $k$ is real,
each of the four entries of
$\varphi(k,x)$ blows up exponentially as
$x\to+\infty.$ 
We remark that the right-hand side of \eqref{9.28} is indeed entire in $k.$
The three zeros of $q_{8}(k,x)$ in $k$
all correspond to removable singularities for the right-hand side
of \eqref{9.28}.

\end{example}

Next, we consider the bound states associated with
the potential $V$ in \eqref{9.10} and
the boundary matrices $A$ and $B$ appearing in \eqref{9.8}.
The matrix case is naturally more complicated than the scalar case.
 This is due to the fact that the bound states are simple in the scalar case whereas their multiplicities must be taken into account
 in the matrix case. In the next example, we illustrate the construction of the relevant
 orthogonal projections
 used in the analysis of bound states in the matrix case.

 \begin{example}
\label{example9.4} 
\normalfont
In this example, we continue using the potential $V$ in \eqref{9.10} as well as the
boundary matrices $A$ and $B$ given in \eqref{9.8}.
Hence, this example is the continuation of Example~\ref{example9.3}.
We analyze the determinant of the Jost matrix $J(k).$
 From \eqref{9.14} we have
\begin{equation}
\label{9.34}
\det[J(k)]=\ds\frac{-243 k^3-135 i k^2-7170 k+880 i}{27(3k+i)}.\end{equation}
We observe that $\det[J(k)]$ has three simple zeros that are given by
\begin{equation}
\label{9.35}
k=i\kappa_1,\quad k=i\kappa_2,\quad k=-5.7741\overline{9}i,\end{equation}
 where we have let
\begin{equation}
\label{9.36}
\kappa_1:=5.09554\overline{8},\quad \kappa_2:=0.12308204\overline{1}.\end{equation}
From \eqref{9.34}--\eqref{9.36} we see that, among the three zeros of 
 $\det[J(k)],$ only two of them occur in $\mathbb C^+.$ In fact, those two simple zeros listed in \eqref{9.36} occur
 on the positive imaginary axis in
 $\mathbb C^+,$ as they must.
 Hence,
we have two simple bound states
 occurring at $k=i\kappa_1$ and
 $k=i\kappa_2,$ respectively.
 Let us evaluate $J(i\kappa_1)$ and $J(i\kappa_2)$ by using \eqref{9.36} in \eqref{9.14}. We get
 \begin{equation}
\label{9.37}
J(i\kappa_1)=\begin{bmatrix}
 8.5504\overline{1} & 28.365\overline{9} \\
 \noalign{\medskip}
 3.4548\overline{6} & 11.461\overline{5} \end{bmatrix},\quad
 J(i\kappa_2)=
 \begin{bmatrix}
 0.059870\overline{8} & 10.641\overline{6} \\
 \noalign{\medskip}
 -0.063211\overline{2} & -11.235\overline{3} \end{bmatrix}.\end{equation}
  From \eqref{9.37} we see that
\begin{equation*}
J(i\kappa_1)^\dagger=\begin{bmatrix}
 8.5504\overline{1} &  3.4548\overline{6}  \\
\noalign{\medskip}
28.365\overline{9} & 11.461\overline{5} \end{bmatrix},\quad
 J(i\kappa_2)^\dagger=
\begin{bmatrix}
 0.059870\overline{8} &-0.063211\overline{2}  \\
 \noalign{\medskip}
 10.641\overline{6} & -11.235\overline{3} \end{bmatrix}.\end{equation*}
Next, we determine the kernels of the matrices
  $J(i\kappa_1),$ $J(i\kappa_1)^\dagger,$ $J(i\kappa_2),$ and $J(i\kappa_2)^\dagger.$
   From \eqref{9.36} and \eqref{9.37} we observe that the two rows
   in each of those four matrices are proportional, and hence each
   of those four matrices has the rank equal to one and that
   the nullity of each of those four matrices is $2-1,$ which is equal to $1.$
   We see that each of the following four column vectors forms a basis
   for the kernels
   of $J(i\kappa_1),$ $J(i\kappa_1)^\dagger,$ $J(i\kappa_2),$ and $J(i\kappa_1)^\dagger,$
   respectively:
\begin{equation}
\label{9.39}
\begin{bmatrix} 28.365\overline{9}
\\ 
\noalign{\medskip}
-8.5504\overline{1}
\end{bmatrix}, \quad
\begin{bmatrix} 3.4548\overline{6}
\\ 
\noalign{\medskip}
-8.5504\overline{1}
\end{bmatrix}, \quad
\begin{bmatrix} 10.641\overline{6}
\\ 
\noalign{\medskip}
-0.059870\overline{8}
\end{bmatrix}, \quad
\begin{bmatrix} 0.063211\overline{2}
\\ 
\noalign{\medskip}
0.059870\overline{8}
\end{bmatrix}
.\end{equation}
Hence, normalizing the vectors in \eqref{9.39} we obtain the unit column vectors
$w_1,$ $v_1,$ $w_2,$ $v_2,$ generating $\text{\rm{Ker}}[J(i\kappa_1)],$ $\text{\rm{Ker}}[J(i\kappa_1)^\dagger],$ $\text{\rm{Ker}}[J(i\kappa_2)],$ and $\text{\rm{Ker}}[J(i\kappa_1)^\dagger],$
   respectively,
   where we have let
\begin{equation}
\label{9.40}
w_1:=\begin{bmatrix}
0.95744\overline{8}\\
\noalign{\medskip}
-0.28860\overline{6}\end{bmatrix}
,\quad
v_1:=\begin{bmatrix}
0.37463\overline{2}\\
\noalign{\medskip}
-0.92717\overline{4}\end{bmatrix}
,\end{equation}
\begin{equation}
\label{9.41}
w_2:=\begin{bmatrix}
0.99998\overline{4}\\
\noalign{\medskip}
-0.0056260\overline{3}\end{bmatrix}
,\quad
v_2:=\begin{bmatrix}
0.7260\overline{3}\\
\noalign{\medskip}
0.68766\overline{3}\end{bmatrix}.
\end{equation}
As indicated in (3.8) of \cite{AW2025}, we have
the orthogonal projection matrices $Q_1,$ $P_1, $ $Q_2,$ and $P_2$ onto
the kernels of $J(i\kappa_1),$ $J(i\kappa_1)^\dagger,$ $J(i\kappa_2),$
and $J(i\kappa_2)^\dagger,$ respectively, given by
\begin{equation}
\label{9.42}
Q_1=w_1 w_1^\dagger, \quad P_1=v_1 v_1^\dagger, \quad Q_2=w_2\, w_2^\dagger, \quad P_2=v_2\, v_2^\dagger.
\end{equation}
Using \eqref{9.40} and \eqref{9.41} in
\eqref{9.42}, we obtain the
   corresponding
   orthogonal projections
   $Q_1,$ $P_1,$ $Q_2,$ $P_2$ as
\begin{equation}
\label{9.43}
Q_1=
\begin{bmatrix}
0.91670\overline{7} & -0.27632\overline{5} \\
\noalign{\medskip} -0.27632\overline{5} & 0.083293\overline{3} \end{bmatrix},\quad
P_1=
\begin{bmatrix}
0.14034\overline{9} & -0.34734\overline{9} \\
\noalign{\medskip}
 -0.34734\overline{9} & 0.85965\overline{1}\end{bmatrix},\end{equation}
 \begin{equation}
\label{9.44}
Q_2=
\begin{bmatrix}
0.99996\overline{8} & -0.0056259\overline{4} \\
\noalign{\medskip} -0.0056259\overline{4} & 0.000031652\overline{2} \end{bmatrix},\quad
P_2=
\begin{bmatrix}
0.52711\overline{9} & 0.49926\overline{4} \\
\noalign{\medskip}
0.49926\overline{4} & 0.47288\overline{1}\end{bmatrix}.\end{equation}
 One can directly verify that the matrices appearing in
 \eqref{9.43} and \eqref{9.44} satisfy the appropriate
 equalities stated in
 \eqref{3.2}.
 One can verify that
 $P_1$ and $P_2$ do not commute, as expected in general. In fact, we have
\begin{equation*}
P_1\,P_2-P_2\,P_1=
\begin{bmatrix}
0 & -0.34028\overline{2} \\
\noalign{\medskip}
0.34028\overline{2} & 0\end{bmatrix}.\end{equation*}
Similarly, as expected in general,
$P_1$ and $Q_1$ do not commute because we have
\begin{equation*}
P_1\,Q_1-Q_1\,P_1=
\begin{bmatrix}
0 & 0.48824\overline{6} \\
\noalign{\medskip}
-0.48824\overline{6} & 0\end{bmatrix}.\end{equation*}
Likewise, $P_2$ and $Q_2,$ as expected in general, do not commute and we have
\begin{equation*}
P_2\,Q_2-Q_2\,P_2=
\begin{bmatrix}
0 & -0.49953\overline{8} \\
\noalign{\medskip}
0.49953\overline{8} & 0\end{bmatrix}.\end{equation*}
Similarly, one can verify that
$P_2$ and $Q_1$ do not commute, as expected in general, and in fact we have
\begin{equation*}
P_2\,Q_1-Q_1\,P_2=
\begin{bmatrix}
0 & -0.43108\overline{1} \\
\noalign{\medskip}
0.43108\overline{1} & 0\end{bmatrix}.\end{equation*}
One can appreciate the simplicity of the treatment of the bound states in the scalar case, in which there exists only one orthogonal
projection for each bound state, and that orthogonal projection is simply equal to the real number $1.$
In the scalar case, there is no need to make a distinction between
the kernel of $J(k)$ and the kernel of $J(k)^\dagger$ at a bound-state $k$-value.

\end{example}

In the next example, we illustrate the determination of the
Marchenko normalization matrices at the bound states.
We continue to use the potential and the boundary matrices specified 
in Examples~\ref{example9.3} and \ref{example9.4}.

\begin{example}
\label{example9.5}
\normalfont
In this example, we continue to use as input the potential $V$ specified
in \eqref{9.10} and the boundary matrices $A$ and $B$ specified in
\eqref{9.8}. We recall that the corresponding Jost solution is given in \eqref{9.11}
and we also know that there are two simple bound states at
$k=i\kappa_1$ and $k=i\kappa_2,$ respectively, as indicated in \eqref{9.35}.
Let us now evaluate the Jost solution $f(k,x)$ at $k=i\kappa_1$ and
$k=i\kappa_2,$ where
$\kappa_1$ and $\kappa_2$ are the positive constants given in \eqref{9.36}.
Using \eqref{9.36} in \eqref{9.11} we get
\begin{equation}
\label{9.49}
f(i\kappa_1,x)=e^{ -\kappa_1  x}
\begin{bmatrix} 1 -\ds\frac{0.122\overline{8}}{2 + e^{2 x/3}} &-\ds\frac{0.122\overline{8} }{
2 + e^{2 x/3}}
\\
\noalign{\medskip}
-\ds\frac{0.122\overline{8} }{
2 + e^{2 x/3}} & 1 -\ds\frac{0.122\overline{8} }{2 + e^{2 x/3}}
\end{bmatrix},\end{equation}
\begin{equation}
\label{9.50}
f(i\kappa_2,x)=e^{ -\kappa_2 x}
\begin{bmatrix} 1 -\ds\frac{ 1.4606\overline{6}}{2 + e^{2 x/3}} &-\ds\frac{1.4606\overline{6} }{
2 + e^{2 x/3}}
\\
\noalign{\medskip}
-\ds\frac{1.4606\overline{6} }{
2 + e^{2 x/3}} & 1 -\ds\frac{ 1.4606\overline{6}}{2 + e^{2 x/3}}
\end{bmatrix}.\end{equation}
 From \eqref{9.49} and \eqref{9.50} we observe that
\begin{equation*}
f(i\kappa_1,x)=e^{-\kappa_1 x}\left[I+o(1)\right],
\quad f(i\kappa_2,x)=e^{-\kappa_2 x}\left[I+o(1)\right],
\qquad x\to+\infty,\end{equation*}
which are compatible with \eqref{3.4}.
Using \eqref{9.49} and \eqref{9.50} we get
\begin{equation}
\label{9.52}
\int_0^\infty dx\,f(i\kappa_1,x)^\dagger\,
f(i\kappa_1,x)=
\begin{bmatrix}
0.090584\overline{9} & -0.0075399\overline{2} \\
\noalign{\medskip}
-0.0075399\overline{2} & 0.090584\overline{9}\end{bmatrix},\end{equation}
\begin{equation}
\label{9.53}
\int_0^\infty dx\,f(i\kappa_2,x)^\dagger\,
f(i\kappa_2,x)=
\begin{bmatrix}
2.9955\overline{7} & -1.0667\overline{6} \\
\noalign{\medskip}
-1.0667\overline{6} & 2.9955\overline{7}\end{bmatrix}.\end{equation}
Using \eqref{3.8}, with the help of \eqref{9.52}, \eqref{9.53}, and the second
equalities of \eqref{9.43} and \eqref{9.44},
we evaluate the matrices $\mathbf A_1$ and $\mathbf A_2$ defined as
\begin{equation*}
\mathbf A_1:=\int_0^\infty dx\,P_1\,f(i\kappa_1,x)^\dagger\,
f(i\kappa_1,x)\,P_1,\end{equation*}
\begin{equation*}
\mathbf A_2:=\int_0^\infty dx\,P_2\,f(i\kappa_2,x)^\dagger\,
f(i\kappa_2,x)\,P_2,\end{equation*}
and we get
\begin{equation}
\label{9.56}
\mathbf A_1=\begin{bmatrix}
0.013448\overline{6} &-0.03328\overline{4} \\
\noalign{\medskip}
-0.03328\overline{4}& 0.082374\overline{3}\end{bmatrix},\quad
\mathbf A_2=\begin{bmatrix}
1.0175\overline{4} &0.96376\overline{9} \\
\noalign{\medskip}
0.96376\overline{9}& 0.9128\overline{4}\end{bmatrix}.\end{equation}
Using \eqref{9.56} and the second
equalities of \eqref{9.43} and \eqref{9.44}, from 
\eqref{3.9} we obtain
\begin{equation}
\label{9.57}
\mathbf B_1=
\begin{bmatrix}
0.873\overline{1} &0.31406\overline{5} \\
\noalign{\medskip}
0.31406\overline{5}& 0.22272\overline{3}\end{bmatrix},\quad
\mathbf B_2=
\begin{bmatrix}
1.4904\overline{2} &0.46450\overline{5} \\
\noalign{\medskip}
0.46450\overline{5}& 1.4399\overline{6}\end{bmatrix}.\end{equation}
We observe that
$\mathbf B_1$ and $\mathbf B_2$ are hermitian. They are also positive and this can be checked
by evaluating their eigenvalues and confirming that those eigenvalues
are all positive. We recall that we mean a positive definite matrix when we say that a matrix is positive.
The eigenvalues of $\mathbf B_1$ are given by
$1$ and $0.095822\overline{9},$
and the eigenvalues of $\mathbf B_2$ are given by
$1$ and $1.9303\overline{8}.$
Since $\mathbf B_1$ and $\mathbf B_2$ are positive matrices, we can uniquely determine
$\mathbf B_1^{1/2}$ and $\mathbf B_2^{1/2}$ as positive matrices so that
\eqref{3.10} is satisfied. We get
\begin{equation*}
\mathbf B_1^{1/2}=\begin{bmatrix}
0.90309\overline{6} &0.23982\overline{6} \\
\noalign{\medskip}
0.23982\overline{6}& 0.40645\overline{6}\end{bmatrix},\quad
\mathbf B_2^{1/2}=\begin{bmatrix}
1.2052\overline{5} &0.19440\overline{4} \\
\noalign{\medskip}
0.19440\overline{4}& 1.1841\overline{3}\end{bmatrix},
\end{equation*}
which are compatible with \eqref{3.10} and \eqref{9.57}.
Because $\mathbf B_1^{1/2}$ and $\mathbf B_2^{1/2}$ are positive, they are
invertible, and their inverses are given by
\begin{equation}
\label{9.59}
\mathbf B_1^{-1/2}=\begin{bmatrix}
1.3130\overline{4} &-0.7747\overline{5} \\
\noalign{\medskip}
-0.7747\overline{5}& 2.9174\overline{2}\end{bmatrix},\quad
\mathbf B_2^{-1/2}=\begin{bmatrix}
0.85227\overline{2} &-0.13992\overline{1} \\
\noalign{\medskip}
-0.13992\overline{1}& 0.86747\overline{3}\end{bmatrix}.
\end{equation}
One can directly verify that each of
$\mathbf B_1,$ $\mathbf B_1^{1/2},$
$\mathbf B_1^{-1/2}$ is hermitian and
commutes with $P_1.$ Similarly, each of
 $\mathbf B_2,$ $\mathbf B_2^{1/2},$
$\mathbf B_2^{-1/2}$ is hermitian and
commutes with $P_2.$
One can also verify that
$\mathbf B_1$ and $\mathbf B_2$ do not commute, as expected in general, and in fact we have
\begin{equation*}
\mathbf B_1\,\mathbf B_2-\mathbf B_2\,\mathbf B_1=
 \begin{bmatrix}
0 & 0.28625\overline{5} \\
\noalign{\medskip}
-0.28625\overline{5} & 0\end{bmatrix}.\end{equation*}
Using the second equality in \eqref{9.43}, the second equality in \eqref{9.44}, and \eqref{9.59} in \eqref{3.11},
 we obtain the Marchenko normalization matrices
$M_1$ and $M_2$ as
\begin{equation}
\label{9.61}
M_1= \begin{bmatrix}
0.45339\overline{3} & -1.122\overline{1} \\
\noalign{\medskip}
-1.122\overline{1} & 2.7770\overline{7}\end{bmatrix},\quad
M_2= \begin{bmatrix}
0.37939\overline{1} & 0.35934\overline{3} \\
\noalign{\medskip}
0.35934\overline{3} & 0.34035\overline{4}\end{bmatrix}.\end{equation}
Note that $M_1$ and $M_2$ are nonnegative and have their ranks equal to one as the multiplicity of
each bound state is one. This can be verified by evaluating the eigenvalues of
$M_1$ and $M_2.$ The eigenvalues of $M_1$ are given by
$0$ and $3.2304\overline{7},$
and the eigenvalues of $M_2$ are given by
$0$ and $0.71974\overline{5},$
confirming that $M_1$ and $M_2$ are nonnegative and that each have 
the rank equal to one.

\end{example}

In the next example, we illustrate the determination of the
Marchenko normalized bound-state solutions.
We continue to use the potential and the boundary matrices 
used as input to
Examples~\ref{example9.3}--\ref{example9.5}.

\begin{example}
\label{example9.6}
\normalfont

In this example, we continue to use the potential $V$ specified in \eqref{9.10} and the
boundary matrices $A$ and $B$ appearing in \eqref{9.8}.
Using \eqref{9.49}, \eqref{9.50}, and \eqref{9.61} in \eqref{3.5} we obtain
the Marchenko normalized bound-state matrix solutions
$\Psi_1(x)$ and $\Psi_2(x)$ as
\begin{equation}
\label{9.62}
\Psi_1(x)=\ds\frac{e^{-\kappa_1 x}}{2+ e^{2x/3} }
\begin{bmatrix} 0.98890\overline{3} + 0.45339\overline{3}\, e^{2 x/3}  &
-2.4474\overline{3} - 1.122\overline{1} \, e^{2 x/3}\\
\noalign{\medskip}
-2.1620\overline{8} - 1.122\overline{1} \, e^{2 x/3} &
5.3509\overline{2} + 2.7770\overline{7}\, e^{2 x/3}\end{bmatrix},\end{equation}
\begin{equation}
\label{9.63}
\Psi_2(x)=\ds\frac{e^{-\kappa_2 x}}{2+ e^{2x/3} }
\begin{bmatrix} -0.32025\overline{5} + 0.37939\overline{1}\, e^{2 x/3} &
-0.30333\overline{1} + 0.35934\overline{3}\, e^{2 x/3}\\
\noalign{\medskip}
-0.36035\overline{2} + 0.35934\overline{3}\, e^{2 x/3}&
-0.3413\overline{1} + 0.34035\overline{4}\, e^{2 x/3}\end{bmatrix}.\end{equation}
 From \eqref{9.62} and \eqref{9.63}, we observe that
\begin{equation*}
\Psi_1(x)=
e^{-\kappa_1 x}\left(
\begin{bmatrix}  0.45339\overline{3}  &
 - 1.122\overline{1}\\
\noalign{\medskip}
- 1.122\overline{1}  &
 2.7770\overline{7}\end{bmatrix}+o(1)\right),\qquad x\to+\infty,\end{equation*}
\begin{equation*}
 \Psi_2(x)=e^{-\kappa_2 x}\left(
\begin{bmatrix} 0.37939\overline{1}  &
 0.35934\overline{3}\\
\noalign{\medskip}
 0.35934\overline{3} &
0.34035\overline{4}\end{bmatrix}+o(1)\right),\qquad x\to+\infty.\end{equation*}
From \eqref{9.62} and \eqref{9.63} we obtain
\begin{equation}
\label{9.66}
\Psi_1(0)=
\begin{bmatrix} 0.48076\overline{5}  &
-1.1898\overline{4}\\
\noalign{\medskip}
-1.0947\overline{3} &
2.7093\overline{3}\end{bmatrix},
\quad
\Psi_1'(0)=
\begin{bmatrix} -2.4558\overline{4}  &
6.0779\overline{5}\\
\noalign{\medskip}
5.5721\overline{5}  &
-13.790\overline{5}\end{bmatrix},
\end{equation}
\begin{equation}
\label{9.67}
\Psi_2(0)=
\begin{bmatrix}  0.019712\overline{1}  &
0.018670\overline{4}\\
\noalign{\medskip}-0.00033649\overline{4} &
 -0.00031871\overline{2}\end{bmatrix},
\quad
\Psi_2'(0)=
\begin{bmatrix}  0.077502\overline{5} &
0.073406\overline{9}\\
\noalign{\medskip}
0.079970\overline{1} &
0.075744\overline{2}\end{bmatrix}.
\end{equation}
Using \eqref{9.8} and \eqref{9.66} in \eqref{2.7}, we verify that
$\Psi_1(x)$ satisfies the boundary condition \eqref{2.7}.
Similarly, using \eqref{9.8} and \eqref{9.67} in \eqref{2.7} we confirm that
$\Psi_2(x)$ satisfies the boundary condition \eqref{2.7}.
Using $P_1$ given in the second equality in \eqref{9.43}, $P_2$ given in the second equality of \eqref{9.44}, 
$\Psi_1(x)$ in \eqref{9.62}, and $\Psi_2(x)$ in \eqref{9.63}, we confirm the orthonormality relations
stated in \eqref{3.6} and \eqref{3.7}, i.e. we verify that
\begin{equation*}
\ds\int_0^\infty dx\,\Psi_1(x)^\dagger\,\Psi_1(x)=P_1,\quad
\ds\int_0^\infty dx\,\Psi_2(x)^\dagger\,\Psi_2(x)=P_2,
\end{equation*}
\begin{equation}
\label{9.69}
\ds\int_0^\infty dx\,\Psi_1(x)^\dagger\,\Psi_2(x)=0,\quad
\ds\int_0^\infty dx\,\Psi_2(x)^\dagger\,\Psi_1(x)=0,
\end{equation}
where we use $0$ to denote the zero matrix on the right-hand sides in \eqref{9.69}.

\end{example}

In the next example, we illustrate the determination of the
Gel'fand--Levitan normalization matrices at the bound states.
We continue to use as input the potential $V$ and
the boundary matrices $A$ and $B$ from Examples~\ref{example9.3}--\ref{example9.6}.

\begin{example}
\label{example9.7}
\normalfont
In this example, we continue to use the potential $V$ specified in \eqref{9.10}
and the boundary matrices $A$ and $B$ given in \eqref{9.8}.
Using \eqref{2.10} and \eqref{2.15}, we see that
\begin{equation}
\label{9.70}
\varphi(k,x)=\ds\frac{e^{ikx}}{2ik}\left[I+o(1)\right]J(-k)-
\ds\frac{e^{-ikx}}{2ik}\left[I+o(1)\right] J(k),\qquad x\to+\infty,\quad k\in\mathbb C.
\end{equation}
We remark that \eqref{9.70} would normally hold only for $k\in\mathbb R$
because in general the common domain of
$J(k)$ and $J(-k)$ would be $k\in\mathbb R.$
In this example, \eqref{9.70} 
holds for $k\in\mathbb C$ because
the corresponding Jost solution $f(k,x)$ appearing in \eqref{9.11} and Jost matrix $J(k)$
appearing in \eqref{9.14} each have an extension to
$k\in\mathbb C.$
Using \eqref{9.70} we evaluate the large $x$-asymptotics of the regular solution at the bound states
with $k=i\kappa_1$ and $k=i\kappa_2$ as
\begin{equation*}
\varphi(i\kappa_1,x)=\ds\frac{e^{\kappa_1 x}}{2\kappa_1}\left[I+o(1)\right]J(i\kappa_1)
,\qquad x\to+\infty.
\end{equation*}
\begin{equation*}
\varphi(i\kappa_2,x)=
\ds\frac{e^{\kappa_2 x}}{2\kappa_2}\left[I+o(1)\right]J(i\kappa_2)
,\qquad x\to+\infty,\end{equation*}
and hence we observe that $\varphi(i\kappa_1,x)$ blows up exponentially as $x\to+\infty$ unless $J(i\kappa_1)=0$ and that
$\varphi(i\kappa_2,x)$ blows up exponentially as $x\to+\infty$ unless $J(i\kappa_2)=0.$
We note that we would have $J(i\kappa_1)=0$ in this example only when the kernel of $J(i\kappa_1)$ had the dimension $2.$
However, in this example, the bound state at $k=i\kappa_1$ is simple and hence the kernel of $J(i\kappa_1)$ has the dimension $1.$
Similarly, because the bound state at $k=i\kappa_2$ is simple, it follows that the kernel of $J(i\kappa_2)$ has the dimension $1$
and hence $J(i\kappa_2)\ne 0$ in this example.
Thus, each of $\varphi(i\kappa_1,x)$ and $\varphi(i\kappa_2,x)$ 
is exponentially growing as $x\to+\infty.$
Consequently, the 
integrals $I_1(b)$ and $I_2(b)$ defined as
\begin{equation*}
I_1(b):=\int_0^b dx\,\varphi(i\kappa_1,x)^\dagger\,\varphi(i\kappa_1,x),\quad
I_2(b):=\int_0^b dx\,\varphi(i\kappa_2,x)^\dagger\,\varphi(i\kappa_2,x),
\end{equation*}
do not converge as $b\to+\infty.$
We can illustrate this numerically by observing that we have
\begin{equation*}
I_1(1)=\begin{bmatrix} 2417.8\overline{6}& 8021.1\overline{2}\\
\noalign{\medskip}
8021.1\overline{2} & 26610.\overline{7}\end{bmatrix},\quad
I_2(1)=\begin{bmatrix}
14.975\overline{2} & 44.034\overline{3} \\
44.034\overline{3} & 176.1\overline{8}\end{bmatrix},\end{equation*}
\begin{equation*}
I_1(50)=\begin{bmatrix}
1.5916\overline{4}\times 10^{220} & 4.5391\overline{9}\times 10^{220}\\
\noalign{\medskip} 4.5391\overline{9}\times 10^{220} & 1.7517\overline{3}\times 10^{221}\end{bmatrix},\end{equation*}
\begin{equation*}
I_2(50)=\begin{bmatrix}
11391\overline{1} & 2.0007\overline{9}\times 10^{7}\\
\noalign{\medskip}  2.0007\overline{9}\times 10^{7} & 3.5556\overline{9}\times 10^{9}\end{bmatrix}.\end{equation*}
We note that we can prevent the exponential growths in
$\varphi(i\kappa_1,x)$ and $\varphi(i\kappa_2,x)$ as $x\to+\infty,$
and this can be achieved by using the matrices
$Q_1$ and $Q_2,$ respectively, appearing in the first equalities of (9.43) and (9.44).
Since $Q_1$ and $Q_2$ are the orthogonal projections onto
the kernels of $J(i\kappa_1)$ and $J(i\kappa_2),$ respectively, we have
\begin{equation}
\label{9.77}
J(i\kappa_1)\,Q_1=0,\quad
J(i\kappa_2)\,Q_2=0,\end{equation}
and hence using \eqref{9.70} with $k=i\kappa_1$ and $k=i\kappa_2,$ respectively, with the help of \eqref{9.77}
we obtain
\begin{equation*}
\varphi(i\kappa_1,x)\,Q_1=O(e^{-\kappa_1 x}),\quad
\varphi(i\kappa_2,x)\,Q_2=O(e^{-\kappa_2 x}),
\qquad x\to+\infty.\end{equation*}
Using the orthogonal projection matrices $Q_1$ and $Q_2$ from Example~\ref{example9.4}
and the regular solution $\varphi(k,x)$ constructed explicitly in Example~\ref{example9.3},
from
\eqref{3.16} we obtain
\begin{equation}
\label{9.79}
\mathbf G_1:=\int_0^\infty dx\,Q_1\,\varphi(i\kappa_1,x)^\dagger\,
\varphi(i\kappa_1,x)\,Q_1=
\begin{bmatrix}0.080478\overline{7}& -0.024258\overline{9}\\
\noalign{\medskip}
-0.024258\overline{9}&0.0073124\overline{2}\end{bmatrix},\end{equation}
\begin{equation}
\label{9.80}
\mathbf G_2:=\int_0^\infty dx\,Q_2\,\varphi(i\kappa_2,x)^\dagger\,
\varphi(i\kappa_2,x)\,Q_2=\begin{bmatrix}1326.1\overline{3}& -7.4609\overline{5} \\
\noalign{\medskip}
-7.4609\overline{5}& 0.041976\overline{2}\end{bmatrix}.\end{equation}
Using \eqref{9.79}, \eqref{9.80}, and
the first equalities of \eqref{9.43} and \eqref{9.44}, from
\eqref{3.17} we obtain
the constant positive matrices $\mathbf H_1$ and
$\mathbf H_2$ as
\begin{equation*}
\mathbf H_1=\begin{bmatrix}
0.16377\overline{2}& 0.25206\overline{6}\\
\noalign{\medskip} 
0.25206\overline{6}& 0.92401\overline{9}\end{bmatrix},
\quad
\mathbf H_2=\begin{bmatrix}
1326.1\overline{3}& -7.4553\overline{3}\\
\noalign{\medskip} 
-7.4553\overline{3}& 1.0419\overline{4}\end{bmatrix},\end{equation*}
from which we get the positive matrices $\mathbf H_1^{1/2}$ and $\mathbf H_2^{1/2}$
appearing in \eqref{3.18} as
\begin{equation*}
\mathbf H_1^{1/2}=\begin{bmatrix}
0.3549\overline{1}& 0.19445\overline{1}\\
\noalign{\medskip} 
0.19445\overline{1}& 0.94138\overline{6}\end{bmatrix},
\quad
\mathbf H_2^{1/2}=\begin{bmatrix}
36.415\overline{5}& -0.19925\overline{2}\\
\noalign{\medskip} 
-0.19925\overline{2}& 1.0011\overline{2}\end{bmatrix}.\end{equation*}
We note that $\mathbf H_1$ and $\mathbf H_2$ are hermitian and that they are positive.
Their positivity is seen from the fact that $\mathbf H_1$ has the positive eigenvalues
$1$ and $0.087791\overline{2}$ and $\mathbf H_2$ has the positive eigenvalues
$1$ and $1326.1\overline{7}.$ 
We also observe that  $\mathbf H_1^{1/2}$ and $\mathbf H_2^{1/2}$ are hermitian, and they are positive.
Their positivity is also confirmed by the fact that $\mathbf H_1^{1/2}$ has the positive eigenvalues
$1$ and $0.29629\overline{6}$ and $\mathbf H_2^{1/2}$ has the positive eigenvalues
$1$ and $36.416\overline{6}.$ 
Since $\mathbf H_1^{1/2}$ and $\mathbf H_2^{1/2}$ are positive, they are invertible and their inverses are given by
\begin{equation}
\label{9.83}
\mathbf H_1^{-1/2}=\begin{bmatrix}
3.1771\overline{8}& -0.65627\overline{4}\\
\noalign{\medskip} 
-0.65627\overline{4}& 1.1978\overline{2}\end{bmatrix},
\quad
\mathbf H_2^{-1/2}=\begin{bmatrix}
0.027490\overline{8}&0.0054714\overline{5}\\
\noalign{\medskip} 
0.0054714\overline{5}& 0.99996\overline{9}\end{bmatrix}.\end{equation}
We directly verify that each of $\mathbf H_1,$ $\mathbf H_1^{1/2},$ and $\mathbf H_1^{-1/2}$ is hermitian and commutes with 
the orthogonal projection matrix $Q_1.$
Similarly, we directly verify that each of $\mathbf H_2,$ $\mathbf H_2^{1/2},$ and $\mathbf H_2^{-1/2}$ is hermitian and commutes with $Q_2.$
With the help of $Q_1$ given in the first equality of \eqref{9.43}, $Q_2$ appearing in the first equality of \eqref{9.44}, and the
two matrices $\mathbf H_1^{-1/2}$ and $\mathbf H_2^{-1/2}$ appearing in \eqref{9.83}, from \eqref{3.19} we obtain the
Gel'fand--Levitan normalization matrices $C_1$ and $C_2$ as
\begin{equation}
\label{9.84}
C_1=\begin{bmatrix}
3.0938\overline{9}& -0.93259\overline{9}\\
\noalign{\medskip} 
-0.93259\overline{9}& 0.28111\overline{5}\end{bmatrix},
\quad
C_2=\begin{bmatrix}
0.027459\overline{1}&-0.00015448\overline{8}\\
\noalign{\medskip} 
-0.00015448\overline{8}& 8.6916\overline{8}\times10^{-7}\end{bmatrix}.\end{equation}
From \eqref{9.84} we observe that $C_1$ and $C_2$ are hermitian, 
the eigenvalues of $C_1$ are $0$ and $3.3750\overline{1},$ and the
eigenvalues of $C_2$ are
$0$ and $0.0274\overline{6}.$
Thus, we confirm that $C_1$ and $C_2$ each are nonnegative and have the rank equal to one.

\end{example}

In the next example, we illustrate the determination of the
Gel'fand--Levitan normalized bound-state solutions.
We continue to use as input the same potential $V$ and the same boundary matrices
$A$ and $B$ we have used in Examples~\ref{example9.3}--\ref{example9.7}.

\begin{example}
\label{example9.8}
\normalfont

In this example, we still use the potential $V$ given in \eqref{9.10} and
the boundary matrices $A$ and $B$ given in \eqref{9.8}.
Using the explicit expression for the regular solution
obtained in Example~\ref{example9.3},
the bound-state $k$-values listed in \eqref{9.35} and \eqref{9.36}, and
the Gel'fand--Levitan normalization matrices $C_1$ and $C_2$ listed in
\eqref{9.84}, with the help of 
\eqref{3.12} we obtain the
Gel'fand--Levitan normalized bound-state solutions
$\Phi_1(x)$ and $\Phi_2(x)$ as
\begin{equation}
\label{9.85}
\Phi_1(x)=\ds\frac{e^{-\kappa_1 x}}{2+ e^{2x/3} }
\begin{bmatrix} 2.5273\overline{4} + 1.1587\overline{4}\, e^{2 x/3}  &
-0.76182\overline{3}-0.34928\overline{1}\, e^{2 x/3}\\
\noalign{\medskip}
-5.5256\overline{4}-2.8677\overline{5}\, e^{2 x/3} &
1.6656\overline{1}+0.86443\overline{3}\, e^{2 x/3}\end{bmatrix},\end{equation}
\begin{equation}
\label{9.86}
\Phi_2(x)=\ds\frac{e^{-\kappa_2 x}}{2+ e^{2x/3} }
\begin{bmatrix} -0.44109\overline{7} + 0.52254\overline{8}\, e^{2 x/3} &
0.0024816\overline{6} - 0.0029399\overline{1}\, e^{2 x/3}\\
\noalign{\medskip}
-0.49632\overline{5} + 0.49493\overline{4}\, e^{2 x/3}&
0.0027923\overline{8} - 0.0027845\overline{6}\, e^{2 x/3}\end{bmatrix}.\end{equation}
From \eqref{9.85} and \eqref{9.86} we obtain the large $x$-asymptotics 
\begin{equation*}
\Phi_1(x)=
e^{-\kappa_1 x}\left(
\begin{bmatrix} 1.1587\overline{4}  &
-0.34928\overline{1}\\
\noalign{\medskip}
-2.8677\overline{5}  &
0.86443\overline{3}\end{bmatrix}+o(1)\right),\qquad x\to+\infty,\end{equation*}
\begin{equation*}
 \Phi_2(x)=e^{-\kappa_2 x}\left(
\begin{bmatrix} 0.52254\overline{8} &
- 0.0029399\overline{1}\\
\noalign{\medskip}
 0.49493\overline{4} &
 - 0.0027845\overline{6}\end{bmatrix}+o(1)\right),\qquad x\to+\infty.\end{equation*}
 Using $\Phi_1(x)$ given in \eqref{9.85}, $\Phi_2(x)$ given in \eqref{9.86}, and 
 the orthogonal projection matrices $Q_1$ and $Q_2$ given in 
 the first equalities of
 \eqref{9.43} and \eqref{9.44}, respectively, we confirm that the
 orthonormalizations
 stated in \eqref{3.14} and \eqref{3.15} indeed hold
 for the bound states at $k=i\kappa_1$ and $k=i\kappa_2.$
From \eqref{9.85} and \eqref{9.86} we obtain
\begin{equation}
\label{9.89}
\Phi_1(0)=
\begin{bmatrix}1.2286\overline{9}  &
-0.37036\overline{8}\\
\noalign{\medskip}
-2.797\overline{8}&
0.84334\overline{6}\end{bmatrix},
\quad
\Phi_1'(0)=
\begin{bmatrix}-6.2764\overline{1}  &
1.8919\overline{1}\\
\noalign{\medskip}
14.240\overline{8}  &
-4.2926\overline{3}\end{bmatrix},
\end{equation}
\begin{equation}
\label{9.90}
\Phi_2(0)=
\begin{bmatrix} 0.027150\overline{1} &
-0.0001527\overline{5}\\
\noalign{\medskip}-0.00046346\overline{4} &
2.607\overline{5}\times 10^{-6}\end{bmatrix},
\quad
\Phi_2'(0)=
\begin{bmatrix}  0.10674\overline{7} &
-0.00060056\overline{9}\\
\noalign{\medskip}
0.11014\overline{5} &
-0.00061969\overline{1}\end{bmatrix}.
\end{equation}
Using the boundary matrices $A$ and $B$ appearing in \eqref{9.8} and also using the expressions
\eqref{9.89} and \eqref{9.90}, we verify that each of the Gel'fand--Levitan
normalized bound-state solutions $\Phi_1(x)$ and $\Phi_2(x)$ satisfies the boundary condition
\eqref{2.7}.
Having obtained the Marchenko normalized bound-state solutions
$\Psi_1(x)$ and $\Psi_2(x)$ and
the Gel'fand--Levitan normalized bound-state solutions
$\Phi_1(x)$ and $\Phi_2(x),$ we next determine the dependency matrices $D_1$ and $D_2.$ 
Using $\Psi_1(0)$ and $\Psi_2(0)$ listed in the first equalities
of \eqref{9.66} and \eqref{9.67},
respectively, we obtain
the Moore--Penrose inverses
$\Psi_1(0)^+$ and $\Psi_2(0)^+$
as
\begin{equation}
\label{9.91}
[\Psi_1(0)]^+=
\begin{bmatrix}0.0.047199\overline{7}  &
-0.10747\overline{6}\\
\noalign{\medskip}
-0.11681\overline{4} &
0.26599\overline{2}\end{bmatrix},\quad
[\Psi_2(0)]^+=
\begin{bmatrix}26.733\overline{1}  &
-0.45634\overline{5}\\
\noalign{\medskip}
25.320\overline{4} &
-0.4322\overline{3}\end{bmatrix}.
\end{equation}
Using $x=0$ in \eqref{3.24}, we get
\begin{equation}
\label{9.92}
D_1=[\Psi_1(0)]^+\Phi_1(0),\quad D_2=
[\Psi_2(0)]^+\Phi_2(0).
\end{equation}
Using $\Phi_1(0)$ from the first equality of \eqref{9.89}, $\Phi_2(0)$ from the first equality of
\eqref{9.90}, and
$\Psi_1(0)^+$ and $\Psi_2(0)^+$ from 
 \eqref{9.91}, with the help of \eqref{9.92}
we obtain the dependency matrix $D_1$ at the bound state $k=i\kappa_1$ and
the dependency matrix $D_2$ at the bound state $k=i\kappa_2$ as
\begin{equation}
\label{9.93}
D_1=\begin{bmatrix}0.3586\overline{9} &
-0.10812\overline{1}\\
\noalign{\medskip}
-0.88772\overline{1}&
0.26758\overline{8}\end{bmatrix},
\quad
D_2=
\begin{bmatrix}0.72601\overline{8}  &
-0.0040846\overline{6}\\
\noalign{\medskip}
0.68765\overline{2} &
-0.0038688\overline{1}\end{bmatrix}.
\end{equation}
Using $D_1$ and $D_2$ appearing in \eqref{9.93}, we verify the properties listed in \eqref{3.21} and
\eqref{3.27}--\eqref{3.30}.

\end{example}

Consider the orthogonal projection matrix $P$ satisfying the first set of equalities in \eqref{3.2}
in the special case when $n=2.$ The rank of $P$ can only be $2,$ $1,$ or $0.$ If the rank of $P$ is equal
to $2,$ then $P$ must be the $2\times 2$ identity matrix. This can be seen as follows. Since $P$ is invertible,
by multiplying $P(P - I) = 0$ on the left by $P^{-1}$ we obtain $P-I=0$ or equivalently $P = I.$ If the rank of $P$ is zero, then $P$ must be the $2\times 2$ zero matrix.
 In the next theorem we characterize all $2\times 2$ orthogonal projection matrices with rank $1.$ The result is useful in the determination of
 the projection matrices $P$ used in \eqref{7.20} and \eqref{8.23}
 in the transformation of the Jost matrix in adding a new bound state or in increasing the multiplicity of an existing bound state.
 Even though this result is not necessarily new, the elementary proof we provide yields some insight
 on the computation of such projection matrices not only in the $2\times 2$ matrix case
 but also
 when the matrix size is larger.

\begin{theorem}
\label{theorem9.9} Any $2\times 2$ rank-one projection matrix $P$ is given by
\begin{equation}
\label{9.94}
P=
\begin{bmatrix} \ds\frac{1}{2}\pm\ds\sqrt{\ds\frac{1}{4}-\beta^2-\gamma^2}&\beta+i\,\gamma\\
\beta-i\,\gamma& \ds\frac{1}{2}\mp\ds\sqrt{\ds\frac{1}{4}-\beta^2-\gamma^2}\end{bmatrix},
 \end{equation}
 where $\beta$ and $\gamma$ are two real constants satisfying $0\le \beta^2+\gamma^2\le 1/4.$
\end{theorem}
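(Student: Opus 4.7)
The plan is to start from the defining properties of an orthogonal projection, namely $P=P^\dagger$ and $P^2=P$, together with the rank condition, and then reduce everything to a pair of scalar equations in the entries of $P$.

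First I would use hermiticity to parametrize $P$. Writing
\begin{equation*}
P=\begin{bmatrix} a & b \\ c & d \end{bmatrix},
\end{equation*}
the condition $P=P^\dagger$ forces the diagonal entries to be real and $c=b^\ast$. Setting $b=\beta+i\gamma$ with $\beta,\gamma\in\mathbb R$ accounts for this, so the only remaining unknowns are the two real numbers $a$ and $d$.

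Next I would translate the conditions ``rank one'' and ``orthogonal projection'' into two scalar equations in $a$ and $d$. Since $P$ is hermitian and idempotent, its eigenvalues are $0$ and $1$; the rank being one forces the multiplicities to be one each. Equivalently, $\text{\rm{tr}}(P)=1$ and $\det(P)=0$. This yields
\begin{equation*}
a+d=1,\qquad ad-|b|^2=0,\quad\text{i.e.}\quad ad=\beta^2+\gamma^2.
\end{equation*}
Thus $a$ and $d$ are the two roots of the quadratic $t^2-t+(\beta^2+\gamma^2)=0$, which gives
\begin{equation*}
t=\frac{1}{2}\pm\sqrt{\frac{1}{4}-\beta^2-\gamma^2}.
\end{equation*}
Reality of $a$ and $d$ forces $\beta^2+\gamma^2\le 1/4$, and nonnegativity of $\beta^2+\gamma^2$ is automatic. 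Assigning the ``$+$'' root to $a$ and the ``$-$'' root to $d$ (or vice versa) accounts for the $\pm/\mp$ in the formula and produces exactly \eqref{9.94}.

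Finally, to see that the construction is also sufficient, I would substitute the expression \eqref{9.94} back into $P^2$ and verify $P^2=P$ by a direct $2\times 2$ matrix computation, using only $a+d=1$ and $ad=\beta^2+\gamma^2$; hermiticity is built in by design. The only mild obstacle is bookkeeping: one must be careful that the two choices of sign yield the two legitimate rank-one projections onto orthogonal one-dimensional subspaces of $\mathbb C^2$ (and that they degenerate to the same matrix at the boundary $\beta^2+\gamma^2=1/4$, where $P=\tfrac{1}{2}I+\beta\sigma_1-\gamma\sigma_2$ in Pauli notation). Once this is noted, the characterization is complete.
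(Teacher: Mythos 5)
Your proposal is correct, and it reaches \eqref{9.94} by a somewhat different route than the paper. The paper also begins by parametrizing $P$ as a hermitian matrix, but then expands the idempotency condition $P^2=P$ entrywise; the resulting $(1,2)$-entry equation forces a case split into ``off-diagonal entry zero'' (yielding the two diagonal projections $\mathrm{diag}(1,0)$ and $\mathrm{diag}(0,1)$) and ``$\alpha+\epsilon=1$'' (yielding the general formula via the quadratic $\alpha^2-\alpha+|p|^2=0$). You instead invoke the spectral characterization: a rank-one hermitian idempotent on $\mathbb C^2$ has eigenvalues $1$ and $0$ with multiplicity one each, hence $\operatorname{tr}P=1$ and $\det P=0$, which gives the same quadratic $t^2-t+(\beta^2+\gamma^2)=0$ directly and absorbs the diagonal cases as the instance $\beta=\gamma=0$. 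Your version is shorter and avoids the case analysis entirely; the paper's version is more elementary (it uses only matrix multiplication, no spectral facts) and makes the boundary cases explicit, which is the style the authors favor since they advertise the computation as a template for larger matrix sizes. Your sufficiency check is also sound: with $a+d=1$ and $ad=\beta^2+\gamma^2=|b|^2$ one gets $P^2=P$ either by direct multiplication or by Cayley--Hamilton, and $P\ne 0$ with $\det P=0$ pins the rank at one.
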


\begin{proof}
Consider the orthogonal projection matrix $P$ satisfying the first set of equalities in \eqref{3.2} in the special case when $n=2.$
The rank of $P$ can be either $1$ or $2.$ If the rank is equal to $2,$ then $P$ must be the $2\times 2$ identity matrix. 
This is seen as follows. Since $P$ is invertible, by multiplying
$P(P-I)=0$ on the left by $P^{-1},$ we obtain $P-I=0$ or equivalently $P=I.$
Let us now concentrate on the case where the rank of $P$ is equal to $1.$
Since
$P$ is selfadjoint, it must have the form
\begin{equation}
\label{9.95}
P=
\begin{bmatrix} \alpha&\beta+i\gamma\\
\beta-i\gamma&\epsilon\end{bmatrix},
 \end{equation}
for some real constants $\alpha,$ $\beta,$ $\gamma,$ and $\epsilon.$ The idempotent property
$P^2=P$ yields
\begin{equation*}
\begin{bmatrix} \alpha^2+\beta^2+\gamma^2&(\alpha+\epsilon)(\beta+i\gamma)\\
(\alpha+\epsilon)(\beta-i\gamma)&\epsilon^2+\beta^2+\gamma^2\end{bmatrix}=
\begin{bmatrix} \alpha&\beta+i\gamma\\
\beta-i\gamma&\epsilon\end{bmatrix},
 \end{equation*}
 which is equivalent to
 \begin{equation}
\label{9.97}
\begin{bmatrix} \alpha^2-\alpha+\beta^2+\gamma^2&(\alpha+\epsilon-1)(\beta+i\gamma)\\
(\alpha+\epsilon-1)(\beta-i\gamma)&\epsilon^2-\epsilon+\beta^2+\gamma^2\end{bmatrix}=
\begin{bmatrix} 0&0\\
0&0\end{bmatrix}.
 \end{equation}
 The inspection of the $(1,2)$-entries in \eqref{9.97} shows that
 we must have either $\beta+i\gamma=0$ or $\alpha+\epsilon=1.$
In the former case, we must have $\beta=0,$ $\gamma=0,$ and the real parameters $\alpha$ and $\epsilon$ can only take the values of $0$ or $1.$
This case yields the two possibilities for $P$ given by
\begin{equation}
\label{9.98}
P=
\begin{bmatrix}1&0\\
0&0\end{bmatrix},\quad P=
\begin{bmatrix}0&0\\
0&1\end{bmatrix}.
 \end{equation}
In the latter case, i.e. when we have
 $\alpha+\epsilon=1,$ the four constraints in \eqref{9.97} can equivalently be expressed as the single
 constraint given by
  \begin{equation}
\label{9.99}
\alpha^2-\alpha+|p|^2=0,
 \end{equation}
 where we have let $p:=\beta+i\,\gamma.$
 Solving \eqref{9.99}, we obtain the two possible values for $\alpha$ given by
  \begin{equation}
\label{9.100}
\alpha=\ds\frac{1}{2}\pm\ds\sqrt{\ds\frac{1}{4}-|p|^2},
 \end{equation}
 with the constraint
$0< 4|p|^2\le 1.$
Using \eqref{9.100} and  $\alpha+\epsilon=1$ in \eqref{9.95}, we observe that $P$ has the two possible
forms given by
\begin{equation}
\label{9.101}
P=
\begin{bmatrix} \ds\frac{1}{2}\pm\ds\sqrt{\ds\frac{1}{4}-|p|^2}&p\\
p^\ast& \ds\frac{1}{2}\mp\ds\sqrt{\ds\frac{1}{4}-|p|^2}\end{bmatrix},
 \end{equation}
The determinant of either of the two matrices appearing on the right-hand side of \eqref{9.101} is zero, and hence
those two matrices each have rank equal to one.
The four possibilities for $P$ given in
\eqref{9.98} and \eqref{9.101} cover all the possibilities
for any $2\times 2$ rank-one projection matrix $P.$
Thus, the proof is complete. 
\end{proof}

In the next theorem, in the $2\times 2$ matrix case we show how the
projection matrices $\hat P$ and $\hat{\hat P}$ 
used to add a new bound state and to increase the multiplicity of that bound state, respectively, are related to each other.
This result is useful in practical computations, and it provides some insight
into the process of increasing the multiplicity of a bound state in general.
The notation used in the theorem is similar to the notation used in Theorem~\ref{theorem8.1}.
 
\begin{theorem}
\label{theorem9.10} Let $\hat P$ be an orthogonal projection matrix used
in \eqref{7.20} in the $2\times 2$ case in order to add a new bound-state
with the multiplicity $1$ at $k=i\tilde\kappa.$ Let 
$\hat{\hat P}$ be an orthogonal projection matrix used
in \eqref{8.23} in the $2\times 2$ case in order to increase the multiplicity of
the bound-state at $k=i\tilde\kappa$ from $1$ to $2.$
Then, $\hat P$ must be of the form given on the right-hand side of \eqref{9.94} and we must have
$\hat{\hat P}=I-\hat P.$
Consequently, we also have 
$\hat{\hat P}\,\hat P=0$ and $\hat P\,\hat{\hat P}=0.$
\end{theorem}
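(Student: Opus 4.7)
The plan is to leverage Theorem~\ref{theorem8.1}, which asserts that adding a bound state with multiplicity $2$ at $k=i\tilde\kappa$ in a single step produces the same perturbed data as doing so in two steps. I would realize the two-step version as adding the bound state with multiplicity $1$ in the first step (governed by \eqref{7.20} with projection $\hat P$) and then increasing its multiplicity from $1$ to $2$ in the second step (governed by \eqref{8.23} with projection $\hat{\hat P}$). The assertion that $\hat P$ takes the form on the right-hand side of \eqref{9.94} is immediate from Theorem~\ref{theorem9.9}, since $\hat P$ is by construction a $2\times 2$ orthogonal projection of rank $1$; the nontrivial task is to pin down $\hat{\hat P}$.

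To this end, I would observe that the direct one-step addition of a bound state of multiplicity $2$ on $\mathbb C^2$ requires a rank-$2$ orthogonal projection, and the only such projection on $\mathbb C^2$ is $I$. Hence \eqref{7.20} applied directly yields $\tilde J(k)=\frac{k-i\tilde\kappa}{k+i\tilde\kappa}\,J(k)$, whereas the two-step composition of \eqref{7.20} followed by \eqref{8.23} produces
\[
\hat{\hat J}(k)=\left[I-\frac{2i\tilde\kappa}{k+i\tilde\kappa}\,\hat{\hat P}\right]\left[I-\frac{2i\tilde\kappa}{k+i\tilde\kappa}\,\hat P\right] J(k).
\]
Theorem~\ref{theorem8.1}, applied at the level of Jost matrices, identifies these two expressions. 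Writing $\lambda:=2i\tilde\kappa/(k+i\tilde\kappa)$ and cancelling $J(k)$ on the right, the identification reduces to the matrix identity $(I-\lambda\hat{\hat P})(I-\lambda\hat P)=I-\lambda I$, which must hold as $k$ varies and therefore as $\lambda$ ranges over an infinite set.

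Expanding the product on the left and equating the coefficients of $\lambda$ and $\lambda^2$ separately would give $\hat P+\hat{\hat P}=I$ and $\hat{\hat P}\hat P=0$. The first equality is exactly $\hat{\hat P}=I-\hat P$, and taking adjoints of the second (using that each projection is selfadjoint) yields $\hat P\hat{\hat P}=0$ as well. The main obstacle I anticipate is the careful justification of the use of Theorem~\ref{theorem8.1} in this precise decomposition: one must verify that the Gel'fand--Levitan normalization data in the intermediate two-step realization can be chosen so that the composition genuinely reproduces the one-step addition of multiplicity $2$ at $k=i\tilde\kappa$, so that the equality of the two Jost-matrix expressions invoked above is legitimate rather than merely suggestive.
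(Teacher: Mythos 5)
Your proposal is correct and follows essentially the same route as the paper's own proof: both invoke Theorem~\ref{theorem8.1} to equate the one-step addition of a multiplicity-$2$ bound state (whose projection must be $I$) with the two-step composition, cancel the invertible $J(k)$, and extract $\hat P+\hat{\hat P}=I$ and $\hat{\hat P}\,\hat P=0$ from the resulting identity in $2i\tilde\kappa/(k+i\tilde\kappa)$. The only cosmetic difference is that you obtain $\hat P\,\hat{\hat P}=0$ by taking adjoints while the paper multiplies $\hat{\hat P}=I-\hat P$ by $\hat P$ on the right; both are immediate.
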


\begin{proof}

Let us suppose that the unperturbed operator does not have a bound state at $k=i\tilde\kappa$ and let
$J(k)$ be the corresponding unperturbed $2\times 2$ Jost matrix. Let us add a bound state
at $k=i\tilde\kappa$ with multiplicity one, and let us use $\hat J(k)$ to denote the corresponding
Jost matrix for the perturbed operator. From \eqref{7.20} we have
\begin{equation}
\label{9.102}
\hat J(k)=\left[I-\ds\frac{2i\tilde\kappa}{k+i\tilde\kappa}\,\hat P\right]J(k),
 \end{equation}
 where $I$ is the $2\times 2$ identity matrix.
We remark that the matrix $\hat P$ appearing in \eqref{9.102} must be equal to one of the two matrices given in \eqref{9.94}
for some appropriate $\beta$ and $\gamma.$ Let us also 
increase the multiplicity of the bound state at 
$k=i\tilde\kappa$ from $1$ to $2$
by using the projection matrix $\hat{\hat P}.$ As seen from \eqref{8.23}, we then end up with the
further perturbed Jost matrix $\hat{\hat J}(k),$ which is related to $\hat J(k)$ as
\begin{equation}
\label{9.103}
\hat{\hat J}(k)=\left[I-\ds\frac{2i\tilde\kappa}{k+i\tilde\kappa}\,\hat{\hat P}\right]\hat J(k),
 \end{equation}
 where the matrix $\hat{\hat P}$ should have one of the two forms appearing on the right-hand side of \eqref{9.101}
 with some appropriate $\beta$ and $\gamma.$
From \eqref{9.102} and \eqref{9.103} it follows that
\begin{equation}
\label{9.104}
\hat{\hat J}(k)=\left[I-\ds\frac{2i\tilde\kappa}{k+i\tilde\kappa}\,\hat{\hat P}\right]\left[I-\ds\frac{2i\tilde\kappa}{k+i\tilde\kappa}\,\hat P\right]J(k),
 \end{equation}
which can be written as
\begin{equation}
\label{9.105}
\hat{\hat J}(k)=\left[I-\ds\frac{2i\tilde\kappa}{k+i\tilde\kappa}\left(\hat{\hat P}+\hat P\right)+\left(\ds\frac{2i\tilde\kappa}{k+i\tilde\kappa}\right)^2\hat{\hat P}\,\hat P\right]J(k).
 \end{equation}
If the bound state at $k=i\tilde\kappa$ with multiplicity $2$ is added to the unperturbed problem, from \eqref{7.20} with
the projection matrix equal to the $2\times 2$ identity matrix, we get
\begin{equation}
\label{9.106}
\hat{\hat J}(k)=\left[I-\ds\frac{2i\tilde\kappa}{k+i\tilde\kappa}\,I\right]J(k),
 \end{equation}
 where we have also used Theorem~\ref{theorem8.1} indicating that the addition of the bound state in one step or in two steps yields
 the same result.
Subtracting \eqref{9.106} from \eqref{9.105} we obtain
\begin{equation}
\label{9.107}
\left[\ds\frac{2i\tilde\kappa}{k+i\tilde\kappa}\,I-\ds\frac{2i\tilde\kappa}{k+i\tilde\kappa}\left(\hat{\hat P}+\hat P\right)+\left(\ds\frac{2i\tilde\kappa}{k+i\tilde\kappa}\right)^2\hat{\hat P}\,\hat P\right]J(k)=\begin{bmatrix}0&0\\
0&0\end{bmatrix}.
 \end{equation}
The Jost matrix $J(k)$ is invertible for $k\in\bCpb\setminus\{0\}$ because it corresponds to the case with no bound states. 
Multiplying \eqref{9.107} on the right by the inverse of $J(k),$ we obtain
\begin{equation}
\label{9.108}
\ds\frac{2i\tilde\kappa}{k+i\tilde\kappa}\,I-\ds\frac{2i\tilde\kappa}{k+i\tilde\kappa}\left(\hat{\hat P}+\hat P\right)+\left(\ds\frac{2i\tilde\kappa}{k+i\tilde\kappa}\right)^2\hat{\hat P}\,\hat P=\begin{bmatrix}0&0\\
0&0\end{bmatrix},\qquad
k\in\bCpb\setminus\{0\},
 \end{equation}
 from which we get
 $\hat{\hat P}+\hat P=I$ and 
  $\hat{\hat P}\,\hat P=0.$
 Thus, we have $\hat{\hat P}=I-\hat P.$ 
 Multiplying the matrix equality $\hat{\hat P}=I-\hat P$ with the matrix $\hat P$ on the right and using $\hat P^2=\hat P,$ we also get
 $\hat P\,\hat{\hat P}=0.$
\end{proof}

The matrix equalities stated in Theorem~\ref{theorem9.10}, i.e. 
\begin{equation}
\label{9.109}
\hat{\hat P}_j+\hat P_j=\begin{bmatrix}1&0\\
0&1\end{bmatrix},\quad 
\hat{\hat P}_j\,\hat P_j=\begin{bmatrix}0&0\\
0&0\end{bmatrix},
 \end{equation}
 are useful in the addition of a bound state at $k=i\tilde\kappa_j$ and increasing the multiplicity of a bound state in the 
 $2\times 2$ matrix case as follows.
It is enough to construct $\hat J(k),$ and this amounts to the determination of
 the two real constants $\beta$ and $\gamma$ and picking the correct sign in the diagonal entry on the right-hand side of
 \eqref{9.94}. That task is accomplished by requiring that the orthogonal projection matrix $\hat Q_j$ appearing
 in the second set of equalities in \eqref{3.2} is the projection onto the kernel of $\hat J(i\tilde\kappa_j).$ 
 Note that $\hat Q_j$ is the orthogonal projection matrix onto 
 the orthogonal complement of the kernel of the Gel'fand--Levitan normalization matrix
$\hat C_j$ appearing in \eqref{3.19}. Hence, 
$\hat Q_j$ can be determined from $\hat C_j.$
 from the Gel'fand--Levitan normalization matrix $\hat C_j$ appearing in \eqref{3.19}.
 In order to determine the exact form of $\hat P_j$ in
 \eqref{9.102} we simply impose the condition that $\hat J(i\tilde\kappa_j) \,\hat Q_j=0,$
 which allows us to pick the correct complex constant and the correct sign on the right-hand side in \eqref{9.101}.
The reasoning and conclusion presented in \eqref{9.102}--\eqref{9.109} in the $2\times 2$ matrix case can readily be generalized 
to the case of adding or removing a bound state with any multiplicities.
For example, we have the following results. If $\hat P$ is the $n\times n$ orthogonal projection matrix
used in \eqref{9.102} to add a new bound state at $k=i\tilde\kappa$ with multiplicity $\tilde m$
and if $\hat{\hat P}$ is the $n\times n$ orthogonal projection matrix
used in \eqref{9.103} to increase the multiplicity of the bound state at $k=i\tilde\kappa$ from $\tilde m$ to $n,$ then we get
\begin{equation*}
\hat{\hat P}=I-\hat P,\quad
\hat{\hat P}\,\hat P=0,\quad 
\hat P\,\hat{\hat P}=0,
 \end{equation*}
where we assume that $1\le \tilde m<n.$

In the next example we deal with a selfadjoint potential that belongs to $L^1(\mathbb R^+)$ but not $L_1^1(\mathbb R^+).$
For such a potential we show that the Jost solution $f(k,x),$ the physical solution $\Psi(k,x),$ the Jost matrix $J(k),$
and the scattering matrix $S(k)$ may not have continuous extensions to $k=0.$

\begin{example}
\label{example9.11} 
\normalfont
The Whittaker equation \cite{AS1977} 
\begin{equation}
\label{9.111}\ds\frac{d^2 w}{dz^2}+\left(-\ds\frac{1}{4}+\ds\frac{\kappa}{z}+\ds\frac{1/4-\mu^2}{z^2}\right)w=0,
 \end{equation}
has the two linearly independent solutions
$M_{\kappa,\mu}(z)$ and $W_{\kappa,\mu}(z),$ which are known as the Whittaker functions.
Using $\kappa=0$ and $z=-2ik(1+x)$ in \eqref{9.111}, we obtain the 
Schr\"odinger equation
\begin{equation}
\label{9.112}-\ds\frac{d^2 w}{dx^2}+V_\mu(x)\,w=k^2\,w,\qquad x\in\mathbb R^+,
\end{equation}
 where the potential $V_\mu(x)$ is given by
 \begin{equation}
\label{9.113}
V_\mu(x):=\ds\frac{\mu^2-1/4}{(1+x)^2}.
\end{equation}
 We can express $M_{0,\mu}(-2ik(1+x))$ and $W_{0,\mu}(-2ik(1+z))$
 in terms of Kummer's confluent hypergeometric
 function $M(a,b,z)$ and Tricomi's confluent hypergeometric
 function $U(a,b,z),$ respectively, as
 \begin{equation}
\label{9.114}
M_{0,\mu}(-2ik(1+x))=e^{ik(1+x)}\left[-2ik(1+x)\right]^{\mu+1/2} M(\mu+1/2,1+2\mu,-2ik(1+x)),
\end{equation}
 \begin{equation}
\label{9.115}
W_{0,\mu}(-2ik(1+x))=e^{ik(1+x)}\left[-2ik(1+x)\right]^{\mu+1/2} U(\mu+1/2,1+2\mu,-2ik(1+x)).
\end{equation}
The general solution to \eqref{9.112} is a linear combination of the two
functions appearing on the right-hand sides of \eqref{9.114} and
\eqref{9.115}, respectively. The Jost solution $f_\mu(k,x)$ satisfying \eqref{2.10} and
\eqref{9.112} is then obtained as
 \begin{equation*}
f_\mu(k,x)=e^{-ik}\,W_{0,\mu}(-2ik(1+x)),\qquad \overline{\mathbb C^+}\setminus\{0\},
\end{equation*}
or equivalently as
\begin{equation}
\label{9.117}
f_\mu(k,x)=e^{ikx}\left[-2ik(1+x)\right]^{\mu+1/2} U(\mu+1/2,1+2\mu,-2ik(1+x)),\qquad \overline{\mathbb C^+}\setminus\{0\}.
\end{equation}
The right-hand side of \eqref{9.117} can be expressed in terms of the modified Bessel function of the second kind
$K_\mu(z),$ and we have
\begin{equation}
\label{9.118}
f_\mu(k,x)=\ds\sqrt{\ds\frac{2}{\pi}}\,e^{-ik}\,\sqrt{-ik(1+x)}\,K_\mu(-ik(1+x)),\qquad \overline{\mathbb C^+}\setminus\{0\}.
\end{equation}
Let us use the Dirichlet boundary condition by choosing
the matrices $A$ and $B$ appearing in \eqref{2.7} as
$A=0$ and $B=1,$ where in the scalar case we use
a scalar to represent a $1\times 1$ matrix.
In this case, from \eqref{2.11} and \eqref{9.118} we see that
the Jost matrix $J_\mu(k)$ is equal to $f_\mu(k,0),$ i.e. we have
\begin{equation}
\label{9.119}
J_\mu(k)=\left(-2ik\right)^{\mu+1/2} U(\mu+1/2,1+2\mu,-2ik),
\end{equation}
which is equivalent to
\begin{equation}
\label{9.120}
J_\mu(k)=\ds\sqrt{\ds\frac{2}{\pi}}\,e^{-ik}\,\sqrt{-ik}\,K_\mu(-ik).\qquad \overline{\mathbb C^+}\setminus\{0\}.
\end{equation}
Using \eqref{9.120} in \eqref{2.12} and \eqref{2.13}, we obtain the corresponding
scattering matrix $S_\mu(k)$ and the physical solution $\Psi_\mu(k,x)$ as
\begin{equation}
\label{9.121}
S_\mu(k)=-\ds\frac{i\,k\,e^{2ik}\,K_\mu(ik)}{\sqrt{k^2}\,K_\mu(-ik)},\qquad \mathbb R\setminus\{0\},
\end{equation}
\begin{equation}
\label{9.122}
\Psi_\mu(k,x)=f_\mu(-k,x)+f_\mu(k,x)\, S_\mu(k),\qquad \mathbb R\setminus\{0\}.
\end{equation}
The small $k$-asymptotics of the right-hand sides of \eqref{9.118}--\eqref{9.122} are obtained
by using the small $z$-asymptotics of $K_\mu(z)$ For example, when we have
\begin{equation*}
\mu=\sqrt{\ell(\ell+1)+\ds\frac{1}{4}},
\end{equation*}
where $\ell$ is a positive integer, the corresponding potential $V_\mu(x)$ appearing in \eqref{9.113} is given by
\begin{equation}
\label{9.124}
V_\mu(x)=\ds\frac{\ell(\ell+1)}{(1+x)^2}.
\end{equation}
The Jost solution $f_\mu(k,x)$ corresponding to the potential $V_\mu(x)$ in \eqref{9.124} can be expressed in terms of the Hankel function
of the first kind $H^{(1)}_\nu(z)$ of order $\nu,$ and we have
\begin{equation}
\label{9.125}
f_\mu(k,x)=i^{\ell+1}\,\ds\sqrt{\ds\frac{\pi k}{2}(1+x)}\,H^{(1)}_{\ell+1/2}(k(1+x)),\qquad \overline{\mathbb C^+}\setminus\{0\}.
\end{equation}
The corresponding Jost matrix $J_\mu(k)$ is obtained from \eqref{9.125} as $f_\mu(k,0),$ and hence we get
\begin{equation}
\label{9.126}
J_\mu(k)=i^{\ell+1}\,\ds\sqrt{\ds\frac{\pi k}{2}}\,H^{(1)}_{\ell+1/2}(k),\qquad \overline{\mathbb C^+}\setminus\{0\}.
\end{equation}
Using \eqref{9.126} in \eqref{2.12}, we obtain the corresponding scattering matrix $S_\mu(k)$ as
\begin{equation}
\label{9.127}
S_\mu(k)=-i\, \ds\frac{H^{(1)}_{\ell+1/2}(-k)}
{H^{(1)}_{\ell+1/2}(k)},\qquad
\mathbb R\setminus\{0\}.
\end{equation}
The corresponding physical solution $\Psi_\mu(k,x)$ is obtained by using \eqref{9.125} and \eqref{9.127} on the right-hand side of \eqref{9.122}.
Using the small $k$-asymptotics of the Hankel function of the first kind on the right-hand sides of \eqref{9.125} and \eqref{9.126}, we obtain
\begin{equation}
\label{9.128}
f_\mu(k,x)=\ds\frac{1}{[k(1+x)]^\ell}
\left[ \left(\ds\frac{i}{2}\right)^\ell\,\ds\frac{(2\ell)!}{\ell!}+O(k^2)\right],\qquad k\to 0 \text{\rm{ in }}\overline{\mathbb C^+}\setminus\{0\},
\end{equation}
\begin{equation}
\label{9.129}
J_\mu(k)=\ds\frac{1}{k^\ell}
\left[ \left(\ds\frac{i}{2}\right)^\ell\,\ds\frac{(2\ell)!}{\ell!}+O(k^2)\right],\qquad k\to 0 \text{\rm{ in }}\overline{\mathbb C^+}\setminus\{0\}.
\end{equation}
From \eqref{9.128} and \eqref{9.129} we observe that the right-hand sides of \eqref{9.128} and \eqref{9.129} each have a pole
of order $\ell$ at $k=0.$ Thus, neither the zero-energy Jost solution $f_\mu(0,x)$ 
nor the zero-energy Jost matrix $J_\mu(0)$ exist.
On the other hand, using \eqref{9.129} on the right-hand side of \eqref{2.12}, we observe that 
the scattering matrix $S_\mu(k)$
given in \eqref{9.127} has a continuous extension to $k=0,$ and we
have
$S_\mu(0)=(-1)^{\ell+1}.$
The zero-energy Schr\"odinger equation with the potential $V_\mu$ in \eqref{9.124} can be written as
\begin{equation}
\label{9.130}
-\psi''+\ds\frac{\ell(\ell+1)}{(1+x)^2}\,\psi=0,\qquad x\in\mathbb R^+,
\end{equation}
and it has two linearly independent solutions given by
\begin{equation}
\label{9.131}
\psi(x)=\left(1+x\right)^{1/2+\sqrt{\ell(\ell+1)+1/4}},\quad \psi(x)=\left(1+x\right)^{1/2-\sqrt{\ell(\ell+1)+1/4}},
\qquad x\in\mathbb R^+.
\end{equation}
Any other solution to \eqref{9.130} must be a linear combination of the two solutions listed in \eqref{9.131}.
In this case, the zero-energy Jost solution $f_0(0,x)$ does not exist because we cannot construct a solution to \eqref{9.130}
with the asymptotic behavior $1+o(1)$ as $x\to+\infty.$ This is because the first solution
in \eqref{9.131} blows up and the second solution there decays as $x\to+\infty$ if $\ell$ is a positive integer.
The zero-energy regular solution $\varphi_\mu(0,x)$ can be expressed as a linear combination of the 
two solutions in \eqref{9.131} as
\begin{equation*}
\varphi_\mu(0,x)=\ds\frac{\left(1+x\right)^{1/2+\sqrt{\ell(\ell+1)+1/4}}-\left(1+x\right)^{1/2-\sqrt{\ell(\ell+1)+1/4}}}{2\,\sqrt{\ell(\ell+1)+\ds\frac{1}{4}}},
\qquad x\in\mathbb R^+.
\end{equation*}
The corresponding zero-energy physical solution $\Psi_\mu(0,x)$ does not exist in this case as the following argument shows.
Using \eqref{2.14}, we express the physical solution $\Psi_\mu(k,x)$ in terms of the regular solution $\varphi_\mu(k,x)$ and the
inverse of the Jost matrix $J_\mu(k)$ as
\begin{equation}
\label{9.133}
\Psi_\mu(k,x)=\varphi_\mu(k,x)\left[ 2ik\,J_\mu(k)^{-1}\right],
\qquad k\in\overline{\mathbb C^+}\setminus\{0\}.
\end{equation}
From \eqref{9.129} we see that
$J_\mu(k)^{-1}=O(k^\ell)$ as $k\to 0.$ Since $\varphi_\mu(k,x)$ exists for $k\in\mathbb C,$ 
we observe that the right-hand side of \eqref{9.133} has a limit as $k\to 0,$ and that limit is equal to the function
that is identically zero for all $x\ge 0.$ However, we cannot use that limit as the zero-energy physical solution
$\Psi_\mu(0,x)$ because the latter must be a nontrivial bounded solution to \eqref{9.130}.
Since we cannot construct a nontrivial bounded solution to \eqref{9.130} as a linear combination
of the two solutions listed in \eqref{9.131}, we conclude that the zero-energy physical solution $\Psi_\mu(0,x)$ in this case
does not exist. This is also compatible with the limit as $k\to 0$ for \eqref{9.122} and the fact that 
the zero-energy Jost solution $f_\mu(0,x)$ does not exist.
In general, for other $\mu$-values, the quantity
$S_\mu(k)$ does not have a continuous extension to $k=0.$ For example, when $\mu=0$ the potential $V_\mu(x)$ appearing in 
\eqref{9.113} is given by
\begin{equation}
\label{9.134}
V_0(x)=-\ds\frac{1}{4(1+x)^2},\qquad x\in\mathbb R^+.
\end{equation}
In that case, using $\mu=0$ in \eqref{9.118}--\eqref{9.121}, we obtain
the Jost solution $f_0(k,x),$ the Jost matrix $J_0(k),$ and the scattering matrix $S_0(k),$ respectively, as
\begin{equation}
\label{9.135}
f_0(k,x)=\ds\sqrt{\ds\frac{2}{\pi}}\,e^{-ik}\,\sqrt{-ik(1+x)}\,K_0(-ik(1+x)),
\qquad k\in\overline{\mathbb C^+}\setminus\{0\},
\end{equation}
\begin{equation}
\label{9.136}
J_0(k)=\ds\sqrt{\ds\frac{2}{\pi}}\,e^{-ik}\,\sqrt{-ik}\,K_0(-ik),\qquad k\in\overline{\mathbb C^+}\setminus\{0\},
\end{equation}
\begin{equation}
\label{9.137}
S_0(k)=-\ds\frac{i\,k\,e^{2ik}\,K_0(ik)}{\sqrt{k^2}\,K_0(-ik)},\qquad k\in\mathbb R\setminus\{0\}.
\end{equation}
The physical solution $\Psi_0(k,x)$ is obtained by using \eqref{9.135} and \eqref{9.137} in \eqref{2.13}, and we get
\begin{equation}
\label{9.138}
\Psi_0(k,x)=f_0(-k,x)+f_0(k,x)\, S_0(k).\qquad k\in\overline{\mathbb C^+}\setminus\{0\}.
\end{equation}
Similarly, we obtain the regular solution $\varphi_0(k,x)$  by using \eqref{9.135} and \eqref{9.136} in \eqref{2.15} and we have
\begin{equation}
\label{9.139}
\varphi_0(k,x)=\ds\frac{1}{2ik}\left[f_0(k,x)\,J_0(-k)-f_0(-k,x)\,J_0(k)\right],\qquad k\in\mathbb R\setminus\{0\}.
\end{equation}
The small $k$-asymptotics of $f_0(k,x)$ is obtained by letting $k\to 0$ on the right-hand side of \eqref{9.135}. As
$k\to 0$ in $\overline{\mathbb C^+},$ we get
\begin{equation}
\label{9.140}
f_0(k,x)=-\ds\sqrt{\ds\frac{2}{\pi}}\,\sqrt{-ik(1+x)}\left(\gamma+\log(-ik(1+x)/2)\right)+O(|k|^{3/2}),
\end{equation}
where we use $\log$ to denote the complex-valued principal logarithm and 
use $\gamma$ for the Euler--Mascheroni constant with the approximate value of $0.577.$
Similarly, from \eqref{9.136} we obtain the small $k$-asymptotics of $J_0(k)$ as
\begin{equation}
\label{9.141}
J_0(k)=-\ds\sqrt{\ds\frac{2}{\pi}}\,\sqrt{-ik}\left(\gamma+\log(-ik/2)\right)+O(|k|^{3/2}),\qquad k\to 0\text{\rm{ in }}\overline{\mathbb C^+}.
\end{equation}
From \eqref{9.141} we conclude that the Jost matrix $J_0(k),$ which is defined 
when $k\in\overline{\mathbb C^+}\setminus\{0\},$ extends continuously to $k=0$ and satisfies
$J_0(0)=0.$ On the other hand, using \eqref{9.141} in \eqref{2.12}, we see that the scattering matrix $S_0(k),$ which is defined when $k\in\mathbb R\setminus\{0\},$
has the right and left limits, obtained as $k\to 0^+$ and $k\to 0^-,$ respectively, and we have
\begin{equation}
\label{9.142}
S_0(0^+)=-i,\quad
S_0(0^-)=i.
\end{equation}
From \eqref{9.142} we conclude that $S_0(k)$ cannot be extended to $k=0$ in a continuous manner. Hence, $S_0(0)$ does not exist.
From \eqref{9.140} we observe that the limit as $k\to 0$ for the Jost solution $f_0(k,x)$ is identically zero.
Then, since $S_0(k)$ has modulus $1,$ from \eqref{9.138} we conclude that the limit as $k\to 0$ for the physical solution $\Psi_0(k,x)$
is also identically zero. In this case, we conclude that neither the Jost solution $f_0(k,x)$ nor the physical solution $\Psi_0(k,x)$ can be defined at $k=0.$
If $f_0(0,x)$ existed at $k=0,$ it would have to satisfy the asymptotics $f_0(0,x)=1+o(1)$ as $x\to +\infty.$
Similarly, if $\Psi_0(0,x)$ existed, it would have to be a nontrivial bounded solution to the zero-energy
Schr\"odinger equation.
 The zero-energy
Schr\"odinger equation with the potential $V_0(x)$ appearing in \eqref{9.134} is given by
\begin{equation}
\label{9.143}
-\psi''-\ds\frac{1}{4(1+x)^2}\,\psi=0,\qquad x\in\mathbb R^+,
\end{equation}
and it has two linearly independent solutions given by
\begin{equation}
\label{9.144}
\psi(x)=\sqrt{1+x},\quad \psi(x)=\sqrt{1+x}\,\ln(1+x),
\qquad x\in\mathbb R^+.
\end{equation}
where $\ln$ denotes the real-valued natural logarithm function.
By using a linear combination of the two solutions listed in \eqref{9.144}, it is impossible to construct a solution to \eqref{9.143} with the required asymptotic behavior as $x\to+\infty.$ 
Thus, the zero-energy Jost solution $f_0(0,x)$ does not exist and 
the question regarding the continuity at $k=0$ for the Jost solution $f_0(k,x)$ becomes irrelevant.
Similarly,
the zero-energy physical solution $\Psi_0(0,x)$ does not exist and 
the question regarding the continuity at $k=0$ for the physical solution $\Psi_0(k,x)$ becomes irrelevant.
The fact that limit as $k\to 0$ for the right-hand side of \eqref{9.140} is identically zero does not mean that
the zero-energy Jost solution $f_0(0,x)$ is given by the function that is identically zero. 
Similarly, the fact that limit as $k\to 0$ for the right-hand side of \eqref{9.138} is identically zero does not mean that
the zero-energy physical solution $\Psi_0(0,x)$ is given by the function that is identically zero.
As we have already observed, the Jost matrix $J_0(k),$ defined when $k\in\overline{\mathbb C^+}\setminus\{0\},$ 
extends continuously to $k=0$ and satisfies $J_0(0)=0.$ Since the Jost solution $f_0(k,x)$ does not exist at $k=0,$ the zero-energy Jost matrix $J_0(0)$ 
cannot be defined by using \eqref{2.11} at $k=0.$
The regular solution given in \eqref{9.139} has a limit as $k\to 0.$ That limit can be evaluated by using \eqref{9.140} and \eqref{9.141} on the right-hand side of \eqref{9.139},
and it is equal to $\sqrt{1+x}\,\ln(1+x).$
In fact, with the help of \eqref{9.144}, 
we can construct the zero-energy regular solution satisfying the initial conditions \eqref{2.6} with $A=0$ and $B=1.$ We then get
\begin{equation}
\label{9.145}
\varphi_0(0,x)=\sqrt{1+x}\,\ln(1+x),
\qquad x\in\mathbb R^+.
\end{equation}
This confirms that $\varphi_0(k,x)$ is continuous at $k=0.$
We note that a regular solution to the Schr\"odinger equation exists for
$k\in\mathbb C$ whenever the potential $V$ is locally integrable for $x\in[0,+\infty).$

\end{example}

\section{Examples of changing the discrete spectrum}
\label{section10}

In this section we provide some illustrative examples to demonstrate 
some modifications in the spectrum of the
matrix-valued Schr\"odinger
operator.  In the explicit examples provided, we start with an unperturbed  Schr\"odinger
operator for which we know the potential $V$ and the boundary matrices $A$ and $B.$
We then evaluate the corresponding perturbed  Schr\"odinger
operator and the relevant quantities when a discrete eigenvalue is added to or removed from the
spectrum. We supplement our examples with various remarks for better understanding of the methods used.
Such remarks include a correction to the result presented in
(IV.1.30) of \cite{CS1989} concerning asymptotic behavior of the potential increment when a discrete eigenvalue
is added to the spectrum of the scalar-valued Schr\"odinger operator with the Dirichlet boundary condition.

In our first example below, we illustrate the addition of a discrete eigenvalue in the scalar case 
with a non-Dirichlet boundary condition. This example illustrates that the addition of a bound state in the non-Dirichlet case 
does not necessarily change the zero unperturbed potential but it changes the unperturbed boundary condition so that
the perturbed problem has a bound state as a result of the perturbed boundary condition. This example was
treated in \cite{AU2022} by a different method.

\begin{example}
\label{example10.1} 
\normalfont
In the scalar case, i.e. when $n=1,$ let us choose the unperturbed potential $V$ and the boundary matrices $A$ and $B$ as
\begin{equation}
\label{10.1}
V (x)\equiv 0, \quad (A,B)=(1,\kappa_1),
 \end{equation}
 where $\kappa_1$ is a positive parameter and 
 we omit the matrix brackets in the $1\times 1$ matrix case.
 Using the input data set in \eqref{10.1}, 
 we evaluate the Jost solution $f(k,x),$
 the regular solution $\varphi(k,x),$ the Jost matrix $J(k),$ and the scattering matrix
 $S(k)$ as
 \begin{equation}
\label{10.2}
f(k,x)=e^{ikx},\quad \varphi(k,x)=
\cos(kx)+\ds\frac{\kappa_1}{k}\,\sin(kx),
 \end{equation}
  \begin{equation}
\label{10.3}
J(k)=-i(k+i\kappa_1),\quad
S(k)=\ds\frac{k-i\kappa_1}{k+i\kappa_1}.
 \end{equation}
 From the first equality of \eqref{10.3}, we see that $J(k)$ 
 has no zeros on the positive imaginary axis in the complex $k$-plane, and hence the 
 unperturbed problem has no bound states. To obtain our perturbed problem, 
 we add one bound state at $k=i\tilde \kappa_1$ with the Gel'fand--Levitan
 normalization matrix $\tilde C_1$ by using
 \begin{equation}
\label{10.4}
\tilde \kappa_1=\kappa_1,\quad 
\tilde C_1=\tilde c_1
 \end{equation}
 where $\tilde c_1$ is a positive constant.
 To evaluate the relevant quantities for the perturbed problem, we use the method of Section~\ref{section7}.
 From the second equality of \eqref{10.2} we get
  \begin{equation}
\label{10.5}
\varphi(i\tilde\kappa_1,x)=e^{\kappa_1 x}.
 \end{equation}
 Using \eqref{10.5} and the second equality of \eqref{10.2} in \eqref{7.7}, we get
  \begin{equation}
\label{10.6}
\xi_1(x)=\tilde c_1\,e^{\kappa_1 x}.
 \end{equation}
 Since we are in the scalar case, the orthogonal projection matrices $\tilde P_1$ and $\tilde Q_1$ 
 are both equal to $1,$
i.e. we have
 \begin{equation}
\label{10.7}
\tilde P_1=1,\quad \tilde Q=1.
 \end{equation}
Using \eqref{10.6} and \eqref{10.7} in \eqref{7.10}, we obtain
 \begin{equation}
\label{10.8}
\Omega_1(x)=1+\ds\frac{\tilde c_1^2}{2\kappa_1}\left[e^{2\kappa_1 x}-1\right].
 \end{equation}
 With the help of \eqref{10.6}, \eqref{10.8}, and the first equality of \eqref{10.1}, from \eqref{7.11} we obtain the perturbed potential $\tilde V$ as
 \begin{equation}
\label{10.9}
\tilde V(x)= \ds\frac{8 \,\tilde c_1^2 \, \kappa_1^2 \left(\tilde c_1^2-2\kappa_1\right) e^{2\kappa_1 x}}
{\left(2\kappa_1-\tilde c_1^2+\tilde c_1^2 \,e^{2\kappa_1  x}\right)^2}.
 \end{equation}
 Using the second equalities of \eqref{10.1} and \eqref{10.4} in \eqref{7.14}, we obtain
 the perturbed boundary matrices $\tilde A$ and $\tilde B$ as
 \begin{equation*}
\tilde A=1,\quad \tilde B=\kappa_1-\tilde c_1^2.
 \end{equation*}
 With the help of \eqref{10.6}, \eqref{10.8}, and the second equality of \eqref{10.2}, from \eqref{7.13} we obtain the perturbed regular solution
 $\tilde\varphi(k,x)$ as
   \begin{equation*}
\tilde\varphi(k,x)=\cos(kx)+\ds\frac{\kappa_1\,\sin(kx)}{k}\,
\ds\frac{2\kappa_1-\tilde c_1^2 -\tilde c_1^2 \,e^{2\kappa_1 x}}
{   2\kappa_1-\tilde c_1^2+\tilde c_1^2\,e^{2\kappa_1 x} }.
 \end{equation*}
 The perturbed Jost solution
 $\tilde f(k,x)$ is obtained from \eqref{7.27}, by using \eqref{10.6}, \eqref{10.8}, and the first
 equalities of \eqref{10.2}, \eqref{10.4}, and \eqref{10.7} as
 \begin{equation}
\label{10.12}
\tilde f(k,x)= e^{ikx}\left[1+\ds\frac{2\,i\,\kappa_1   \left(\tilde c_1^2-2\kappa_1\right) }{(k+i\kappa_1)
\left(2\kappa_1-\tilde c_1^2+\tilde c_1^2 \,e^{2\kappa_1  x}\right)}\right].
 \end{equation}
 The perturbed Jost matrix $\tilde J(k)$ is obtained by using the first equalities of
 \eqref{10.3}, \eqref{10.4}, and \eqref{10.7} in \eqref{7.20} as
  \begin{equation}
\label{10.13}
\tilde J(k)=-i\left(k-i\kappa_1\right),
 \end{equation}
which has a zero on the positive imaginary axis at $k=i\kappa_1,$ confirming that
the perturbed problem has a bound state at $k=i\kappa_1.$ 
 Using \eqref{10.13} in \eqref{7.25} we obtain the perturbed scattering matrix
 $\tilde S(k)$ as
   \begin{equation}
\label{10.14}
\tilde S(k)=\ds\frac{k+i\kappa_1}{k-i\kappa_1}.
 \end{equation}
From \eqref{10.9} we see that the perturbed potential becomes $\tilde V(x)\equiv 0$ if the positive parameter
$\tilde c_1$ in \eqref{10.4} is chosen as
$\tilde c_1=\sqrt{2\kappa_1}.$ In that case \eqref{10.12} yields $\tilde f(k,x)=e^{ikx}.$ For any other choices for $\tilde c_1,$ the support of
$\tilde V$ is the interval $[0,+\infty)$ and from \eqref{10.9} we get
 \begin{equation}
\label{10.15}
\tilde V(x)=O(e^{-2\kappa_1 x}),\qquad x\to+\infty.
 \end{equation}
 The asymptotic decay estimate in \eqref{10.15} is consistent with our general estimate \eqref{7.19}.

\end{example}

We remark that any example where the perturbed problem is obtained by adding a bound state to the unperturbed problem
also provides an example where a bound state is removed, and this is done by interchanging the roles of the perturbed and unperturbed problems.
In the next example we provide such an example by using the quantities in Example~\ref{example10.1} to illustrate the removal of a bound state.

\begin{example}
\label{example10.2} 
\normalfont

In the scalar case, i.e. when $n=1,$ as the unperturbed potential $V$ let us use
\begin{equation}
\label{10.16}
V (x)= \ds\frac{8 \,c_1^2 \, \kappa_1^2 \left(c_1^2-2\kappa_1\right) e^{2\kappa_1 x}}
{\left(2\kappa_1-c_1^2+c_1^2 \,e^{2\kappa_1  x}\right)^2},
 \end{equation}
 and as the unperturbed  boundary matrices $A$ and $B$ let us use
\begin{equation}
\label{10.17}
A=1,\quad B=\kappa_1-c_1^2,
 \end{equation}
 where $\kappa_1$ and $c_1$ are two positive parameters.
Corresponding to the input data set specified in \eqref{10.16} and \eqref{10.17},
with the help of \eqref{9.6}
we obtain 
 the Jost solution $f(k,x).$ 
 We then use \eqref{2.11}, \eqref{2.12}, and \eqref{2.15} to construct
 the Jost matrix $J(k),$ the scattering matrix $S(k),$ and the regular solution
$ \varphi(k,x).$ We get
 \begin{equation}
\label{10.18}
f(k,x)= e^{ikx}\left[1+\ds\frac{2\,i\,\kappa_1   \left(c_1^2-2\kappa_1\right) }{(k+i\kappa_1)
\left(2\kappa_1-c_1^2+c_1^2 \,e^{2\kappa_1  x}\right)}\right],
 \end{equation}
  \begin{equation}
\label{10.19}
J(k)=-i\left(k-i\kappa_1\right), \quad S(k)=\ds\frac{k+i\kappa_1}{k-i\kappa_1},
 \end{equation}
   \begin{equation}
\label{10.20}
\varphi(k,x)=\cos(kx)+\ds\frac{\kappa_1\,\sin(kx)}{k}\,
\ds\frac{2\kappa_1-c_1^2 -c_1^2 \,e^{2\kappa_1 x}}
{   2\kappa_1-c_1^2+c_1^2\,e^{2\kappa_1 x} }.
 \end{equation}
 From the first equality of \eqref{10.19} we observe that
 $J(k)$ has a zero on the positive imaginary axis, i.e. at $k=i\kappa_1.$
 Thus, the unperturbed problem has a bound state at 
 $k=i\kappa_1.$  In the scalar case, the orthogonal projections $P_1$ and $Q_1$
 associated with the bound state at $k=i\kappa_1$ are given by
  \begin{equation}
\label{10.21}
P_1=1,\quad Q_1=1.
 \end{equation}
 Using \eqref{10.20} we obtain
 \begin{equation}
\label{10.22}
\varphi(i\kappa_1,x)=\ds\frac{2\kappa_1\,e^{\kappa_1 x}}
{ 2\kappa_1-c_1^2+c_1^2\,e^{2\kappa_1 x} }.
 \end{equation}
Using \eqref{10.22} and the second equality of \eqref{10.21}, with the help of
\eqref{3.16}, \eqref{3.17}, and \eqref{3.19} we obtain the 
 the Gel'fand--Levitan normalization matrix,
 which is a scalar in this case, as
 \begin{equation}
\label{10.23}
C_1=c_1,
 \end{equation}
 where $c_1$ is the positive constant appearing in \eqref{10.16}--\eqref{10.20} and \eqref{10.22}.
 Using \eqref{10.22} and \eqref{10.23} in \eqref{3.12}, we obtain
 \begin{equation}
\label{10.24}
\Phi_1(x)=\ds\frac{2\,c_1\,\kappa_1\,e^{\kappa_1 x}}
{ 2\kappa_1-c_1^2+c_1^2\,e^{2\kappa_1 x} }.
 \end{equation}
 Using \eqref{10.24} in \eqref{5.5} we obtain
  \begin{equation}
\label{10.25}
W_1(x)=\ds\frac{2\,\kappa_1}
{ 2\kappa_1-c_1^2+c_1^2\,e^{2\kappa_1 x} }.
 \end{equation}
With the help of \eqref{10.16}, \eqref{10.24}, and \eqref{10.25}, from \eqref{5.8} we obtain the perturbed potential
$\tilde V(x)$ as
  \begin{equation*}
\tilde V (x)\equiv 0.
\end{equation*}
 Next, using \eqref{10.20}, \eqref{10.24}, and \eqref{10.25} in \eqref{5.7} we obtain the perturbed regular solution
 $\tilde\varphi(k,x)$ as
   \begin{equation*}
\tilde\varphi(k,x)=\cos(kx)+\ds\frac{\kappa_1}{k}\,\sin(kx).
\end{equation*}
 Similarly, using the first equalities of \eqref{10.19} and \eqref{10.21} in \eqref{5.13} we obtain the perturbed Jost matrix $\tilde J(k)$ as
 \begin{equation*}
\tilde J(k)=-i(k+i\kappa_1),
\end{equation*}
 from the zero of which at $k=-i\kappa_1$ we conclude that the perturbed operator does not have any bound states.
 The perturbed scattering matrix $\tilde S(k)$ is obtained by using 
 the second equality of \eqref{10.19} and the first equality of \eqref{10.21} in \eqref{5.15} as
  \begin{equation*}
\tilde S(k)=\ds\frac{k-i\kappa_1}{k+i\kappa_1}.
\end{equation*}
 The perturbed boundary matrices $\tilde A$ and $\tilde B$ are obtained from \eqref{5.9} by using \eqref{10.17} and \eqref{10.23}, and we have
   \begin{equation*}
\tilde A=1,\quad \tilde B=\kappa_1. 
\end{equation*}
 Finally, using \eqref{10.18}, \eqref{10.24}, \eqref{10.25}, and the first equality of \eqref{10.21}
 in \eqref{5.18} we obtain the perturbed Jost solution $\tilde f(k,x)$ as
   \begin{equation*}
\tilde f(k,x)\equiv e^{ikx}.
\end{equation*}
Thus, we confirm that the method of Section~\ref{section5} used in this example
to remove a bound state and the method of Section~\ref{section7} used in Example~\ref{example10.2} to add a bound state are complementary
in the sense that the removal undoes the addition of the bound state.

\end{example}

 In (IV.1.30) on p.~63
 of \cite{CS1989} it is claimed that, in the scalar case with the Dirichlet
 boundary condition, when we add a bound state at $k=i\kappa,$ the potential increment
 $\tilde V(x)-V(x)$ decays as $F e^{-2\kappa x}$ as $x\to+\infty,$ where $F$ is a nonzero constant.
 Our example below shows that the result (IV.1.30) 
 of \cite{CS1989} is incorrect. In order to demonstrate that that
 incorrectness is independent of the asymptotic decay of the unperturbed potential
 $V,$ we include here the result in Example~8.9 of \cite{AW2025} but by also providing
 a full complement to that result.

\begin{example}
\label{example10.3} 
\normalfont
Consider the scalar case, i.e. when $n=1,$ where the unperturbed potential is given by
\begin{equation*}
V (x)=- \ds\frac{8 \,e^{2 x}}
{\left(1+e^{2  x}\right)^2},
 \end{equation*}
 and the boundary matrices $A$ and $B,$ which are scalars in this case, are given by
\begin{equation}
\label{10.33}
A=0,\quad B=-1.
 \end{equation}
Using the boundary matrices of \eqref{10.33} in \eqref{2.7}, we observe that the
corresponding boundary condition is the Dirichlet boundary condition $\psi(0)=0.$
We apply the method of Section~\ref{section7} as detailed in Example~\ref{example10.1}
in order to add a bound state to the unperturbed operator.
The corresponding Jost solution $f(k,x),$ regular solution $\varphi(k,x),$ Jost matrix $J(k),$ and the scattering matrix
 $S(k)$ are obtained as
 \begin{equation*}
f(k,x)= e^{ikx}\left[1-\ds\frac{2\,i }{(k+i)
\left(1+e^{2  x}\right)}\right],
 \end{equation*}
  \begin{equation*}
\varphi(k,x)=-\ds\frac{k\,\sin(kx)+\cos(kx)\, \tanh x}
{k^2+1},
 \end{equation*}
  \begin{equation}
\label{10.36}
J(k)=-\ds\frac{k}{k+i},
 \end{equation}
 \begin{equation*}
S(k)=-\ds\frac{k+i}{k-i}.
 \end{equation*}
From the zero of the Jost matrix in \eqref{10.36} we see that the unperturbed operator does not have any bound states.
Let us add a bound state at $k=i\kappa_1$ with the Gel'fand--Levitan constant $C_1,$ where we have
 \begin{equation}
\label{10.38}
\kappa_1=1,\quad 
C_1=4.
 \end{equation}
Although this particular case is already covered in Example~8.9 of \cite{AW2025}, we include it here to consider all
the possibilities in order to demonstrate that our result is independent of the
asymptotic decay of the unperturbed potential $V(x)$ as $x\to+\infty.$
The corresponding perturbed Jost matrix $\tilde J(k),$ perturbed boundary matrices $\tilde A$ and $\tilde B,$
and perturbed potential $\tilde V(x)$
 are given by
 \begin{equation*}
\tilde J(k)=-\ds\frac{k(k-i)}{(k+i)^2},
 \end{equation*}
 \begin{equation*}
\tilde A=0,\quad \tilde B=-1.
 \end{equation*}
 \begin{equation*}
\tilde V(x)=\ds\frac{q_{13}(x)+q_{14}(x)}{\left[ (1-2x)\,\cosh x+(1+4 x^2+\cosh(2x) )\,\sinh x\right]^2},
\end{equation*}
where we have defined
 \begin{equation*}
q_{13}(x):=7-24 x+32 x^4+64 x^2 \cosh(2x)-(16+32x)\,\sinh(2x),
\end{equation*}
 \begin{equation*}
q_{14}(x):=-(9+8x^2)\,\cosh(4x)+(-2+20x)\,\sinh(4x).
\end{equation*}
With the choices in \eqref{10.38}, the unperturbed potential $V(x)$ behaves
as $O(e^{-2 x})$ as $x\to+\infty.$ The perturbed potential $\tilde V(x)$
behaves as  $O(x^2\,e^{-2 x})$ as $x\to+\infty,$ and hence the potential difference
$\tilde V(x)-V(x)$ also behaves as
 $O(x^2\,e^{-2 x})$ as $x\to+\infty.$ This is in contradiction to
 the result stated in (IV.1.30) of \cite{CS1989}, which incorrectly predicts
 $\tilde V(x)-V(x)=O(e^{-2x})$ as $x\to+\infty.$
 We remark that our decay estimate of $O(x^2\,e^{-2 x})$ for $\tilde V(x)-V(x)$ is consistent with our general estimate given in \eqref{7.15},
 In this case, our estimate in \eqref{7.15} is not sharp.
 Instead of using the values in \eqref{10.38}, let us add  a bound state at $k=i\kappa_1$ with the Gel'fand--Levitan constant $C_1,$ where we have
 \begin{equation}
\label{10.44}
\kappa_1=2,\quad 
C_1=3.
 \end{equation}
 The corresponding perturbed Jost matrix $\tilde J(k),$ the perturbed boundary matrices $\tilde A$ and $\tilde B,$
and the perturbed potential $\tilde V(x)$
 are then given by
 \begin{equation*}
\tilde J(k)=-\ds\frac{k(k-2i)}{(k+i)(k+2i)},
 \end{equation*}
 \begin{equation*}
\tilde A=0,\quad \tilde B=-1,
 \end{equation*}
 \begin{equation*}
\tilde V(x)=\ds\frac{q_{15}(x)+q_{16}(x)+q_{17}(x)}{\left[(16-24x)\,\cosh x-8\,\sinh x+9\,\sinh(3x)+\sinh(5x)\right]^2},
\end{equation*}
where we have let
 \begin{equation*}
q_{15}(x):=2468+1536 x-1152 x^2-1728\,\cosh(2x)-1152\,\cosh(4x)-64\,\cosh(6x),
\end{equation*}
 \begin{equation*}
q_{16}(x):=-36 \,\cosh(8x)
-2304\,\sinh(2x)+3456\,x\,\sinh(2x)-1152\,\sinh(4x),
\end{equation*}
\begin{equation*}
q_{17}(x):=1728\,x\,\sinh(4x)-256\,\sinh(6x)+384 x\,\sinh(6x).
\end{equation*}
 With the choices in \eqref{10.44}, the unperturbed potential $V(x),$ the perturbed potential $\tilde V(x),$
and the potential difference
$\tilde V(x)-V(x)$ each behave as
 $O(e^{-2 x})$ as $x\to+\infty.$ 
The result stated in (IV.1.30) of \cite{CS1989} incorrectly predicts
 $\tilde V(x)-V(x)=O(e^{-4x})$ as $x\to+\infty.$
 Our decay estimate of $O(e^{-2 x})$ for $\tilde V(x)-V(x)$ in this case agrees with our general asymptotic decay estimate stated in \eqref{7.15}.
In this case, our estimate in \eqref{7.15} is sharp.
  Instead of using the values in \eqref{10.38} or in \eqref{10.44}, let us add  a bound state at $k=i\kappa_1$ with the Gel'fand--Levitan constant $C_1,$ where we have
 \begin{equation}
\label{10.51}
\kappa_1=3,\quad 
C_1=1.
 \end{equation}
  The corresponding perturbed Jost matrix $\tilde J(k),$ perturbed boundary matrices $\tilde A$ and $\tilde B,$
and perturbed potential $\tilde V(x)$
 are given by 
  \begin{equation*}
\tilde J(k)=-\ds\frac{k(k-3i)}{(k+i)(k+3i)},
\end{equation*}
 \begin{equation*}
\tilde A=0,\quad \tilde B=-1,
 \end{equation*}
  \begin{equation*}
\tilde V(x)=\ds\frac{q_{18}(x)+q_{19}(x)+q_{20}(x)+q_{21}(x)}{\left[192-12x-3\,\sinh(2x)+3\,\sinh(4x)+\sinh(6x)\right]^2},
\end{equation*}
where we have defined
 \begin{equation*}
q_{18}(x):=
-2 \, \text{\rm{sech}}^2 x+ 73728\,\cosh^3 x\,\sinh x - 4608 x \,\cosh^3 x\,\sinh x,
 \end{equation*}
  \begin{equation*}
q_{19}(x):=- 
221184 \,\cosh^3 x\,\cosh(2 x)\, \sinh x+ 
13824 x \,\cosh^3 x\,\cosh(2x)\,\sinh x,
\end{equation*}
  \begin{equation*}
q_{20}(x):=- 
6336 \,\cosh^3 x\,\sinh x\,\sinh(2x)
+ 1152\,\cosh^3 x\,\sinh x\,\,\sinh(4x),
\end{equation*}
  \begin{equation*}
q_{21}(x):=- 192 \,\cosh^3 x \,\sinh x\,\sinh(6x).
\end{equation*}
  With the choices in \eqref{10.51}, the unperturbed potential $V(x),$ the perturbed potential $\tilde V(x),$
and the potential difference
$\tilde V(x)-V(x)$ each behave as
 $O(e^{-2 x})$ as $x\to+\infty.$
 This is also in contradiction to
 the result stated in (IV.1.30) of \cite{CS1989}, which incorrectly predicts the asymptotic decay
 $\tilde V(x)-V(x)=O(e^{-6x})$ as $x\to+\infty.$
 Our asymptotic decay estimate
 of $O(e^{-2 x})$  
for $\tilde V(x)-V(x)$ agrees with the asymptotic estimate stated in \eqref{7.15}.
In this case, our estimate in \eqref{7.15} is sharp.

\end{example}

 In the next example we consider the addition of a bound state
 of multiplicity $1$ in the $2\times 2$ matrix case with a non-Dirichlet boundary condition.
 In the example, we illustrate the construction of
 the orthogonal projection matrix $\tilde P_{N+1}$ appearing in \eqref{7.20} by using two
 methods. The first method uses the construction specified \eqref{7.21} and the second method uses
 the procedure described in Theorems~\ref{theorem9.9}.

\begin{example}
\label{example10.4}
\normalfont
In this example we use the unperturbed potential $V(x)$ and the boundary matrices $A$ and $B$
given by 
\begin{equation}
\label{10.59}
V (x)=- \ds\frac{8 \, e^{2x/3}}{9\left(1+2\,e^{2x/3}\right)^2}
\begin{bmatrix} 1&1\\
1&1\end{bmatrix},
 \end{equation}
  \begin{equation}
\label{10.60}
A=\begin{bmatrix} 1&0\\
0&1\end{bmatrix},\quad B=\ds\frac{1}{18} \begin{bmatrix} 17&-1\\
-1&17\end{bmatrix}.
 \end{equation}
As described in Example~\ref{example9.3},  we construct the corresponding Jost solution $f(k,x),$ Jost matrix $J(k),$ and regular solution as
 \begin{equation}
\label{10.61}
f(k,x)= e^{ikx}\left(\begin{bmatrix} 1&0\\
0&1\end{bmatrix}-\ds\frac{i}{(3k+i)
\left(1+2\,e^{2x/3}\right)}\begin{bmatrix} 1&1\\
1&1\end{bmatrix}\right),
 \end{equation}
  \begin{equation}
\label{10.62}
J(k)=\ds\frac{-i(k+i)}{2(3k+i)}
\begin{bmatrix} 6k+i &-i\\
-i&6k+i\end{bmatrix},
 \end{equation}
  \begin{equation}
\label{10.63}
\varphi(k,x)= \ds\frac{1}{q_{22}(k,x)}\,
\begin{bmatrix} q_{23}(k,x)-q_{23}(-k,x)&q_{24}(k,x)-q_{24}(-k,x)\\
\noalign{\medskip}
q_{24}(k,x)-q_{24}(-k,x)&q_{23}(k,x)-q_{23}(-k,x)\end{bmatrix},
 \end{equation}
 where we have defined
  \begin{equation}
\label{10.64}
 q_{22}(k,x):=4k(3k+i)(3k-i)(1+2 e^{2x/3}),
 \end{equation}
   \begin{equation}
\label{10.65}
 q_{23}(k,x):=e^{ikx}(k-i)\left[18k^2-3ik+1+e^{2x/3}(36k^2+6ik+2)\right],
 \end{equation}
   \begin{equation}
\label{10.66}
 q_{24}(k,x):=e^{ikx}(k-i)\left[-3ik-1+e^{2x/3}(6ik-2)\right].
 \end{equation}
Even though $q_{22}(k,x)$ appearing in the denominator on the right-hand side of
\eqref{10.63} has simple zeros at $k=0,$ $k=i/3,$ and $k=-i/3,$ that right-hand side
in \eqref{10.63} is entire in $k.$ Hence,
those three singularities of the right-hand side in \eqref{10.63} are removable singularities.
 From \eqref{10.62} we evaluate the determinant of $J(k)$ as
  \begin{equation*}
\det[J(k)]=-\ds\frac{3k(k+i)^2}{3k+i}.
 \end{equation*}
 From \eqref{10.62} we obtain
   \begin{equation}
\label{10.68}
J(i)=\ds\frac{1}{4}
\begin{bmatrix} 7 &-1\\
-1&7\end{bmatrix}.
 \end{equation}
 Since $\det[J(k)]$ does not have a zero on the positive imaginary axis, the unperturbed operator does not have any bound states.
Using the method of Section~\ref{section7}, let us add a simple bound state at $k=i\tilde\kappa_1$ with the Gel'fand--Levitan normalization matrix $\tilde C_1,$ where we use
 \begin{equation}
\label{10.69}
\tilde\kappa_1=1,\quad 
\tilde C_1=
\begin{bmatrix}2&0\\
0&0\end{bmatrix}.
 \end{equation}
 From \eqref{3.19} and the second equality of \eqref{10.69} we see that the 
 the corresponding orthogonal projection $\tilde Q_1$ associated with the bound-state data
 in \eqref{10.69} is given by
  \begin{equation}
\label{10.70}
\tilde Q_1=
\begin{bmatrix}1&0\\
0&0\end{bmatrix}.
  \end{equation}
  Let us remark that $\tilde C_1$ and $\tilde Q_1$ are compatible. In fact, if we start with the matrix $\tilde Q_1$ given in
  \eqref{10.70} and try to construct all nonzero nonnegative matrices $\tilde C_1$ satisfying
   \begin{equation}
\label{10.71}
\tilde C_1\,\tilde Q_1=\tilde Q_1\,\tilde C_1=\tilde C_1,
  \end{equation}
  we observe that $\tilde C_1$ must have the form
    \begin{equation*}
\tilde C_1=
\begin{bmatrix}a&0\\
0&0\end{bmatrix},
  \end{equation*}
  where $a$ is a positive constant. In our example, we simply let $a=2.$
  Since the rank of $\tilde Q_1$ is equal to one, 
the corresponding orthogonal projection $\tilde P_1$ must also have rank one. Hence, $\tilde P_1$ must be equal to one of the two
matrices appearing on the right-hand side of \eqref{9.101}.
In order to determine the exact matrix $\tilde P_1,$ we proceed as in Theorem~\ref{theorem9.9} and impose
the condition that $\tilde J(i)\,\tilde Q_1=0.$ With the help of \eqref{7.20} we get the restriction
  \begin{equation}
\label{10.73}
\left(\begin{bmatrix} 1 &0\\
0&1\end{bmatrix}-\ds\frac{2i}{2i}
\,\begin{bmatrix} \ds\frac{1}{2}\pm\ds\sqrt{\ds\frac{1}{4}-|\tilde p_1|^2}&\tilde p_1\\
\tilde p_1^\ast& \ds\frac{1}{2}\mp\ds\sqrt{\ds\frac{1}{4}-|\tilde p_1^2|}\end{bmatrix}\right)
\ds\frac{1}{4}
\begin{bmatrix} 7 &-1\\
-1&7\end{bmatrix}\begin{bmatrix}1 &0\\
0&0\end{bmatrix}=
\begin{bmatrix}0 &0\\
0&0\end{bmatrix},
 \end{equation}
where we have used \eqref{10.68} and \eqref{10.70}.
Note that \eqref{10.73} is equivalent to a system of equations containing the unknown $\tilde p_1,$ and we have
  \begin{equation}
\label{10.74}
\begin{cases}
7\left(\ds\frac{1}{2}\mp\ds\sqrt{\ds\frac{1}{4}-|\tilde p_1|^2}   \right)+\tilde p_1=0,\\
-7\tilde p_1^\ast-\left(\ds\frac{1}{2}\pm\ds\sqrt{\ds\frac{1}{4}-|\tilde p_1|^2}\right)=0,
\end{cases}
 \end{equation}
 where we recall that the asterisk is used to denote complex conjugation.
 We write the system in \eqref{10.74} as
 \begin{equation}
\label{10.75}
\begin{cases}
7\left(\ds\frac{1}{2}\mp\ds\sqrt{\ds\frac{1}{4}-|\tilde p_1|^2}   \right)+\tilde p_1=0,\\
49\,\tilde p_1^\ast+7\left(\ds\frac{1}{2}\pm\ds\sqrt{\ds\frac{1}{4}-|\tilde p_1|^2}\right)=0,
\end{cases}
 \end{equation}
 which is obtained from \eqref{10.74} by multiplying the second equality by $-7$ on both sides.
 By adding the two equalities in \eqref{10.75} we obtain
 \begin{equation*}
7+ \tilde p_1+49\,\tilde p_1^\ast=0,
 \end{equation*}
and hence $\tilde p_1$ must be real and we must have
$\tilde p_1=-7/50.$
 One can directly verify that any one of the three equivalent systems in \eqref{10.73}--\eqref{10.75}
 is satisfied when $\tilde p_1=-7/50$ provided we use only the upper signs there. Thus, the orthogonal projection matrix
 $\tilde P_1$ is given by
  \begin{equation}
\label{10.77}
\tilde P_1=\ds\frac{1}{50}
\begin{bmatrix}49&-7\\
-7&1\end{bmatrix}.
 \end{equation}
 Let us construct the same matrix
 $\tilde P_1$ appearing in \eqref{10.77} by using \eqref{7.21}. For this, we proceed as follows.
 We first need to construct the invertible matrix $L_1$ appearing in \eqref{7.21} when $N=1.$ From \eqref{7.22} we observe that
 $L_1$ is the constant $2\times 2$ matrix obtained as the coefficient of the term $e^x$ in the expression for $\varphi(i,x).$
 With the help of \eqref{10.63}--\eqref{10.66} we see that $L_1$ is given by
   \begin{equation*}
L_1=\lim_{x\to+\infty} \, \ds\frac{-e^x}{q_{22}(i,x)}\,
\begin{bmatrix} q_{23}(-i,x)&q_{24}(-i,x)\\
\noalign{\medskip}
q_{24}(-i,x)&q_{23}(-i,x)\end{bmatrix},
 \end{equation*}
 which yields
 \begin{equation}
\label{10.79}
L_1=\frac{1}{8}\,
\begin{bmatrix} 7&-1\\
-1&7\end{bmatrix},
 \end{equation}
 Using \eqref{10.69} and \eqref{10.79} we obtain
  \begin{equation}
\label{10.80}
L_1\,\tilde C_1=\frac{1}{4}\,
\begin{bmatrix} 7&0\\
-1&0\end{bmatrix}.
 \end{equation}
 Finally, using \eqref{10.80} in \eqref{7.21} we get $\tilde P_1$ and observe that
 it is identical to the matrix appearing on the right-hand side of \eqref{10.77}. 
 The perturbed boundary matrices $\tilde A$ and $\tilde B$ 
 are obtained by using \eqref{10.60} and the second equality of \eqref{10.69}, and we have
  \begin{equation*}
\tilde A=\begin{bmatrix} 1&0\\
0&1\end{bmatrix},\quad \tilde B=\ds\frac{1}{18} \begin{bmatrix} -55&-1\\
-1&17\end{bmatrix}.
 \end{equation*}
 We obtain the perturbed Jost matrix $\tilde J(k)$ by using \eqref{10.62}, the first equality of \eqref{10.69}, and \eqref{10.77}, and we get
  \begin{equation}
\label{10.82}
\tilde J(k)=\ds\frac{1}{50(3k+i)}
\begin{bmatrix} (k-i)(-150 i k+31)&17k+31i\\
\noalign{\medskip}
17(k-i)&-150 ik^2+169k+17i\end{bmatrix}.
 \end{equation}
 Using \eqref{10.82} in \eqref{2.12}, we obtain the corresponding scattering matrix $\tilde S(k)$ as
   \begin{equation}
 \label{10.83}
\tilde S(k)=\ds\frac{1}{625(k^2+1)(3k-i)}
\begin{bmatrix} q_{25}(k)&17 i(31-25k^2)\\
\noalign{\medskip}
17 i(31-25k^2)&q_{25}(k)^\ast\end{bmatrix},
 \end{equation}
 where we have defined 
  \begin{equation*}
q_{25}(x):=336i-1525 k+3600 i k^2+1875 k^3.
  \end{equation*}
 From \eqref{10.82} we see that
the determinant of the perturbed Jost matrix is given by
  \begin{equation*}
\det[\tilde J(k)]=\ds\frac{-3k(k+i)(k-i)}{3k+i},
 \end{equation*}
 from which we see that there is a simple bound state at $k=i.$
 The corresponding perturbed potential $\tilde V(x)$ is obtained as in \eqref{7.11} with the help of \eqref{7.7} and \eqref{7.10}.
 We obtain
\begin{equation*}
\tilde V (x)=- \ds\frac{8 \, e^{2x/3}}{9\left(-9-18\,e^{2x/3}+26\,e^{2x}+25\,e^{8x/3}\right)^2}
\begin{bmatrix} q_{26}(x)&q_{27}(x)\\
\noalign{\medskip}
q_{27}(x)&q_{28}(x)\end{bmatrix},
 \end{equation*}
 where we have defined
 \begin{equation*}
q_{26}(x):=81 - 2025 \,e^{4 x/3} - 5328 \,e^{2 x} - 3969 \,e^{8 x/3)}+ 361 \,e^{4 x},
  \end{equation*}
  \begin{equation*}
q_{27}(x):=-81 + 405 \,e^{4 x/3} + 144 \,e^{2 x} - 567 \,e^{8 x/3} + 323 \,e^{4 x},
  \end{equation*}
 \begin{equation*}
q_{28}(x):=81 - 81 \,e^{4 x/3} - 144 \,e^{2 x} - 81 \,e^{8 x/3} + 289 \,e^{4 x}.
  \end{equation*}
 The corresponding perturbed Jost solution $\tilde f(k,x)$ is obtained from \eqref{7.27} with the help of \eqref{7.7}, \eqref{7.10}, 
 \eqref{10.61},  the first equality of \eqref{10.69}, and \eqref{10.77}. We get
  \begin{equation}
\label{10.90}
\tilde f(k,x)=e^{ikx}\left(\begin{bmatrix} 1&0\\
0&1\end{bmatrix}+\ds\frac{1}{q_{29}(k,x)}\begin{bmatrix} q_{30}(k,x)&q_{31}(k,x)\\
\noalign{\medskip}
q_{32}(k,x)&q_{33}(k,x)\end{bmatrix}\right),
 \end{equation}
 where we have let
 \begin{equation*}
q_{29}(k,x):=25(3k+i)(k+i)
\left(-9-18\,e^{2x/3}+26\,e^{2x}+25\,e^{8x/3}\right),
  \end{equation*}
  \begin{equation*}
q_{30}(k,x):=36(-8+43\,ik)+882\,e^{2x/3}\,(-1+3ik)+722\,e^{2x}(1-ik),
  \end{equation*}
   \begin{equation*}
q_{31}(k,x):=36(-6+ik)+126\,e^{2x/3}\,(1-3ik)+646\,e^{2x}(-1+ik),
  \end{equation*}
  \begin{equation*}
q_{32}(k,x):=36(6+ik)+126\,e^{2x/3}\,(1-3ik)+646\,e^{2x}(-1+ik),
  \end{equation*}
 \begin{equation*}
q_{33}(k,x):=36(-8+7ik)+18\,e^{2x/3}\,(-1+3ik)+578\,e^{2x}(1-ik).
  \end{equation*}
 In the construction of $\tilde f(k,x)$ given in \eqref{10.90}, we have constructed the Moore--Penrose inverse of the matrix
 $\Omega_1(x)$ defined in \eqref{7.10} when $N=0.$
 For the evaluation of $\Omega_1(x)^+,$ we have used the symbolic software Mathematica, but we have also 
 independently verified that the matrices $\Omega_1(x)$ and $\Omega_1(x)^+$ satisfy the four matrix equalities
 given in (3.2) of \cite{AW2025} used in the definition \cite{BG2003} of the
 Moore--Penrose inverse of a matrix.
With the help of the perturbed scattering matrix $\tilde S(k)$ given in \eqref{10.83} and the perturbed Jost solution
  $\tilde f(k,x)$ given in \eqref{10.90}, from \eqref{2.15} we construct the perturbed regular solution
  $\tilde\varphi(k,x)$ explicitly. Since the expression is too lengthy, we do not display it here.
  We obtain $\tilde\varphi(i,x)$ as
    \begin{equation}
\label{10.96}
\tilde \varphi(i,x)=\ds\frac{1}{q_{34}(k,x)}\begin{bmatrix} q_{35}(k,x)&q_{36}(k,x)\\
\noalign{\medskip}
q_{37}(k,x)&q_{38}(k,x)\end{bmatrix},
 \end{equation}
 where we have let
 \begin{equation*}
q_{34}(x):=-9-18 e^{2x/3}+26 e^{2x}+25 e^{8x/3},
  \end{equation*}
   \begin{equation*}
q_{35}(x):=10 e^x+14 e^{5x/3},\quad
q_{37}(x):=2 e^x-2 e^{5x/3},\quad
  \end{equation*}
   \begin{equation*}
q_{36}(x):=-e^x+4 e^{5x/3}-6 e^{3x}+3 e^{11x/3},
  \end{equation*}
   \begin{equation*}
q_{38}(x):=-11 e^x-16 e^{5x/3}+30 e^{3x}+21 e^{11x/3}.
  \end{equation*}
  Using $\tilde Q_1$ given in \eqref{10.70}
  and $\tilde\varphi(i,x)$ given in \eqref{10.96}, with the help of
  \eqref{3.16}--\eqref{3.19} we construct the corresponding Gel'fand--Levitan normalization matrix
  $\tilde C_1,$ and we observe that the constructed $2\times 2$ matrix
  $\tilde C_1$ coincides with the matrix appearing in the second equality of \eqref{10.69}.
  In this example, the unperturbed potential $V(x),$ the perturbed potential $\tilde V(x),$ and the 
potential difference $\tilde V(x)-V(x)$ each behave as
$O(e^{-2x/3})$ as $x\to+\infty.$ The asymptotic decay
for $\tilde V(x)-V(x)$ agrees with the asymptotic estimate stated in \eqref{7.15}.
In this case, our estimate in \eqref{7.15} is sharp.

\end{example}

In the next example we demonstrate the addition of a bound state of multiplicity $2$ by using the method of Section~\ref{section7}.
The unperturbed  operator in this example coincides with the 
unperturbed operator of Example~\ref{example10.4}.

\begin{example}
\label{example10.5}
\normalfont
In this example, let us assume that our unperturbed operator
corresponds to the potential $V(x)$ given in \eqref{10.59} and to the boundary matrices $A$ and $B$ 
given in \eqref{10.60}. We know from Example~\ref{example10.4} that the unperturbed operator
does not have any bound states.
Instead of adding a simple bound state at $k=i$ as in
Example~\ref{example10.4}, let us add a bound state at $k=i$ with multiplicity two. We choose our 
Gel'fand--Levitan normalization matrix $\tilde C_1$ as
\begin{equation}
\label{10.101}
\tilde C_1=\ds\frac{1}{6}\,
\begin{bmatrix}4+3\,\sqrt{2}&4-3\,\sqrt{2}\\
\noalign{\medskip}
4-3\,\sqrt{2}&4+3\,\sqrt{2}\end{bmatrix},
 \end{equation}
 The matrix $\tilde C_1$ is a positive matrix because
 its eigenvalues are $4/3$ and $\sqrt{2}.$
Since the multiplicity of the bound state to be added is $2$ in this
$2\times 2$ matrix case, we know that
 the orthogonal projection matrices $\tilde Q_1$ and $\tilde P_1$ associated 
 with the bound state at $k=i$ are both equal to the $2\times 2$ identity matrix.
 Using $\tilde\kappa_1=1$ and the unperturbed Jost matrix $J(k)$ given in \eqref{10.62}, from
 \eqref{7.20} we obtain the perturbed Jost matrix
 $\tilde J(k)$ as
 \begin{equation}
\label{10.102}
\tilde J(k)=\ds\frac{-i(k-i)}{2(3k+i)}
\begin{bmatrix} 6k+i &-i\\
-i&6k+i\end{bmatrix}.
 \end{equation}
 In fact, in this case we have
 \begin{equation*}
\tilde J(k)=\ds\frac{k-i}{k+i}
\,J(k).
 \end{equation*}
 The determinant of the perturbed Jost matrix $\tilde J(k)$ is given by
  \begin{equation*}
\det[\tilde J(k)]=-\ds\frac{3k(k-i)^2}{3k+i},
 \end{equation*}
and hence there is a bound state at $k=i$ with the multiplicity $2.$
 The corresponding perturbed boundary matrices
 $\tilde A$ and $\tilde B$ are obtained from \eqref{7.14} with the help of
 \eqref{10.60} and \eqref{10.101}. We obtain
 \begin{equation}
\label{10.105}
\tilde A=\begin{bmatrix} 1&0\\
0&1\end{bmatrix},\quad \tilde B=\ds\frac{1}{18} \begin{bmatrix} -17&1\\
1&-17\end{bmatrix}.
 \end{equation}
 The perturbed potential  $\tilde V(x)$ is obtained from \eqref{7.11} with the help of \eqref{7.7}, \eqref{7.10}, and \eqref{10.101}.
 We get
 \begin{equation}
\label{10.106}
\tilde V(x)=- \ds\frac{8 \, e^{2x/3}}{9\left(2+e^{2x/3}\right)^2}
\begin{bmatrix} 1&1\\
1&1\end{bmatrix}.
 \end{equation}
 The corresponding perturbed Jost solution
 $\tilde f(k,x)$ is obtained from \eqref{7.27} with the help of \eqref{7.7}, \eqref{7.10}, and \eqref{10.61}. We have
 \begin{equation}
\label{10.107}
\tilde f(k,x)= e^{ikx}\left(\begin{bmatrix} 1&0\\
0&1\end{bmatrix}-\ds\frac{2\,i}{(3k+i)
\left(2+e^{2x/3}\right)}\begin{bmatrix} 1&1\\
1&1\end{bmatrix}\right).
 \end{equation}
 As done in Example~\ref{example10.4}, we verify the compatibility of the
  Gel'fand--Levitan normalization matrix
  $\tilde C_1$ and the orthogonal projection matrix
  $\tilde Q_1=I.$
  We observe that the two equalities in \eqref{10.71} are trivially satisfied.
  As mentioned already the matrix $\tilde C_1$ is chosen as a positive matrix.
  We construct the corresponding perturbed regular solution $\tilde\varphi(k,x)$ as
     \begin{equation}
\label{10.108}
\tilde\varphi(k,x)= \ds\frac{1}{4k(9k^2+1)(2+e^{2x/3})}\,
\begin{bmatrix} q_{5}(k,x)-q_{5}(-k,x)&q_{39}(k,x)-q_{39}(-k,x)\\
\noalign{\medskip}
q_{39}(k,x)-q_{39}(-k,x)&q_{5}(k,x)-q_{5}(-k,x)\end{bmatrix},
 \end{equation}
 where we have defined
   \begin{equation}
\label{10.109}
 q_{5}(k,x):=e^{ikx}(k+i)\left[36k^2-6ik+2+e^{2x/3}(18k^2+3ik+1)\right],
 \end{equation}
   \begin{equation}
\label{10.110}
 q_{39}(k,x):=e^{ikx}(k+i)\left[-6ik-2+e^{2x/3}(3ik-1)\right].
 \end{equation}
 Using $\tilde Q_1=I$ and $\tilde\varphi(i,x),$ with the help of
  \eqref{3.16}--\eqref{3.19} we construct the corresponding Gel'fand--Levitan normalization matrix
  $\tilde C_1,$ and we observe that the constructed $2\times 2$ matrix
  $\tilde C_1$ coincides with the matrix appearing in \eqref{10.101}.
In this example, the unperturbed potential $V(x),$ the perturbed potential $\tilde V(x),$ and the 
potential difference $\tilde V(x)-V(x)$ each behave as
$O(e^{-2x/3})$ as $x\to+\infty.$ The asymptotic decay of
$\tilde V(x)-V(x)$ is the same as the decay estimate predicted in \eqref{7.15}.
In this case, our estimate given in \eqref{7.15} is sharp.

 \end{example}

In the next example we illustrate the method of Section~\ref{section6} used to reduce
the multiplicity of a bound state. As our unperturbed problem we use the perturbed problem of
Example~\ref{example10.5}.

\begin{example}
\label{example10.6}
\normalfont
We choose our unperturbed operator in such a way that
the unperturbed potential $V(x)$ is
\begin{equation}
\label{10.111}
V(x)=- \ds\frac{8 \, e^{2x/3}}{9\left(2+e^{2x/3}\right)^2}
\begin{bmatrix} 1&1\\
1&1\end{bmatrix},
 \end{equation}
 which coincides with the right-hand side of \eqref{10.106}, and we choose
 the unperturbed boundary matrices $A$ and $B$ as
  \begin{equation}
\label{10.112}
A=\begin{bmatrix} 1&0\\
0&1\end{bmatrix},\quad B=\ds\frac{1}{18} \begin{bmatrix} -17&1\\
1&-17\end{bmatrix},
 \end{equation}
 which coincide with the boundary matrices listed in \eqref{10.105}.
 The unperturbed Jost solution $f(k,x),$ the unperturbed Jost matrix $J(k),$ and the unperturbed regular solutions $\varphi(k,x)$ 
 are respectively given by
  \begin{equation}
\label{10.113}
f(k,x)= e^{ikx}\left(\begin{bmatrix} 1&0\\
0&1\end{bmatrix}-\ds\frac{2\,i}{(3k+i)
\left(2+e^{2x/3}\right)}\begin{bmatrix} 1&1\\
1&1\end{bmatrix}\right).
 \end{equation}
  \begin{equation}
 \label{10.114}
J(k)=\ds\frac{-i(k-i)}{2(3k+i)}
\begin{bmatrix} 6k+i &-i\\
-i&6k+i\end{bmatrix},
 \end{equation}
  \begin{equation}
\label{10.115}
\varphi(k,x)= \ds\frac{1}{4k(9k^2+1)(2+e^{2x/3})}\,
\begin{bmatrix} q_{5}(k,x)-q_{5}(-k,x)&q_{39}(k,x)-q_{39}(-k,x)\\
\noalign{\medskip}
q_{39}(k,x)-q_{39}(-k,x)&q_{5}(k,x)-q_{5}(-k,x)\end{bmatrix},
 \end{equation}
 where 
 $q_{5}(k,x)$ and 
 $q_{39}(k,x)$ are the quantities defined in \eqref{10.109} and \eqref{10.110}, respectively.
 We remark that the Jost solution $f(k,x)$
 given in \eqref{10.113} coincides with $\tilde f(k,x)$ appearing in \eqref{10.107},
 the Jost matrix given in \eqref{10.114} coincides with $\tilde J(k)$ appearing in \eqref{10.102}, and
 the regular solution $\varphi(k,x)$ given in \eqref{10.115} coincides with
 $\tilde\varphi(k,x)$ appearing in \eqref{10.108}.
Even though the denominator in \eqref{10.115} appears to have simple poles at $k=0,$ $k=i/3,$ and $k=-i/3,$ the left-hand side
in \eqref{10.115} is entire in $k.$ Hence,
those three singularities of the right-hand side in \eqref{10.115} are removable singularities.
 From \eqref{10.114} we get the determinant of the Jost matrix $J(k)$ as
   \begin{equation}
\label{10.116}
\det[J(k)]= \ds\frac{-3k(k-i)^2}{3k+i}.
 \end{equation}
 Since the right-hand side of \eqref{10.116} has a double zero at $k=i$ occurring on the positive imaginary axis
 in the complex $k$-plane, we see that
 the unperturbed operator has a bound state at $k=i$ with the multiplicity $2.$
 We are interested in reducing that multiplicity from $2$ to $1$ by using the
 method of Section~\ref{section6}. To simplify our notation, we use
$Q_{\text{\rm{r}}},$
$\mathbf G_{\text{\rm{r}}},$
$\mathbf H_{\text{\rm{r}}},$
$C_{\text{\rm{r}}},$
$\Phi_{\text{\rm{r}}},$
$W_{\text{\rm{r}}},$
$P_{\text{\rm{r}}},$
to denote the quantities
$Q_{N\text{\rm{r}}},$
$\mathbf G_{N\text{\rm{r}}},$
$\mathbf H_{N\text{\rm{r}}},$
$C_{N\text{\rm{r}}},$
$\Phi_{N\text{\rm{r}}},$
$W_{N\text{\rm{r}}},$
$P_{N\text{\rm{r}}},$
respectively, appearing in Section~\ref{section6}.
From \eqref{10.114} we observe that
 $J(i)$ is the $2\times 2$ zero matrix. and hence the orthogonal projection
 $Q$ onto the kernel of $J(i)$ is equal to the $2\times 2$ identity matrix.
 This is not surprising because the multiplicity of the bound state is $2$ and
 we are in the $2\times 2$ matrix case.
 In our example, let us choose the matrix
 $Q_{\text{\rm{r}}}$ as
 \begin{equation}
\label{10.117}
Q_{\text{\rm{r}}}=\begin{bmatrix} 1&0\\
0&0\end{bmatrix},
 \end{equation}
and we observe that \eqref{6.2} is satisfied.
Using \eqref{10.115} at $k=i$ we get
  \begin{equation}
\label{10.118}
\varphi(i,x)= \ds\frac{e^{-x}}{4(2+e^{2x/3})}\,
\begin{bmatrix} 7+5 \,e^{2x/3}&-1+e^{2x/3}\\
\noalign{\medskip}
-1+e^{2x/3}&7+5 \,e^{2x/3}\end{bmatrix}.
 \end{equation}
Using \eqref{10.117} and \eqref{10.118} on the right-hand side of \eqref{6.3} we have
\begin{equation}
\label{10.119}
\mathbf G_{\text{\rm{r}}}=\begin{bmatrix} \ds\frac{17}{32}&0\\
\noalign{\medskip}
0&0\end{bmatrix},
 \end{equation}
Next, using \eqref{10.117} and \eqref{10.119} in \eqref{6.4}, we obtain
\begin{equation}
\label{10.120}
\mathbf H_{\text{\rm{r}}}=\begin{bmatrix} \ds\frac{17}{32}&0\\
\noalign{\medskip}
0&1\end{bmatrix},
 \end{equation}
which is a positive matrix.
Then, with the help of \eqref{6.6}, \eqref{10.117}, \eqref{10.120}
we construct the nonnegative matrix
$C_{\text{\rm{r}}}$ as
\begin{equation}
\label{10.121}
C_{\text{\rm{r}}}=\begin{bmatrix} \ds\frac{4\sqrt{2}}{\sqrt{17}}&0\\
\noalign{\medskip}
0&0\end{bmatrix}.
 \end{equation}
Using \eqref{10.112} and \eqref{10.121} in \eqref{6.15}, we get
\begin{equation}
\label{10.122}
\tilde A
=\begin{bmatrix} 1&0\\
0&1\end{bmatrix},\quad \tilde B=\begin{bmatrix} \ds\frac{287}{306}& \ds\frac{1}{18}
\\
\noalign{\medskip}
\ds\frac{1}{18}&-\ds\frac{17}{18}\end{bmatrix}.
 \end{equation}
Using \eqref{10.118} and \eqref{10.121} in \eqref{6.8} we have
\begin{equation}
\label{10.123}
\Phi_{\text{\rm{r}}}(x)= \ds\frac{\sqrt{2}\,e^{-x}}{\sqrt{17}\,(2+e^{2x/3})}\,
\begin{bmatrix}7+5\,e^{2x/3}&0\\
\noalign{\medskip}
-1+e^{2x/3}&0\end{bmatrix}.
 \end{equation}
Then, using \eqref{10.123} in \eqref{6.11} we get
\begin{equation}
\label{10.124}
W_{\text{\rm{r}}}(x)= \ds\frac{e^{-2x}\,(25+26\,e^{2x/3})}{17\,(2+e^{2x/3})}\,
\begin{bmatrix} 1&0\\
0&0\end{bmatrix}.
 \end{equation}
 The Moore--Penrose inverse $W_{\text{\rm{r}}}(x)^+$
of the matrix $W_{\text{\rm{r}}}(x)$ is obtained as
\begin{equation}
\label{10.125}
W_{\text{\rm{r}}}(x)^+= \ds\frac{17\,(2+e^{2x/3})}{e^{-2x}\,(25+26\,e^{2x/3})}\,
\begin{bmatrix} 1&0\\
0&0\end{bmatrix},
 \end{equation}
where we observe that the $(1,1)$-entries in
\eqref{10.124} and \eqref{10.125} are reciprocals of each other.
Using \eqref{10.111}, \eqref{10.123}, and \eqref{10.125} in \eqref{6.12} we obtain the perturbed
potential $\tilde V(x)$ as
\begin{equation}
\label{10.126}
\tilde V(x)= \ds\frac{e^{2x/3}}{9\,(25+26\,e^{2x/3})^2}\,
\begin{bmatrix} -2888&2584\\
2584&-2312\end{bmatrix}.
 \end{equation}
Using \eqref{10.115}, \eqref{10.123}, and \eqref{10.125} in \eqref{6.14}, we obtain the perturbed regular solution $\tilde\varphi(k,x)$ as
\begin{equation}
\label{10.127}
\tilde \varphi(k,x)=
 \ds\frac{1}{q_{40}(k,x)}\,
\begin{bmatrix} q_{41}(k,x)-q_{41}(-k,x)&q_{42}(k,x)-q_{42}(-k,x)\\
\noalign{\medskip}
q_{43}(k,x)-q_{43}(-k,x)&q_{44}(k,x)-q_{44}(-k,x)\end{bmatrix},
 \end{equation}
where we have defined
\begin{equation}
\label{10.128}
q_{40}(k,x):=
 4k(9k^2+1)\,(25+26\,e^{2x/3}),
  \end{equation}
 \begin{equation}
\label{10.129}
\begin{split}
 q_{41}(k,x):=&e^{ikx}\left(-31i-26k-507ik^2+450k^3\right)\\
&+2\, e^{ikx+2 x/3}\left(-7i+64k-177ik^2+234k^3\right),
\end{split}
 \end{equation}
 \begin{equation}
\label{10.130}
 q_{42}(k,x):=e^{ikx}\left(31i-76k+51ik^2\right)
+2\, e^{ikx+2 x/3}\left(7i+38k-51ik^2\right),
 \end{equation}
 \begin{equation}
\label{10.131}
 q_{43}(k,x):=17e^{ikx}\left(-i+4k+3ik^2\right)
+34\, e^{ikx+2 x/3}\left(-i-2k-3ik^2\right),
 \end{equation}
\begin{equation}
\label{10.132}
\begin{split}
 q_{44}(k,x):=&e^{ikx}\left(17i+118k+357ik^2+450k^3\right)\\
&+2\, e^{ikx+2 x/3}\left(17i-8k+255ik^2+234k^3\right).
\end{split}
 \end{equation}
From \eqref{10.117} we see that a unit vector generating $Q_{\text{\rm{r}}}$
is given by $w=\begin{bmatrix}1\\ 0\end{bmatrix}.$ In our example,
the equality \eqref{6.20} is given by
\begin{equation}
\label{10.133}
\varphi(i,x)\,w=f(i,x)\,\beta.
 \end{equation}
Using \eqref{10.113} and \eqref{10.118} in \eqref{10.133}, we obtain
$\beta=\begin{bmatrix}5/4\\ 1/4\end{bmatrix}.$
Hence, a unit vector generating the orthogonal projection matrix
$P_{\text{\rm{r}}}$ is given by $\begin{bmatrix}5\\ 1\end{bmatrix}/\sqrt{26}.$
Thus, we obtain $P_{\text{\rm{r}}}$ as
\begin{equation*}
P_{\text{\rm{r}}}=\ds\frac{1}{\sqrt{26}}\,\begin{bmatrix}5\\ 1\end{bmatrix}\,
\ds\frac{1}{\sqrt{26}}\,\begin{bmatrix}5& 1\end{bmatrix},
 \end{equation*}
which yields
\begin{equation}
\label{10.135}
P_{\text{\rm{r}}}=\ds\frac{1}{26}\,\begin{bmatrix}25&5
\\ 5&1\end{bmatrix}.
\end{equation}
Using \eqref{10.114} and \eqref{10.135} in \eqref{6.19}, we obtain the perturbed Jost matrix $\tilde J(k)$ as
\begin{equation}
\label{10.136}
\tilde J(k)= \ds\frac{1}{26\,(3k+i)}\,
\begin{bmatrix} (k+i)(-78ik+7)&17k-7i\\
\noalign{\medskip}
17(k+i)&78k^2-59ik+17\end{bmatrix},
 \end{equation}
which yields the determinant $\det[\tilde J(k)]$ as
\begin{equation}
\label{10.137}
\det[\tilde J(k)]= \ds\frac{-3k(k-i)(k+i)}{3k+i}.
 \end{equation}
The only zero on the positive imaginary axis of the right-hand side of \eqref{10.137} is a simple zero
occurring at $k=i.$ Thus, the transformation has reduced the multiplicity of the bound state at $k=i$ from $2$ to $1.$
Using \eqref{10.113}, \eqref{10.123}, \eqref{10.125}, and \eqref{10.135} in \eqref{6.24}, we obtain the perturbed Jost solution
$\tilde f(k,x)$ as
\begin{equation}
\label{10.138}
\tilde f(k,x)= \ds\frac{e^{ikx}}{13(3k+i)\,(25+26\,e^{2x/3})}\,
\begin{bmatrix} q_{45}(k,x)&323i\\
\noalign{\medskip}
323i&q_{46}(k,x)\end{bmatrix},
 \end{equation}
 where we have let
   \begin{equation}
\label{10.139}
 q_{45}(k,x):=-36 i + 975 k + 338\, e^{2 x/3} (3 k+i),
 \end{equation}
   \begin{equation}
\label{10.140}
 q_{46}(k,x):=36 i + 975 k + 338\, e^{2 x/3} (3 k+i).
 \end{equation}
In this example, we have $V(x),$ $\tilde V(x),$ and $\tilde V(x)-V(x)$ all have the behavior
$O(e^{-2x/3})$ as $x\to+\infty.$ This asymptotic behavior agrees with the predicted asymptotic behavior 
stated in \eqref{6.16}.
In this case, our estimate given in \eqref{6.16} is sharp.

\end{example}

In the next example, we illustrate the method of Section~\ref{section8} used to increase
the multiplicity of an existing bound state. As our unperturbed problem we use the perturbed problem of
Example~\ref{example10.6}.

\begin{example}
\label{example10.7}
\normalfont
In this example we choose our unperturbed half-line matrix Schr\"odinger operator so that the unperturbed potential $V(x)$ is given by
\begin{equation}
\label{10.141}
V(x)= \ds\frac{e^{2x/3}}{9\,(25+26\,e^{2x/3})^2}\,
\begin{bmatrix} -2888&2584\\
2584&-2312\end{bmatrix},
 \end{equation}
which coincides with the right-hand side of \eqref{10.126}, and as our unperturbed boundary matrices we use the $2\times 2$
constant matrices $A$ and $B$ given by
\begin{equation}
\label{10.142}
A
=\begin{bmatrix} 1&0\\
0&1\end{bmatrix},\quad B=\begin{bmatrix} \ds\frac{287}{306}& \ds\frac{1}{18}
\\
\noalign{\medskip}
\ds\frac{1}{18}&-\ds\frac{17}{18}\end{bmatrix},
 \end{equation}
which coincide with the matrices on the right-hand side of \eqref{10.122}. 
The unperturbed Jost solution $f(k,x)$ is obtained as the $2\times 2$ matrix appearing on the right-hand side of
\eqref{10.138}, i.e. we have
\begin{equation}
\label{10.143}
f(k,x)= \ds\frac{e^{ikx}}{13(3k+i)\,(25+26\,e^{2x/3})}\,
\begin{bmatrix} q_{45}(k,x)&323i\\
\noalign{\medskip}
323i&q_{46}(k,x)\end{bmatrix},
 \end{equation}
with $q_{45}(k,x)$ and $q_{46}(k,x)$ defined as in
\eqref{10.139} and \eqref{10.140}, respectively.
From Example~\ref{example10.6} we know that the
unperturbed operator corresponding to \eqref{10.141} and \eqref{10.142} has a simple bound state at $k=i.$ We are interested in increasing the
multiplicity of that bound state from $1$ to $2.$ For this, we proceed as follows.
 To simplify our notation, we use
$\tilde Q_{\text{\rm{i}}},$
$\tilde{\mathbf G}_{\text{\rm{i}}},$
$\mathbf H_{\text{\rm{i}}},$
$\tilde C_{\text{\rm{i}}},$
$\xi_{\text{\rm{i}}}(x),$
$\Omega_{\text{\rm{i}}}(x),$
$L_{\text{\rm{i}}},$
$P_{\text{\rm{i}}},$
to denote the quantities
$\tilde Q_{N\text{\rm{i}}},$
$\tilde{\mathbf G}_{N\text{\rm{i}}},$
$\mathbf H_{N\text{\rm{i}}},$
$\tilde C_{N\text{\rm{i}}},$
$\xi_{N\text{\rm{i}}}(x),$
$\Omega_{N\text{\rm{i}}}(x),$
$L_{N\text{\rm{i}}},$
$P_{N\text{\rm{i}}},$
respectively, appearing in Section~\ref{section8}.
From \eqref{10.136} we know that the unperturbed Jost matrix $J(k)$  is given by 
\begin{equation}
\label{10.144}
J(k)= \ds\frac{1}{26\,(3k+i)}\,
\begin{bmatrix} (k+i)(-78ik+7)&17k-7i\\
\noalign{\medskip}
17(k+i)&-78ik^2-59k-17i\end{bmatrix}.
 \end{equation}
Using \eqref{10.144} we obtain
\begin{equation*}
J(i)= \ds\frac{1}{52}\,
\begin{bmatrix} 85&5\\
17&1\end{bmatrix},
 \end{equation*}
and hence we see that the kernel of $J(i)$ is spanned by a unit vector along $\begin{bmatrix} 1\\
-17\end{bmatrix}.$ Thus, the orthogonal projection matrix $Q$ onto $\text{\rm{Ker}}[J(i)]$ is given by
\begin{equation}
\label{10.146}
Q= \ds\frac{1}{1+17^2}\,
\begin{bmatrix} 1\\
-17\end{bmatrix} \begin{bmatrix} 1&-17
\end{bmatrix},
 \end{equation}
or equivalently we have
\begin{equation*}
Q= \ds\frac{1}{290}\,
\begin{bmatrix} 1&-17\\
-17&289\end{bmatrix}.
 \end{equation*}
In order to increase the multiplicity of the bound state at $k=i$ from $1$ to $2,$ we need to specify the 
matrix $\tilde C_{\text{\rm{i}}}$ appearing in \eqref{8.7}. We remark that we cannot specify
$\tilde C_{\text{\rm{i}}}$ arbitrarily because it has to be compatible with the construction
described in \eqref{8.2}--\eqref{8.7}. 
Using \eqref{8.2} we first need determine
the rank-one orthogonal projection matrix $\tilde Q_{\text{\rm{i}}}$ onto a subspace of
 the orthogonal complement in $\mathbb C^n$ of $Q\,\mathbb C^n,$ where in this case we have $n=2.$
 From \eqref{10.146} we already know that
 the orthogonal complement in $\mathbb C^2$ of $Q\,\mathbb C^2$
  is spanned by a unit vector
along  $\begin{bmatrix} 17\\
1\end{bmatrix}.$ Hence, the orthogonal projection matrix $\tilde Q_{\text{\rm{i}}}$ is given by
\begin{equation}
\label{10.148}
\tilde Q_{\text{\rm{i}}}= \ds\frac{1}{290}\,
\begin{bmatrix} 289&17\\
17&1\end{bmatrix}.
 \end{equation}
We let  
\begin{equation}
\label{10.149}
\tilde{\mathbf G}_{\text{\rm{i}}}=
\begin{bmatrix} a& b+ic
\\
\noalign{\medskip}
b-ic&d\end{bmatrix},
 \end{equation}
where $a,$ $b,$ $c,$ and $d$ are real constants.
The two equalities in \eqref{8.3} impose the restrictions on \eqref{10.149} so that we have
\begin{equation}
\label{10.150}
a=289\, d,\quad b=17d, \quad c=0.
\end{equation}
Using \eqref{10.150} in \eqref{10.149} we get
\begin{equation}
\label{10.151}
\tilde{\mathbf G}_{\text{\rm{i}}}=d\,
\begin{bmatrix} 289&17\\
17&1\end{bmatrix}.
 \end{equation}
From \eqref{10.151} we see that the eigenvalues of
$\tilde{\mathbf G}_{\text{\rm{i}}}$ are $0$ and
$290\,d.$ Since $\tilde{\mathbf G}_{\text{\rm{i}}}$ must be nonnegative and must have its rank equal to $1,$ it follows that we
must have $d>0.$
We further require that the restriction of
$\tilde{\mathbf G}_{\text{\rm{i}}}$ to the subspace
$\tilde Q_{\text{\rm{i}}}\,\mathbb C^2$ 
be invertible.
From \eqref{10.148} we see that the subspace
$\tilde Q_{\text{\rm{i}}}\,\mathbb C^2$ is spanned by 
the vector
$\begin{bmatrix}17\\
1\end{bmatrix}.$ Hence, we must have 
\begin{equation}
\label{10.152}
\tilde{\mathbf G}_{\text{\rm{i}}}
\begin{bmatrix}17\\
1\end{bmatrix}=\alpha \begin{bmatrix}17\\
1\end{bmatrix},
 \end{equation}
for some nonzero constant $\alpha.$
We observe that \eqref{10.152} holds because the column vector
$\begin{bmatrix}17\\
1\end{bmatrix}$ is an eigenvector of
$\tilde{\mathbf G}_{\text{\rm{i}}}$
with the eigenvalue $\alpha$ being equal to
$290d.$
Consequently, $\tilde{\mathbf G}_{\text{\rm{i}}}$ is given by the matrix
on the right-hand side of \eqref{10.151} with any positive constant $d.$
Using \eqref{10.148} and \eqref{10.151} in \eqref{8.4}, we get
\begin{equation}
\label{10.153}
\tilde{\mathbf H}_{\text{\rm{i}}}=
\begin{bmatrix}  \ds\frac{1}{290}+289\,d&-\ds\frac{17}{290}+17\,d
\\
\noalign{\medskip}
-\ds\frac{17}{290}+17\,d&-\ds\frac{289}{290}+d\end{bmatrix}.
 \end{equation}
Using \eqref{8.6} and \eqref{10.153}, we obtain the positive matrix $\tilde{\mathbf H}_{\text{\rm{i}}}^{1/2}$ and
its inverse $\tilde{\mathbf H}_{\text{\rm{i}}}^{-1/2}$ as
\begin{equation*}
\tilde{\mathbf H}_{\text{\rm{i}}}^{1/2}=
\begin{bmatrix}  \ds\frac{1}{290}+\ds\frac{289\,\sqrt{d}}{\sqrt{290}}&-\ds\frac{17}{290}+\ds\frac{17\,\sqrt{d}}{\sqrt{290}}
\\
\noalign{\medskip}
-\ds\frac{17}{290}+\ds\frac{17\,\sqrt{d}}{\sqrt{290}}&\ds\frac{289}{290}+\ds\frac{\sqrt{d}}{\sqrt{290}}\end{bmatrix},
 \end{equation*}
\begin{equation}
\label{10.155}
\tilde{\mathbf H}_{\text{\rm{i}}}^{-1/2}=
\begin{bmatrix}  \ds\frac{1}{290}+\ds\frac{289}{290\,\sqrt{290\, d}}&-\ds\frac{17}{290}+\ds\frac{17}{290\,\sqrt{290\,d}}
\\
\noalign{\medskip}
-\ds\frac{17}{290}+\ds\frac{17}{290\,\sqrt{290\,d}}&\ds\frac{289}{290}+\ds\frac{1}{290\,\sqrt{290\,d}}\end{bmatrix}.
 \end{equation}
Using \eqref{10.148} and \eqref{10.155} in \eqref{8.7}, we get the matrix $\tilde C_{\text{\rm{i}}}$ as
\begin{equation}
\label{10.156}
\tilde C_{\text{\rm{i}}}= \ds\frac{1}{290\,\sqrt{290\,d}}\,
\begin{bmatrix} 289&17\\
17&11\end{bmatrix}.
 \end{equation}
Thus, the only freedom we have in the specification of 
$\tilde C_{\text{\rm{i}}}$ is the choice of the positive constant $d.$ We will see that
each specific choice of $d$ yields a particular perturbed potential $\tilde V(x),$ a particular
perturbed regular solution $\tilde\varphi(k,x),$ a particular perturbed Jost solution $\tilde f(k,x),$
a particular perturbed boundary matrix $\tilde B.$ We will also see that
the constructed perturbed boundary matrix $\tilde A,$ perturbed orthogonal projection $\tilde P_{\text{\rm{i}}},$ and
and perturbed Jost matrix
$\tilde J(k)$ are not affected by any particular choice of $d.$
Using \eqref{10.142} and \eqref{10.156} in \eqref{8.16}, we get
\begin{equation*}
\tilde A
=\begin{bmatrix} 1&0\\
0&1\end{bmatrix},\quad \tilde B=\begin{bmatrix} \ds\frac{287}{306}-\ds\frac{289}{84100\,d}
& \ds\frac{1}{18}-\ds\frac{17}{84100\,d}
\\
\noalign{\medskip}
\ds\frac{1}{18}-\ds\frac{17}{84100\,d}&-\ds\frac{17}{18}-\ds\frac{1}{84100\,d}\end{bmatrix}.
 \end{equation*}
In order to construct the orthogonal projection
matrix $\tilde P_{\text{\rm{i}}},$ we first use \eqref{8.17} and \eqref{8.24} and
construct the matrix product
$L_{\text{\rm{i}}}\,Q_{\text{\rm{i}}}.$
With the help of \eqref{10.127}--\eqref{10.132}, for the unperturbed regular solution $\varphi(k,x)$ at $k=i$ we obtain
\begin{equation}
\label{10.158}
\varphi(i,x)= \ds\frac{1}{4(25+26\,e^{2x/3})}\,
\begin{bmatrix} q_{47}(x)&q_{48}(x)\\
q_{49}(x)&q_{50}(x)\end{bmatrix},
 \end{equation}
where we have defined
\begin{equation*}
q_{47}(x):=17 e^x (7 + 5 e^{2 x/3}),\quad
q_{48}(x):=2\left(12 e^{-x}- 24 e^{-x/3}+ 7 e^x + 5 e^{5 x/3}\right), \end{equation*}
\begin{equation*}
q_{49}(x):=
17 e^x (-1 + e^{2 x/3}),\quad 
q_{50}(x):=2\left(84 e^{-x}+ 120 e^{-x/3}- e^x + e^{5 x/3}\right).
 \end{equation*}
From \eqref{10.148} and \eqref{10.158}, we get
\begin{equation}
\label{10.161}
\varphi(i,x)\,Q_{\text{\rm{i}}}
= \ds\frac{1}{1160(25+26\,e^{2x/3})}\,
\begin{bmatrix} q_{47}(x)&q_{48}(x)\\
q_{49}(x)&q_{50}(x)\end{bmatrix} \,
\begin{bmatrix} 289&17\\
17&1\end{bmatrix}.
 \end{equation}
Using the first asymptotics as $x\to+\infty$ in each of \eqref{2.10} and \eqref{7.23}, from \eqref{8.17}
with $\kappa_1=1$
we obtain
the matrix product $L_{\text{\rm{i}}}\,Q_{\text{\rm{i}}}$ as
\begin{equation}
\label{10.162}
L_{\text{\rm{i}}}\,Q_{\text{\rm{i}}}= \ds\frac{1}{104}\,
\begin{bmatrix} 85&5\\
17&1\end{bmatrix}.
 \end{equation}
Hence, \eqref{10.156} and \eqref{10.162} yield the matrix product $L_{\text{\rm{i}}}\,Q_{\text{\rm{i}}}\,\tilde C_{\text{\rm{i}}}$ as
\begin{equation}
\label{10.163}
L_{\text{\rm{i}}}\,Q_{\text{\rm{i}}}\,\tilde C_{\text{\rm{i}}}= \ds\frac{1}{104\,\sqrt{290\,d}}\,
\begin{bmatrix} 85&5\\
17&1\end{bmatrix},
 \end{equation}
where we remark that the right-hand side of \eqref{10.163} contains the positive parameter $d$ even though
the right-hand side of \eqref{10.162} does not contain $d.$ Using \eqref{10.163} in \eqref{8.25}, we obtain the
orthogonal projection matrix $P_{\text{\rm{i}}}$ into the kernel of $\tilde J(i)^\dagger$ as
\begin{equation}
\label{10.164}
P_{\text{\rm{i}}}= \ds\frac{1}{26}\,
\begin{bmatrix} 25&5\\
5&1\end{bmatrix}.
 \end{equation}
From \eqref{10.164} we observe that $P_{\text{\rm{i}}}$
is independent of the positive parameter $d.$ Using \eqref{10.144} and \eqref{10.164} in \eqref{8.23} we get the
perturbed Jost matrix $\tilde J(k)$ as
\begin{equation}
\label{10.165}
\tilde J(k)= \ds\frac{1}{2\,(3k+i)}\,
\begin{bmatrix} -i(k-i)(6k+i)&-k+i\\
\noalign{\medskip}
-k+i& -i(k-i)(6k+i)\end{bmatrix}.
 \end{equation}
From \eqref{10.165} we see that $\tilde J(k)$ is independent of the positive parameter $d.$
The determinant of the perturbed Jost matrix $\tilde J(k)$ appearing in \eqref{10.165} is given by
\begin{equation}
\label{10.166}
\det[\tilde J(k)]= \ds\frac{-3k(k-i)^2}{3k+i}.
 \end{equation}
From the factor $(k-i)^2$ in the numerator
on the right-hand side of \eqref{10.166}, we see that the perturbed operator has a bound state at $k=i$ with the multiplicity $2.$
Using \eqref{10.156} and \eqref{10.161} in \eqref{8.9}, we obtain $\xi_{\text{\rm{i}}}(x)$ as
\begin{equation}
\label{10.167}
\xi_{\text{\rm{i}}}(x)= \ds\frac{1}{580\,\sqrt{290\,d}\,(25+26\,e^{2x/3})}\,
\begin{bmatrix}17\,q_{51}(x)&q_{51}(x)\\
\noalign{\medskip}
17\,q_{52}(x)&q_{52}(x)\end{bmatrix},
 \end{equation}
where we have defined
\begin{equation*}
q_{51}(x):=6\,e^{-x} - 12 \,e^{- x/3}+ 1015 \,e^x+ 725 e^{5 x/3},
 \end{equation*}
\begin{equation*}
q_{52}(x):=42\,e^{-x} +60\, \,e^{- x/3}-145 \,e^x+ 145 e^{5 x/3}.
 \end{equation*}
We remark that $\xi_{\text{\rm{i}}}(x)$ contains the positive parameter $d$ in its denominator.
Using \eqref{10.148} and \eqref{10.167} in \eqref{8.12}, we obtain
$\Omega_{\text{\rm{i}}}(x)$ as
\begin{equation}
\label{10.170}
\Omega_{\text{\rm{i}}}(x)=
\begin{bmatrix} \ds\frac{289}{290}-289\,q_{53}(x,d)& \ds\frac{17}{290}-17\,q_{53}(x,d)\\
\noalign{\medskip}
\ds\frac{17}{290}-17\,q_{53}(x,d)&\ds\frac{1}{290}-q_{53}(x,d)\end{bmatrix},
 \end{equation}
where we have defined
\begin{equation*}
q_{53}(x,d):=\ds\frac{ 29585 + 72 \,e^{-2 x} + 144 \,e^{-4 x/3} + 33274 \,e^{2 x/3} - 
 42050 \,e^{2 x} - 21025 \,e^{8 x/3}}{ 195112000\, d\, (25 + 26\,e^{2 x/3})  }.
 \end{equation*}
We note that $\Omega_{\text{\rm{i}}}(x)$ contains the positive parameter $d.$
From \eqref{10.170}, as also done in Example~\ref{example10.4}, with the help of the symbolic software Mathematica,
we obtain the Moore--Penrose inverse of $\Omega_{\text{\rm{i}}}(x)$ as
\begin{equation}
\label{10.172}
\Omega_{\text{\rm{i}}}(x)^+
= \ds\frac{d\,e^{2x}(25+26 e^{2x/3})}{q_{54}(x,d)}
\begin{bmatrix}670480 & 39440\\
\noalign{\medskip}
39440&2320\end{bmatrix},
 \end{equation}
where we have let
\begin{equation*}
\begin{split}
q_{54}(x,d):=&-72 - 144 \,e^{2 x/3} + 5 (-5917 + 3364000\, d)\, e^{2 x} \\
&+ 
 2 (-16637 + 8746400\, d) \,e^{8 x/3} + 42050 \,e^{4 x} + 21025 \,e^{14 x/3}.
\end{split}
 \end{equation*}
 We verify that the matrix 
 $\Omega_{\text{\rm{i}}}(x)^+$ appearing in \eqref{10.172} is indeed 
 the Moore--Penrose inverse of
 the matrix $\Omega_{\text{\rm{i}}}(x)$ by verifying that
 those two matrices satisfy the four equalities given in (3.2) of
 \cite{AW2025}
 used in the definition of the
 Moore--Penrose inverse \cite{BG2003} of
 a matrix.
With the help of \eqref{10.141}, \eqref{10.167}, and \eqref{10.172}, from \eqref{8.13} we recover the perturbed potential
$\tilde V(x)$ as
\begin{equation*}
\tilde V(x)= -\ds\frac{8\, e^{2x/3}}{9\,[q_{54}(x,d)]^2}
\begin{bmatrix}q_{55}(x,d) & q_{56}(x,d)\\
\noalign{\medskip}
 q_{56}(x,d)& q_{57}(x,d)\end{bmatrix},
 \end{equation*}
where we have defined
\begin{equation*}
\begin{split}
q_{55}(x,d):=&5184 - 324 \, (-209 + 672800 \,d) \, e^{4 x/3}- 
 1440 \, (-1601 + 672800\, d)  \,e^{2 x}\\
&- 
 1296 \, (-1949 + 672800 \,d)  \,e^{8 x/3} - 
 54496800 \,e^{ 10 x/3}\\
& + (541979641
- 657593374400\, d + 163410202240000\, d^2)
 \,e^{ 4 x}\\
&  - 54496800 \, e^{14 x/3}+ 
 1324575 \, (-8423 + 4709600\, d)  \,e^{16 x/3} \\
& + 
 336400 \, (-45143 + 24893600\, d) \,e^{6 x}\\
& + 
 4730625 (-1253 + 672800\, d) \,e^{20 x/3} + 442050625 e^{8 x},
\end{split}
 \end{equation*}
\begin{equation*}
\begin{split}
q_{56}(x,d):=&
5184 - 324  \,(-4943 + 4709600  \,d) \, e^{4 x/3}+ 
 1152 \, (-3689 + 672800 \, d) \,  e^{2 x}\\
& + 
 6480 \, (-1601 + 672800 \, d) \,  e^{8 x/3}\\
&+ (-455148503 + 528668747200  \,d 
 - 146209128320000  \,d^2)  \,e^{ 4 x}\\
& - 189225 (-4943 + 4709600 \, d) \, e^{16 x/3} 
+ 
 336400 (-3689 + 672800 d)  \,e^{6 x} \\
& + 
 946125 \, (-1601 + 672800 d) \,e^{20 x/3} + 442050625 \, e^{8 x},
\end{split}
 \end{equation*}
\begin{equation*}
\begin{split}
q_{57}(x,d):=&
5184 - 2268\, (-8423 + 4709600\, d)\, e^{4 x/3}\\
& - 
 288\, (-184261 + 100247200\, d)\, e^{2 x} \\
& - 
 32400 \,(-1253 + 672800\, d)\,  e^{8 x/3} - 
 54496800 \,e^{10 x/3}\\
&  + (216875449 - 419599793600\, d + 130818693760000\, d^2)\, e^{ 4 x}\\
&  - 54496800\, e^{14 x/3} + 189225\, (-209 + 672800\, d) \,e^{16 x/3}\\
&   + 
 336400\, (-1079 + 672800\, d)\, e^{6 x} \\
& + 
 189225\, (-1949 + 672800\, d)  \,e^{20 x/3} + 442050625\, e^{8 x}.
\end{split}
 \end{equation*}
We remark that the potential $\tilde V(x)$ contains the parameter $d.$
In this example, we have $V(x),$ $\tilde V(x),$ and $\tilde V(x)-V(x)$ all have the behavior
$O(e^{-2x/3})$ as $x\to+\infty.$ That asymptotic behavior agrees with the predicted asymptotic behavior described in \eqref{8.18}.
In this case, our estimate given in \eqref{8.18} is sharp.
Finally, using \eqref{10.143}, \eqref{10.164}, \eqref{10.167}, and \eqref{10.172} in \eqref{8.29}, we obtain the perturbed
Jost solution $\tilde f(k,x)$ explicitly. Since that expression is too lengthy, we do not display it here.

\end{example}

In the next example we illustrate the removal of a discrete eigenvalue in the scalar case where the unperturbed potential
decays as $O(1/x^2)$ when $x\to+\infty.$ We remark that the potential in this example is integrable but does not belong to the class
$L^1_1(\mathbb R^+).$ Hence, it is outside the class
of potentials for which
the theory of removing bound states is developed of Section~\ref{section5}.
Nevertheless, we follow the procedure of Section~\ref{section5} to remove a bound state in this example.

\begin{example}
\label{example10.8} 
\normalfont
Consider the scalar case, i.e. $n=1,$ where the unperturbed potential is given by
\begin{equation}
\label{10.178}
V (x)=\ds\frac{2}
{\left(1+x\right)^2},
 \end{equation}
 and the boundary matrices $A$ and $B,$ which are scalars in this case, are given by
\begin{equation}
\label{10.179}
A=1,\quad B=-2.
 \end{equation}
 The corresponding Jost solution $f(k,x),$  Jost matrix $J(k),$ scattering matrix $S(k),$ and regular solution $\varphi(k,x)$ are evaluated by proceeding
 as in Example~\ref{example9.2}, and we get
 \begin{equation}
\label{10.180}
f(k,x)= e^{ikx}\left[1+\ds\frac{i}{k(1+x)}\right],
 \end{equation}
  \begin{equation}
\label{10.181}
J(k)=\ds\frac{[k+i(\sqrt{5}+1)/2][k-i(\sqrt{5}+1)/2]}{ik},
 \end{equation}
 \begin{equation*}
S(k)=\ds\frac{k^2+ik+1}{k^2-ik+1},
 \end{equation*} 
  \begin{equation*}
\varphi(k,x)=\ds\frac{q_{58}(k,x)}{2k^3\left(1+x\right)},
 \end{equation*}
where we have defined
 \begin{equation*}
q_{58}(k,x):=e^{-ikx}\left(k^2-ik+1\right)\left(kx+k-i\right)+e^{ikx}\left(k^2+ik+1\right)\left(kx+k+i\right).
 \end{equation*}
 From the zeros of the Jost matrix $J(k)$ given in \eqref{10.181} we observe that
there is a simple bound state at 
$k=i\kappa_1,$ 
where the value of $\kappa_1$ is given by
\begin{equation}
\label{10.185}
\kappa_1=\ds\frac{1+\sqrt{5}}{2}.
 \end{equation}
The Gel'fand--Levitan constant $C_1$ corresponding to the bound state at $k=i\kappa_1$ is evaluated as described
in Example~\ref{example10.2}, and we have
\begin{equation}
\label{10.186}
C_1=\sqrt{2+\ds\frac{4}{\sqrt{5}}}.
 \end{equation}
 By proceeding as in
 Example~\ref{example10.2}, 
we remove the bound state and obtain the corresponding perturbed
 Jost matrix $\tilde J(k),$ perturbed scattering matrix $\tilde S(k),$ perturbed boundary matrices $\tilde A$ and $\tilde B,$ and perturbed potential $\tilde V(x)$ as
 \begin{equation}
\label{10.187}
\tilde J(k)=\ds\frac{[k+i(\sqrt{5}+1)/2][k+i(\sqrt{5}-1)/2]}{ik},
 \end{equation}
  \begin{equation*}
\tilde S(k)=\ds\frac{(k^2+ik+1)[k-i(\sqrt{5}+1)/2]^2}{(k^2-ik+1)[k+i(\sqrt{5}+1)/2]^2},
 \end{equation*}
 \begin{equation*}
\tilde A=1,\quad \tilde B=\ds\frac{4}{\sqrt{5}},
 \end{equation*}
 \begin{equation*}
\tilde V (x)=\ds\frac{2}
{\left(\sqrt{5}+x\right)^2},
 \end{equation*}
  \begin{equation*}
\tilde f(k,x)=\ds\frac{e^{ikx} (2ik-1-\sqrt{5})\, q_{59}(k,x)}
{ ik(2ik+1+\sqrt{5})(5+\sqrt{5}\,x)(2k^2+3+\sqrt{5})},
 \end{equation*}
where we have defined
  \begin{equation*}
\begin{split}
q_{59}(k,x):=&5 + 3 \sqrt{5}-ik(5 + 3 \sqrt{5})(x+1)\\
&+k^2[10+8\,\sqrt{5}+(10+2\,\sqrt{5})\,x]+ik^3(10+2\,\sqrt{5}\,x).
\end{split}
 \end{equation*}
 From \eqref{10.187} we see that $\tilde J(k)$ does not have any zeros on the positive
 imaginary axis, and hence the perturbed operator does not have any bound states.
In this example, the unperturbed potential $V(x)$ and
the perturbed potential $\tilde V(x)$ each behave
as $O(1/x^2)$ as $x\to+\infty,$ and the potential difference
$\tilde V(x)-V(x)$ behaves as
 $O(1/x^3)$ as $x\to+\infty.$
 
\end{example}

In the next example, we revisit
Example~\ref{example10.8}, but we add a bound state to the unperturbed problem instead of removing the existing
bound state in the unperturbed problem.

\begin{example}
\label{example10.9} 
\normalfont
We consider the scalar case, i.e. $n=1,$ where the unperturbed quantities
are the same as in Example~\ref{example10.5}. Therefore, the unperturbed potential
$V(x),$ boundary matrices $A$ and $B,$ Jost solution $f(k,x),$ Jost matrix $J(k),$ scattering matrix $S(k),$ and regular solution $\varphi(k,x)$
are as in \eqref{10.178}--\eqref{10.181}, respectively. The unperturbed operator has one bound state at
$k=i\kappa_1$ with the Gel'fand--Levitan normalization constant $C_1,$
where the values of $\kappa_1$ and $C_1$ are as in \eqref{10.185} and \eqref{10.186}, respectively.
We add one bound state to the unperturbed problem at $k=i\tilde\kappa_2$ with the
 Gel'fand--Levitan normalization constant $\tilde C_2,$
where we use
\begin{equation*}
\tilde\kappa_2=1,\quad \tilde C_2=2.
 \end{equation*}
By proceeding as in Example~\ref{example10.1}, we determine the perturbed quantities. We obtain the
perturbed potential $\tilde V(x),$ the boundary matrices $\tilde A$ and $\tilde B,$ 
Jost matrix $\tilde J(k),$ scattering matrix $\tilde S(k),$ and Jost solution $\tilde f(k,x)$ as
 \begin{equation}
\label{10.194}
\tilde V (x)=\ds\frac{q_{60}(x)+q_{61}(x)}
{\left[3+x+e^{2x}(-6-6x+4 x^2)+e^{4x}(1-x)\right]^2},
 \end{equation}
 where we have defined
  \begin{equation*}
q_{60}(x):=2  + 2\, e^{2 x}\left(84 + 8 x - 40 x^2 - 16 x^3\right),
 \end{equation*}
  \begin{equation*}
q_{61}(x):=4\, e^{4 x}\left(33 - 8 x + 24 x^2\right)+8\, e^{6 x}\left(-5 + 6 x - 14 x^2 + 4 x^3\right)+2 \,e^{8x},
 \end{equation*}
 \begin{equation*}
\tilde A=1,\quad \tilde B=-6,
 \end{equation*}
\begin{equation}
\label{10.198}
\tilde J(k)=\ds\frac{(k-i)(k-i(\sqrt{5}+1)/2)}{ik(k+i)(k+i(\sqrt{5}+1)/2)},
 \end{equation}
\begin{equation*}
\tilde S(k)=\ds\frac{(k+i)^2(k^2+ik+1)}{(k-i)^2(k^2-ik+1)},
 \end{equation*}
\begin{equation*}
\tilde f(k,x)=\ds\frac{e^{ikx}\left[ -(k-i)^2[i+k(3+x)]+e^{2x}\, q_{62}(k,x)+ e^{4x}\,q_{63}(k,x)\right]}{k(k+i)^2[-3-x+e^{2x}(6+6x-4x^2)+e^{4x}(-1+x)},
 \end{equation*}
 \begin{equation*}
q_{62}(k,x):=-2 i (-5 + 2 x) +k (10 + 6 x - 4 x^2) - 2 i k^2 (-3 + 4 x) + k^3 (6 + 6 x - 4 x^2),
 \end{equation*}
  \begin{equation*}
q_{63}(k,x):=(i + k)^2 [i + k (-1 + x)].
 \end{equation*}
From \eqref{10.198} we observe that $\tilde J(k)$ has two zeros 
on the positive imaginary axis at $k=i$ and $k=i(\sqrt{5}+1)/2,$
confirming that the perturbed operator has two bound states. We remark that from \eqref{10.178} and \eqref{10.194}, we obtain
$V(x)=O(1/x^2),$ $\tilde V(x)=O(1/x^2),$ and
$\tilde V(x)-V(x)=O(1/x^3)$ as $x\to+\infty.$

\end{example}

\noindent {\bf Acknowledgments.} Ricardo Weder is an Emeritus National Researcher of 
SNII-SECIHTI, M\'exico.

\end{document}